\pdfoutput=1
\documentclass[10pt,twocolumn]{article}

\usepackage[paperwidth=210mm,paperheight=297mm,hmargin=18.1mm,vmargin={20mm,22mm},columnsep=18pt]{geometry}
\usepackage[T1]{fontenc}
\usepackage{stix}
\usepackage{amsmath,amsthm}
\usepackage{graphicx}
\usepackage{array}
\usepackage{booktabs}
\usepackage[table]{xcolor}
\usepackage[font=small,labelfont=bf,labelsep=period]{caption}
\usepackage{algorithm}
\usepackage{algorithmic}
\usepackage{placeins}
\usepackage[numbers,sort&compress]{natbib}
\usepackage[colorlinks=true,linkcolor=blue!60!black,citecolor=blue!60!black,urlcolor=blue!60!black]{hyperref}

\providecommand{\ket}[1]{\lvert #1\rangle}
\providecommand{\bra}[1]{\langle #1\rvert}
\providecommand{\braket}[1]{\langle #1\rangle}

\newcommand{\tblwidth}{\linewidth}
\newcolumntype{L}{@{\extracolsep{\fill}}l}
\newcolumntype{R}{@{\extracolsep{\fill}}r}
\newcolumntype{C}{@{\extracolsep{\fill}}c}

\newtheorem{theorem}{Theorem}
\newtheorem{proposition}{Proposition}
\newtheorem{lemma}{Lemma}[section]

\hypersetup{pdftitle={Fully Quantum Latent Diffusion for Synthetic Malware Records: Effective Denoising, Limited Class Information, and Costly Post-Selection}, pdfauthor={Ban Q. Tran, Quang Duy Tran, Anh H. Nguyen, Ziwen Pan, Tommy Dang, Susan Mengel}}

\begin{document}

\twocolumn[{%
\begin{center}
{\LARGE\bfseries Fully Quantum Latent Diffusion for Synthetic Malware Records: Effective Denoising, Limited Class Information, and Costly Post-Selection\par}
\vspace{12pt}
{\large Ban Q. Tran\textsuperscript{1,2}, Quang Duy Tran\textsuperscript{1}, Anh H. Nguyen\textsuperscript{1}, Ziwen Pan\textsuperscript{1}, Tommy Dang\textsuperscript{1}, and Susan Mengel\textsuperscript{1}\par}
\vspace{8pt}
{\small \textsuperscript{1}Department of Computer Science, Texas Tech University, Lubbock, TX 79416, USA\\
\textsuperscript{2}Department of Computing Fundamentals, FPT University, Hanoi, Vietnam\par}
\end{center}
\vspace{6pt}
\begin{list}{}{\setlength{\leftmargin}{12mm}\setlength{\rightmargin}{12mm}}\item\relax
\small\textbf{Abstract.} Malware benchmarks are difficult to share, because the samples are harmful and a corpus can reveal the incidents of its holder, while synthetic substitutes can leak training records. Quantum diffusion models, however, have been evaluated only on images or quantum states, leaving the utility, privacy, and sampling cost of their samples unknown. In this research, we introduce a fully quantum latent diffusion model for malware records, in which a parameterized circuit with one mid-circuit measurement denoises amplitude-encoded latent vectors of opcode sequences without classical neural layers. On 17 malware families with ten seeds, we compared it with random states, a perfect generator, a parameter-matched classical control, and a classical latent diffusion model. The circuit lowers its denoising infidelity by 63\%, yet classifiers trained on its records reach a macro-F1 score of 0.11 on real records, against 0.08 for random states, 0.61 for a perfect generator, and 0.44 for a classical diffusion model of comparable size, and adding its records to real training data lowers the score from 0.635 to 0.593. Post-selected sampling accepts $3.0\times10^{-7}$ of runs, and a training penalty that raises the acceptance to 0.14 removes most of the denoising. No synthetic record copies a training record, and smoothing the decoded lengths lowers the area under the curve of a length-based membership test from 0.80 to 0.51, which a differentially private length histogram matches with a formal guarantee. These results indicate that quantum generators of shareable security data require a joint evaluation of utility, privacy, and sampling cost.

\smallskip
\noindent\textbf{Keywords:} Quantum diffusion model; Parameterized quantum circuit; Synthetic data; Malware; Privacy; Amplitude encoding
\end{list}
\vspace{14pt}
}]

\section{Introduction}\label{sec:intro}

Benchmark datasets of malware underpin the development and comparison of learning-based detectors, yet they are difficult to share: the samples are harmful programs, and their collection often depends on sources whose terms restrict redistribution. Synthetic records produced by generative models offer a substitute that research groups could exchange instead of the programs themselves. Diffusion models \citep{ho2020denoising} have become the leading generators of such data, from images in latent spaces \citep{rombach2022high} to tabular records \citep{kotelnikov2023tabddpm} and embeddings of malware programs \citep{bao2025generating}. However, a synthetic dataset protects its sources only if the generator does not reproduce them. Diffusion models can memorize and regenerate individual training samples \citep{carlini2023extracting}, and synthetic data in general either fail to prevent inference attacks or lose much of their utility \citep{stadler2022synthetic}. A release of synthetic security data therefore requires evidence on both utility and privacy.

Quantum machine learning (QML) offers a different class of generators, in which parameterized quantum circuits play the role of neural networks \citep{benedetti2019parameterized}. In our systematic review of quantum deep learning in the noisy intermediate-scale quantum (NISQ) era \citep{tran2026systematic}, we examined how such models have been designed and evaluated under the limited qubit counts and circuit depths of current devices \citep{preskill2018quantum}. These limits also shape the performance of quantum neural networks (QNNs): in our comparative analysis of three hybrid QNN architectures \citep{tran2026comprehensive}, models that performed well on datasets with few features degraded markedly on high-dimensional data, and their robustness to quantum noise differed considerably between architectures. A quantum generator for malware records should therefore operate on compact representations, which latent diffusion provides by moving the generative process into the low-dimensional space of a pre-trained autoencoder \citep{rombach2022high}.

Quantum diffusion models can be broadly categorized into two families. In hybrid models, variational circuits replace components of a classical denoiser, such as the layers of a U-Net \citep{kolle2024quantum}, or a classical network processes the measured state between the steps \citep{cacioppo2025hybrid}. Since the classical network of such a model can carry most of the generative task, the contribution of its circuits is difficult to isolate. In fully quantum models, the circuit itself maps noisy quantum states onto denoised ones, with ancilla measurements that provide the required non-unitary step \citep{zhang2024generative,cacioppo2023quantum,cacioppo2025hybrid}. These models rest on two expectations: since $n$ qubits hold $2^{n}$ amplitudes, a circuit should describe the data distribution with far fewer trainable parameters than a classical denoiser, and the generative process should run on a quantum device from the noisy state to the denoised one, without classical layers between the steps. However, these models share two limitations with respect to the release of security data. First, they have been evaluated on images or on quantum states with measures of sample quality, so that the utility of their samples for downstream classifiers and the privacy risks of their release remain unknown. Second, fully quantum denoisers depend on mid-circuit measurements and post-selection, whose cost has not been quantified for the circuits that training produces. Consequently, it remains unclear whether a fully quantum diffusion model can generate useful and private security records at a sampling cost that quantum hardware can afford.

In this research, we introduce a fully quantum latent diffusion model for malware records and evaluate it against these requirements, which tests the expectations above directly: the circuit performs every denoising step alone, so that its contribution can be measured in isolation, at a known parameter count and a measurable cost of sampling on quantum hardware. A pre-trained convolutional autoencoder maps every opcode sequence onto a latent vector of $32$ features, which we store in the amplitudes of five qubits, and a reverse-bottleneck circuit with one ancilla measurement performs every denoising step without any classical neural layer. Our contributions are summarized as follows:
\begin{itemize}
\item We formulate fully quantum latent diffusion for malware records, with two amplitude codecs, a family-conditioned circuit trained toward the posterior mean of each step, and a decoder that restores the length removed by the encoding, together with a threat model for the release of synthetic security data (Section~\ref{sec:method}). To the best of our knowledge, this is the first application of a quantum diffusion model to security data.
\item We establish a theorem and two propositions on the training target, on the initialization and the acceptance probability that it implies, and on the lengths of decoded records, complemented by two lemmas on the smoothness of training and the shot complexity of the read-out, and we verify every statement numerically (Section~\ref{sec:theory} and Appendices~\ref{app:proofs} and~\ref{app:numerics}).
\item We identify a privacy leak in the decoder, which reuses the exact lengths of training records and supports a membership test with an area under the receiver operating characteristic curve (AUC) of $0.80$, and we close it with a smoothed bootstrap of the lengths that requires no retraining and with a differentially private estimate of the length distribution that gives a formal guarantee (Sections~\ref{sec:encoding} and~\ref{sec:res_privacy}).
\item We evaluate the quantum denoiser on all $17$ families with ten seeds and paired tests, in terms of utility, distribution fidelity, memorization, membership, sampling cost, and hardware noise, against references that share its encoding, its decoder, and its sampling procedure, namely a parameter-matched classical control, the random states from which sampling starts, and the round-trip ceiling of a perfect generator, and we place the complete framework against a classical latent DDPM in the design of \citet{bao2025generating}. We test the explanations of the limited utility with stochastic samplers, a weighting of the records by their acceptance probability, a training objective that penalizes rejection, and diagnostics on the initial states and the gradients (Sections~\ref{sec:results} and~\ref{sec:discussion}).
\end{itemize}

Our evaluation clarifies what the current design achieves and where it requires further development. The post-selected circuit learns its objective, lowering the infidelity of its denoising steps by $63\%$ relative to the identity map, and it moves the generated states toward the data, which lowers the squared MMD of its records by $20\%$ relative to the random states from which sampling starts. Its records never copy training data, and a smoothed bootstrap of the decoded lengths closes a channel through which the decoder supports a membership test with an AUC of $0.80$. However, the class information of its records remains close to that of the random states: classifiers trained on its records and tested on real records reach a macro-F1 score of $0.114$, against $0.076$ for the random states and a ceiling of $0.605$ for a perfect generator, whereas a parameter-matched classical control reaches $0.367$ with the sign-split codec and a classical latent DDPM with $11{,}404$ parameters reaches $0.442$, and adding the records of the circuit to the real training data lowers the macro-F1 score from $0.635$ to $0.593$. After training, the probability of accepting a post-selected sample falls to $3.0\times10^{-7}$. A penalty on rejected runs, trained with three of the seeds, raises this probability from $1.6\times10^{-7}$ to $0.14$ but removes most of the denoising. These findings identify the encoding of the families and the competition between denoising and acceptance under post-selection, more than the parameter budget or the sampler, as the main directions for the development of fully quantum diffusion for security data.

The remainder of this paper is organized as follows. Section~\ref{sec:related} reviews related work on synthetic security data, diffusion models, and quantum generative models. Section~\ref{sec:method} presents the quantum latent diffusion model and its theoretical analysis, and Section~\ref{sec:setup} describes the experimental setup. Section~\ref{sec:results} reports the results, Section~\ref{sec:discussion} discusses their implications and the limitations of our study, and Section~\ref{sec:conclusion} concludes the paper. Appendix~\ref{app:proofs} contains the proofs, and Appendix~\ref{app:numerics} contains their numerical verification.

\section{Related work}\label{sec:related}

In this section, we review three lines of work on which our study builds: the generation of synthetic data for security research together with the assessment of its privacy, diffusion models, and quantum generative models, including the quantum diffusion models that Table~\ref{tab:related} compares with ours.

\subsection{Synthetic data for security research}\label{sec:rel_synthetic}

Machine learning for malware analysis depends on large labeled corpora of programs, which are difficult to share, since the programs themselves are harmful and their collection often involves restricted sources. Public benchmarks have therefore released derived representations, such as the static features of $1.1$ million Windows executables in EMBER \citep{anderson2018ember}, and opcode sequences obtained by disassembling executables have long served as a representation for the detection of unknown malware \citep{santos2013opcode}.

Generative models offer a second way to share such data, and work on them has followed two directions. The first direction builds the generators. Generative adversarial networks (GANs) \citep{goodfellow2014generative} have served attackers to craft malware features that evade black-box detectors \citep{hu2017generating}, and conditional GANs \citep{xu2019modeling} and diffusion models \citep{kotelnikov2023tabddpm,jiang2024netdiffusion} have become standard generators of synthetic tabular records and network traffic. Generators of malware data have also operated on opcode sequences, for instance through word embeddings of the opcodes and a Wasserstein GAN \citep{tran2022word}, and diffusion models have recently followed: \citet{bao2025generating} trained a denoising diffusion probabilistic model (DDPM) \citep{ho2020denoising} with a one-dimensional U-Net on per-program embeddings of the opcode sequences of $25$ malware families and reported that its synthetic embeddings improved the classification of small families more than those of GANs, \citet{camerota2024addressing} augmented images of IoT malware traffic with a DDPM, and \citet{paim2025maldatagen} included a latent diffusion model among the generators of tabular Android malware records. These studies evaluate the generated records by their utility for classifiers and do not measure what the records reveal about their training data. The second direction assesses what the generated data reveal. Empirical audits test whether an adversary can decide that a record belonged to the training set \citep{shokri2017membership}, measure the distance of synthetic records to their closest training records \citep{park2018data}, or extract training samples from the generator \citep{carlini2023extracting}, whereas differentially private training \citep{abadi2016deep} bounds the influence of every record at a cost in utility. Generators of synthetic data have adopted such guarantees, for instance GANs trained with private aggregation of teacher ensembles \citep{jordon2019pate} or with differentially private gradients for tabular records \citep{fang2022dpctgan}, and diffusion models trained with differentially private gradients \citep{dockhorn2023differentially}, whose sample quality decreases as the privacy budget shrinks. Both directions share a limitation that \citet{stadler2022synthetic} made explicit: synthetic data either failed to prevent inference attacks or lost much of their utility, so that privacy and utility cannot be assessed separately. These assessments concern the trained generator, while the deterministic processing that maps generated vectors back to records has received less attention. In contrast, we assess utility and privacy jointly (Section~\ref{sec:measures}), and we analyze the decoder of our model as a privacy channel of its own, which reveals the lengths of training records unless they are smoothed (Proposition~\ref{prop:length}).

\subsection{Diffusion models}\label{sec:rel_diffusion}

Diffusion models learn to reverse a gradual noising process \citep{sohl2015deep}, and DDPMs \citep{ho2020denoising} have made this approach practical by training a network to predict the noise of each step with a regression objective. Subsequent work has mainly reduced the cost of sampling and of the denoiser. Implicit samplers \citep{song2021denoising} replace the stochastic chain of about $1{,}000$ steps with a deterministic chain of far fewer steps, and latent diffusion \citep{rombach2022high} moves the process into the low-dimensional space of a pre-trained autoencoder. Nevertheless, classical denoisers typically remain networks with millions of parameters that are evaluated at every step, and they can memorize their training samples \citep{carlini2023extracting}. In contrast, our denoiser is a circuit of $11{,}520$ angles that operates in a latent space of $32$ features and samples with ten deterministic steps, a combination whose consequences for the diversity of the samples we analyze in Section~\ref{sec:disc_learning}.

\subsection{Quantum generative models}\label{sec:rel_quantum}

Parameterized quantum circuits \citep{benedetti2019parameterized} have served as generative models either by sampling from their measurement distribution \citep{benedetti2019generative} or by playing the generator of an adversarial game \citep{lloyd2018quantum,dallairedemers2018quantum}, which can also load probability distributions into quantum states \citep{zoufal2019quantum}. These models share a limitation of training: their gradients can vanish exponentially with the number of qubits \citep{mcclean2018barren}, so that most quantum generators have remained small. Diffusion offers a way around this limitation, since it splits generation into a sequence of short denoising steps, each of which a shallow circuit can learn \citep{zhang2024generative}.

\begin{table*}[t]
\small
\caption{Quantum diffusion models compared with this work. Quantum part: the component that the circuit implements. Circuit output: the state that the circuit passes to the next step after its measurement, or the expectation values that a classical network receives. Latent: whether the diffusion runs in the latent space of an autoencoder. FID: Fr\'echet inception distance; WaM: Wasserstein distance between Gaussian mixtures; SSIM and PSNR: structural similarity and peak signal-to-noise ratio; AUC: area under the receiver operating characteristic curve.}\label{tab:related}
\footnotesize
\begin{tabular*}{\tblwidth}{@{}>{\raggedright\arraybackslash}p{2.6cm}>{\raggedright\arraybackslash}p{2.0cm}>{\raggedright\arraybackslash}p{2.4cm}>{\raggedright\arraybackslash}p{2.4cm}>{\raggedright\arraybackslash}p{1.3cm}>{\raggedright\arraybackslash}p{3.5cm}@{}}
\toprule
Model & Quantum part & Circuit output & Data & Latent & Evaluation\\
\midrule
QuDDPM \citep{zhang2024generative} & Denoiser & State after ancilla measurement & Quantum state ensembles & No & Distances between state ensembles\\
QDM \citep{cacioppo2023quantum} & Denoiser & State after ancilla measurement & MNIST & Pixels and latent & FID, WaM, classifier AUC; sampling on an IBM processor\\
Hardware-oriented and hybrid QDM \citep{cacioppo2025hybrid} & Denoiser, partly driven by a classical network & State after ancilla measurement; measured amplitudes (hybrid) & MNIST digits 0 and 1 & Pixels and latent & FID; sampling on an IBM processor\\
Quantum U-Net \citep{kolle2024quantum} & U-Net layers & Measurement outcomes & MNIST, Fashion-MNIST, CIFAR-10 & No & FID, SSIM, PSNR\\
QLDM \citep{li2026quantum} & Autoencoder & Expectation values & Quantum states & Yes & Fidelity, entanglement entropy\\
This work & Denoiser & Post-selected state & Malware latent vectors & Yes & Classifier utility, MMD, memorization, membership, acceptance probability, hardware noise\\
\bottomrule
\end{tabular*}
\end{table*}

Quantum diffusion models differ mainly in where the circuit acts and in how it realizes the non-unitary map that denoising requires, and Table~\ref{tab:related} compares them along these axes. In fully quantum models, the circuit acts on quantum states and obtains the non-unitary step from mid-circuit measurements of ancilla qubits, after which the measured state enters the next step. QuDDPM \citep{zhang2024generative} and its mixed-state extension \citep{kwun2025mixed} have generated ensembles of quantum states in this way, \citet{parigi2024quantum} proposed to exploit the noise of quantum processors as the diffusion mechanism itself, and the quantum diffusion models of \citet{cacioppo2023quantum,cacioppo2025hybrid} have generated images from amplitude-encoded pixels and from the latent vectors of a classical autoencoder, with sampling executed on an IBM processor. In hybrid models, a classical network processes the outputs of the circuits, for instance when quantum layers form part of a U-Net \citep{defalco2024quantum,kolle2024quantum} or denoise the latent features of a classical autoencoder \citep{defalco2024latent}; concurrently with our work, \citet{li2026quantum} placed the circuits in a quantum autoencoder and kept a classical denoiser for the generation of high-dimensional quantum states.

These lines share three limitations with respect to the release of security data. First, their evaluations measure the quality of images or the distance between ensembles of quantum states; \citet{cacioppo2023quantum} also checked conditional samples with classifiers trained on real images, but no study has trained classifiers on the generated samples or assessed what the samples reveal about their training data. Second, fully quantum models have avoided branch selection on hardware, because the probability of accepting a run decreases exponentially with the number of selected measurements \citep{cacioppo2023quantum,cacioppo2025hybrid}, but the acceptance probability of a trained post-selected circuit along its sampling trajectories has not been reported. Third, in hybrid models the classical network carries the state between steps, so that the contribution of the circuits is difficult to separate from that of the network.

Our model builds on the latent, fully quantum design of \citet{cacioppo2023quantum} and differs from it in three respects. First, we train the circuit toward the posterior mean of each step instead of a single draw of the previous state, which removes the variance of the target without changing the optimal denoiser (Theorem~\ref{thm:target}), and we keep the post-selected branch during training and sampling, whose acceptance probability we report (Section~\ref{sec:res_training}). Second, family-specific first blocks condition the circuit on the malware family instead of additional label qubits, and a sign-split codec separates the families in the Hilbert space (Section~\ref{sec:encoding}). Third, we evaluate the generated records with the utility and privacy measures of synthetic security data, against a parameter-matched classical control and the bounds that the encoding sets on any generator, and under hardware noise. Quantum machine learning has also entered security applications, for instance with hybrid and fully quantum networks that classify images of Android applications \citep{mercaldo2022towards} and with a hybrid quantum-classical Wasserstein GAN that detects fraudulent credit card transactions as anomalies \citep{herr2021anomaly}; to the best of our knowledge, however, no previous study has applied a quantum diffusion model to security records.

\section{Methodology}\label{sec:method}

In this section, we introduce the quantum latent diffusion model that we employ to generate synthetic security records. Figure~\ref{fig:pipeline} illustrates the complete pipeline. A pre-trained encoder maps every record onto a latent vector, which we store in the amplitudes of a quantum state. We corrupt the encoded states with a classical forward diffusion process and train a parameterized quantum circuit to reverse this process one step at a time with an infidelity objective. To generate a record, the trained circuit denoises a random quantum state, and a classical decoder maps the final state back to the latent space. The denoiser is the only quantum component of the pipeline, and it contains no classical neural layer between its noisy input state and its denoised output state. We formulate the problem and describe the latent representation in Section~\ref{sec:latent}, state the threat model in Section~\ref{sec:threat}, and present the amplitude encoding and the forward process in Sections~\ref{sec:encoding} and \ref{sec:forward}. Sections~\ref{sec:denoiser} to \ref{sec:sampling} describe the quantum denoiser, its training, and the sampling procedure. Sections~\ref{sec:noise} to \ref{sec:impl} cover the simulation of hardware noise, the parameter-matched classical control, the evaluation measures, and the implementation. Finally, Section~\ref{sec:theory} establishes the theoretical properties of the model, whose proofs are given in Appendix~\ref{app:proofs}.

\begin{figure*}[t]
\centering
\includegraphics{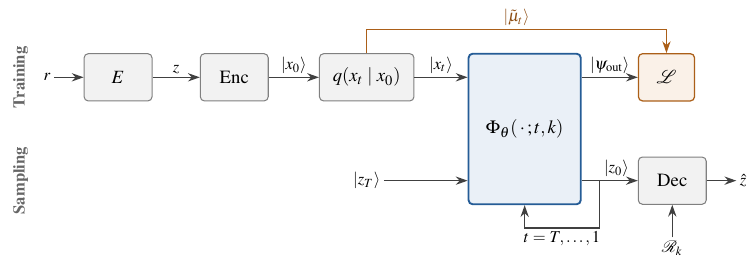}
\caption{Training and sampling pipeline of the quantum latent diffusion model. Training (top): the encoder $E$ of the convolutional autoencoder maps a record $r$ onto a latent vector $z\in\mathbb{R}^{32}$, the amplitude encoding $\mathrm{Enc}$ maps $z$ onto the state $\ket{x_0}$ of $n=5$ qubits (Eq.~\eqref{eq:amp}), and the classical forward process $q(x_t\mid x_0)$ produces the noisy state $\ket{x_t}$ for every step $t=1,\dots,T$ (Eq.~\eqref{eq:forward}). The quantum denoiser $\Phi_\theta(\,\cdot\,;t,k)$ of Fig.~\ref{fig:circuit} maps $\ket{x_t}$ onto $\ket{\psi_{\mathrm{out}}}$, and the infidelity loss $\mathcal{L}$ compares this output with the normalized posterior mean $\ket{\tilde\mu_t}$ (Eqs.~\eqref{eq:posterior} and \eqref{eq:objective}). Sampling (bottom): the trained denoiser is applied for $t=T,\dots,1$ to a Haar-random initial state $\ket{z_T}$, and the decoder $\mathrm{Dec}$ reads out the magnitudes of the final state $\ket{z_0}$ and restores a length drawn from a smoothed distribution of the training lengths $\mathcal{R}_k$ of class $k$, which gives the synthetic latent vector $\hat z$ (Eqs.~\eqref{eq:decode} and \eqref{eq:readout}). Gray boxes are classical computations, the blue box is the quantum circuit, and the orange box is the training objective.}
\label{fig:pipeline}
\end{figure*}

\subsection{Problem formulation and latent representation}\label{sec:latent}

We consider a labeled security dataset $\mathcal{D}=\{(r_j,k_j)\}_{j=1}^{N}$, in which $r_j$ is a record, such as the opcode sequence of a malware program, and $k_j\in\{1,\dots,K\}$ is its class, such as the malware family. We split $\mathcal{D}$ at the record level into a training set $\mathcal{D}_{\mathrm{tr}}$, a validation set $\mathcal{D}_{\mathrm{va}}$, and a test set $\mathcal{D}_{\mathrm{te}}$, and the generative model has access to $\mathcal{D}_{\mathrm{tr}}$ only. Our objective is to learn the class-conditional distribution of the records from $\mathcal{D}_{\mathrm{tr}}$ and to generate synthetic records that remain useful for training downstream classifiers, while staying away from the individual training records.

Instead of modeling the raw records, which are long discrete sequences, we model their latent representations. A pre-trained encoder $E$ maps each record onto a latent vector
\begin{equation}
z_j=E(r_j)\in\mathbb{R}^{d}.\label{eq:latent}
\end{equation}
In this representation, $d$ denotes the latent dimension, and $E$ represents the encoder of a convolutional autoencoder (CAE) trained to reconstruct the records. For the malware programs of this study, the CAE truncates or pads each opcode sequence to $7{,}000$ tokens, embeds every token into a $64$-dimensional vector, applies three one-dimensional convolutional layers with $128$, $256$, and $512$ channels, and projects the result linearly onto $d=32$ latent features. We train the CAE once, on $\mathcal{D}_{\mathrm{tr}}$ alone, with a token-level cross-entropy reconstruction loss and keep it fixed for every generative model in this study, so that all models operate on the same latent space and can be compared directly, and so that no learned component of the pipeline observes $\mathcal{D}_{\mathrm{va}}$ or $\mathcal{D}_{\mathrm{te}}$ during training. The generative models produce latent vectors, which the decoder of the CAE can map back to opcode sequences when required.

\subsection{Threat model and release scenarios}\label{sec:threat}

We consider a data holder, such as an antivirus vendor, a computer emergency response team, or the security operations team of an organization, that holds a labeled corpus of malware programs and intends to share a benchmark derived from it. Such corpora are often confidential, although the malware itself circulates widely. Samples collected during incident response can embed configuration data, credentials, or host names of the affected organization, samples obtained from commercial feeds or under non-disclosure agreements cannot be redistributed, and the presence of a sample in the corpus of an organization can reveal that the organization was attacked. In our experiments, the programs come from the public repository VirusShare, which allows other researchers to reproduce the study, and we treat $\mathcal{D}_{\mathrm{tr}}$ as if it were confidential.

The assets to protect are the training records and their membership in $\mathcal{D}_{\mathrm{tr}}$. We distinguish two release scenarios. In the \emph{data-only release}, the holder publishes synthetic latent vectors with their class labels, and recipients train or augment classifiers in the latent space; to map their own programs into this space, the recipients need the encoder $E$ of Eq.~\eqref{eq:latent}, which must then be a public component, for instance an encoder trained on public programs. In the \emph{release with decoder}, the holder also publishes the codec of Section~\ref{sec:encoding}, that is, the scaling parameters and the length statistics of the decoder, together with the CAE. These artifacts are themselves derived from $\mathcal{D}_{\mathrm{tr}}$: every minimum and maximum of the min--max scaling equals a feature value of a training record, and the decoder stores the training lengths of every class (Eq.~\eqref{eq:lengths}).

The adversary knows the generative procedure, including the codec and the decoding rule with its parameters, receives the released records, and holds candidate programs whose latent vectors and lengths it can compute. Its goal is either to recover training records from the release or to decide whether a candidate program belongs to $\mathcal{D}_{\mathrm{tr}}$. We evaluate the first goal with distance measures and the second with membership tests on the released lengths and, for an adversary that also obtains the trained generator, on the training loss of the generator (Section~\ref{sec:measures}). We do not consider adversaries that poison the training data or query the generator adaptively, and we do not evaluate the privacy of the CAE and of the codec parameters that the release with decoder exposes, which Section~\ref{sec:disc_privacy} discusses.

\subsection{Amplitude encoding of latent vectors}\label{sec:encoding}

Amplitude encoding stores a normalized classical vector in the amplitudes of a quantum state, so that a register of $n$ qubits holds $2^{n}$ features \citep{schuld2021machine}. Since the latent dimension $d=32$ is a power of two, $n=\log_2 d=5$ qubits hold one latent vector. The encoded vector must have unit norm, and the read-out convention of Section~\ref{sec:sampling} recovers the magnitudes of the amplitudes only. We therefore first map every latent vector onto a non-negative vector with a min--max scaling fitted on the training set:
\begin{equation}
u_i(z)=\max\!\left(0,\;\frac{z_i-l_i}{h_i-l_i}\right),\qquad i=1,\dots,d.\label{eq:minmax}
\end{equation}
In this scaling, $z_i$ represents the $i$-th latent feature, $l_i$ and $h_i$ denote the minimum and the maximum of feature $i$ over the training records, and the maximum with zero removes the negative values that validation or test records may produce. We then normalize $u(z)$ and define the encoded state as
\begin{equation}
\ket{\psi(z)}=\sum_{i=0}^{D-1}a_i(z)\ket{i},\qquad a(z)=\frac{u(z)}{\lVert u(z)\rVert_2},\qquad D=2^{n}.\label{eq:amp}
\end{equation}
In this state, $\ket{i}$ represents the $i$-th computational basis state of the $n$-qubit register, $a_i(z)\geq 0$ denotes the corresponding amplitude, and $D$ is the dimension of the Hilbert space. Preparing an arbitrary amplitude-encoded state on hardware requires a number of CNOT gates that grows exponentially with $n$ \citep{mottonen2005transformation,plesch2011quantum}, which remains affordable for the five-qubit register of this study.

The normalization in Eq.~\eqref{eq:amp} removes the length $\rho(z)=\lVert u(z)\rVert_2$ of the vector, which a quantum state cannot carry. We therefore store, for every class $k$, the set of training lengths
\begin{equation}
\mathcal{R}_k=\big\{\rho(z_j)\;:\;(z_j,k_j)\in\mathcal{D}_{\mathrm{tr}},\;k_j=k\big\}.\label{eq:lengths}
\end{equation}
The decoder uses these lengths of the scaled training vectors of class $k$ to restore the length of generated vectors. Given the magnitudes $\lvert\hat\alpha\rvert\in\mathbb{R}^{D}_{\geq 0}$ of a generated state of class $k$, the decoder computes
\begin{equation}
\begin{aligned}
\hat z&=l+(h-l)\odot\big(\rho\,\lvert\hat\alpha\rvert\big),\qquad \rho=\rho_j\,e^{\tau_k\zeta},\\
\rho_j&\sim\mathcal{U}(\mathcal{R}_k),\qquad \zeta\sim\mathcal{N}(0,1).
\end{aligned}\label{eq:decode}
\end{equation}
In this decoder, $\odot$ represents the element-wise product, $\rho_j$ denotes a training length of class $k$ drawn uniformly at random, and $\zeta$ is a standard normal variable drawn independently of $\rho_j$ and of the generated state. The logarithm $\log\rho=\log\rho_j+\tau_k\zeta$ is therefore a draw of the smoothed bootstrap \citep{silverman1987bootstrap} from the logarithms of the training lengths of class $k$, which is a draw from their Gaussian kernel density estimate with bandwidth $\tau_k$. We set this bandwidth with the rule of thumb of \citet{silverman1986density},
\begin{equation}
\tau_k=0.9\,\min\!\Big(s_k,\;\frac{\mathrm{IQR}_k}{1.34}\Big)\,\lvert\mathcal{R}_k\rvert^{-1/5}.\label{eq:bandwidth}
\end{equation}
Here, $s_k$ and $\mathrm{IQR}_k$ are the sample standard deviation and the interquartile range of $\{\log r:r\in\mathcal{R}_k\}$, and $\lvert\mathcal{R}_k\rvert$ is the number of training records of class $k$. With the min--max codec, $\tau_k$ lies between $0.0006$ and $0.014$ for the classes of this study, so that the factor $e^{\tau_k\zeta}$ changes a drawn length by at most $1.4\%$ when $\lvert\zeta\rvert\leq1$.

We draw a random length instead of a fixed one, such as the median length of the class, because identical lengths would allow a classifier trained on synthetic data to use the length as a class label, and this shortcut does not transfer to real data. We perturb the drawn training length as well, because with $\tau_k=0$ every synthetic record would carry the exact length of a training record. Proposition~\ref{prop:length} in Section~\ref{sec:theory} formalizes both properties.

The smoothed bootstrap protects the lengths empirically. For a formal guarantee, we also evaluate a decoder whose length distribution satisfies $\varepsilon$-differential privacy \citep{dwork2006calibrating}. For every class $k$, we count the training lengths in $B$ bins of equal width on the range $[0,\sqrt{D}]$, which contains every training length of the min--max codec and does not depend on the data, and we perturb the counts with the Laplace mechanism,
\begin{equation}
\begin{aligned}
\tilde c_{k,b}&=\max\!\big(0,\;c_{k,b}+\eta_{k,b}\big),\qquad b=1,\dots,B,\\
\eta_{k,b}&\sim\mathrm{Laplace}(1/\varepsilon).
\end{aligned}\label{eq:dplengths}
\end{equation}
Here, $c_{k,b}$ counts the training lengths of class $k$ in bin $b$, and $\varepsilon$ is the privacy budget. The decoder draws a bin with probability $\tilde c_{k,b}/\sum_{b'}\tilde c_{k,b'}$ and a length uniformly within the bin. Adding or removing one training record changes one count by one, so that the noisy histogram of each class is $\varepsilon$-differentially private; since every record belongs to one class, the histograms of all classes together are $\varepsilon$-differentially private, and the lengths drawn from them inherit this guarantee by post-processing \citep{dwork2006calibrating}.

With the min--max codec, the encoded states of different classes are nearly parallel, since all their amplitudes are non-negative and most features lie in the middle of their range. We therefore also evaluate a sign-split codec, which standardizes every feature and separates its positive and negative parts:
\begin{equation}
s=\frac{z-\mu}{\sigma},\qquad u(z)=\big[\max(0,s)\,;\;\max(0,-s)\big]\in\mathbb{R}^{2d}_{\geq 0}.\label{eq:signsplit}
\end{equation}
The feature-wise mean $\mu$ and standard deviation $\sigma$ are computed over the training records, the division is element-wise, and $[\,\cdot\,;\,\cdot\,]$ concatenates two vectors. The sign-split codec doubles the number of amplitudes to $2d=64$, which requires $n=6$ qubits, and spreads the states of different classes further apart in the Hilbert space. Its decoder recombines the two halves as $\hat z=\mu+\sigma\odot\rho\,\big(\lvert\hat\alpha\rvert_{1:d}-\lvert\hat\alpha\rvert_{d+1:2d}\big)$, where $\rho$ is drawn as in Eq.~\eqref{eq:decode}, with the training lengths of Eq.~\eqref{eq:lengths} and the bandwidth of Eq.~\eqref{eq:bandwidth} computed from the sign-split vectors.

Every codec limits the quality that a generator operating on encoded states can reach. We define the round trip $\mathrm{rt}(z)=\mathrm{Dec}\big(\mathrm{Enc}(z)\big)$, which encodes a real record with Eq.~\eqref{eq:amp} and decodes it with Eq.~\eqref{eq:decode}, and we evaluate the classification measures of Section~\ref{sec:measures} (Aug, TSTR, and TRTS) on round trips of real records, averaged over eight draws of the decoded lengths. In other words, the round trip yields the score of a perfect generator under each codec, and we report this ceiling next to every result.

\subsection{Forward diffusion process}\label{sec:forward}

Diffusion models learn to reverse a Markov chain that gradually corrupts the data with Gaussian noise \citep{sohl2015deep,ho2020denoising}. Following \citet{cacioppo2023quantum}, we run the forward chain on classical vectors and employ complex Gaussian noise, since the states produced by the quantum denoiser have complex amplitudes. Let $x_0=a(z)\in\mathbb{R}^{D}$ be the encoded amplitude vector of a training record, and let $\{\beta_t\}_{t=1}^{T}$ be a variance schedule. The noisy vector at step $t$ admits the closed form
\begin{equation}
\begin{aligned}
x_t&=\sqrt{\bar\alpha_t}\,x_0+\sqrt{1-\bar\alpha_t}\,\epsilon,\\
\epsilon&=\epsilon_r+i\,\epsilon_i,\qquad \epsilon_r,\epsilon_i\sim\mathcal{N}(0,I_D).
\end{aligned}\label{eq:forward}
\end{equation}
In this process, $\alpha_t=1-\beta_t$, $\bar\alpha_t=\prod_{s=1}^{t}\alpha_s$ represents the fraction of the signal that remains after $t$ steps, with $\bar\alpha_0=1$, and $\epsilon\sim\mathcal{CN}(0,2I_D)$ denotes complex Gaussian noise with independent real and imaginary parts. The denoiser receives the normalized state $\ket{x_t}=x_t/\lVert x_t\rVert_2$. We employ a linear schedule,
\begin{equation}
\beta_t=\beta_{\min}+\frac{t-1}{T}\,\big(\beta_{\max}-\beta_{\min}\big),\qquad t=1,\dots,T.\label{eq:schedule}
\end{equation}
Here, $\beta_{\min}$ and $\beta_{\max}$ bound the variance range, and $T$ is the number of diffusion steps.

The appropriate variances depend on the dimension $D$. Since $x_0$ has unit norm, whereas $\mathbb{E}\lVert\epsilon\rVert_2^2=2D$, the ratio between the norms of the noise and of the signal in Eq.~\eqref{eq:forward} is approximately
\begin{equation}
\kappa_t=\sqrt{\frac{2D\,(1-\bar\alpha_t)}{\bar\alpha_t}}.\label{eq:nsr}
\end{equation}
Because this noise-to-signal ratio grows with $\sqrt{D}$, the same schedule corrupts the $D=32$ amplitudes of our latent vectors about three times less than the $D=256$ amplitudes of the $16\times16$ images considered by \citet{cacioppo2023quantum}, and we therefore employ larger variances. With $\beta_{\min}=0.02$, $\beta_{\max}=0.10$, and $T=10$, the variances range from $\beta_1=0.020$ to $\beta_{10}=0.092$, the signal fraction decreases to $\bar\alpha_{10}=0.56$, and the ratio grows from $\kappa_1=1.1$ to $\kappa_{10}=7.1$.

For the training target, we employ the mean of the posterior of the forward process \citep{ho2020denoising},
\begin{equation}
\begin{aligned}
\tilde\mu_t(x_t,x_0)&=\frac{\sqrt{\bar\alpha_{t-1}}\,\beta_t}{1-\bar\alpha_t}\,x_0+\frac{\sqrt{\alpha_t}\,(1-\bar\alpha_{t-1})}{1-\bar\alpha_t}\,x_t,\\
\ket{\tilde\mu_t}&=\frac{\tilde\mu_t}{\lVert\tilde\mu_t\rVert_2}.
\end{aligned}\label{eq:posterior}
\end{equation}
In this target, $\tilde\mu_t$ represents the expected value of $x_{t-1}$ given $x_t$ and $x_0$, computed from the unnormalized $x_t$ of Eq.~\eqref{eq:forward}. At $t=1$, the target reduces to the clean state $\ket{x_0}$, because $\bar\alpha_0=1$ and $1-\bar\alpha_1=\beta_1$. The training algorithm described by \citet{cacioppo2025hybrid} instead targets a single draw of $x_{t-1}$ from the forward chain; the posterior mean removes the variance of this draw from the target, in line with the public implementation of their model. Theorem~\ref{thm:target} in Section~\ref{sec:theory} establishes that both targets lead to the same optimal output.

\subsection{Quantum denoiser}\label{sec:denoiser}

A parameterized quantum circuit (PQC) applies a sequence of gates, some of which depend on trainable parameters, to a quantum state, and it is commonly regarded as the quantum counterpart of a neural network \citep{benedetti2019parameterized}. Following the reverse-bottleneck architecture of \citet{cacioppo2023quantum}, we implement the denoiser of step $t$ for class $k$ as a PQC $\Phi_{\theta}(\,\cdot\,;t,k)$ that acts on $n$ data qubits and one ancillary qubit (Fig.~\ref{fig:circuit}). The denoiser maps the state $\ket{x_t}$ onto an estimate of $\ket{\tilde\mu_t}$ without measuring the data qubits, so that the complete reverse chain remains a sequence of quantum operations.

\begin{figure*}[t]
\centering
\includegraphics{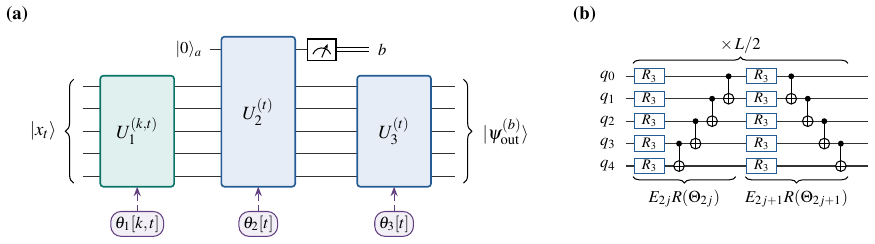}
\caption{Quantum denoiser of one reverse step, after the reverse-bottleneck architecture of \citet{cacioppo2023quantum}. (a) For step $t$ and class $k$, the block $U_1^{(k,t)}$ acts on the $n=5$ data qubits that hold $\ket{x_t}$, the block $U_2^{(t)}$ acts on the data qubits and an ancilla prepared in $\ket{0}_a$, the ancilla is measured with outcome $b\in\{0,1\}$, and the block $U_3^{(t)}$ produces the output $\ket{\psi^{(b)}_{\mathrm{out}}}$ (Eqs.~\eqref{eq:joint} and \eqref{eq:branches}). With post-selection, the denoiser keeps $b=0$ (Eq.~\eqref{eq:postselect}); without branch selection, it outputs the mixture of Eq.~\eqref{eq:trace}. The angles $\theta_1[k,t]$ of the teal block are specific to the class, whereas the angles $\theta_2[t]$ and $\theta_3[t]$ of the blue blocks are shared by all classes; with $L_1=L_2=L_3=4$ layers, the three blocks hold $60$, $72$, and $60$ angles. (b) Strongly entangling circuit (SEC) of every block (Eq.~\eqref{eq:sec}), drawn for $m=5$ qubits $q_0,\dots,q_4$ and two consecutive layers. Each layer applies the rotations $R_3=R_Z(\omega)R_Y(\vartheta)R_Z(\varphi)$ to every qubit, followed by the CNOT chain $E_{2j}=C$ on even layers and its inverse $E_{2j+1}=C^{\dagger}$ on odd layers. The pair of layers is repeated $L/2$ times, so the block equals the identity when all angles vanish. For $U_2^{(t)}$, $m=6$ and the ancilla is $q_0$.}
\label{fig:circuit}
\end{figure*}

\paragraph{Strongly entangling block}
Each unitary block of the denoiser is a strongly entangling circuit (SEC) \citep{schuld2020circuit} on $m$ qubits with $L$ layers. A layer first applies a general single-qubit rotation to every qubit,
\begin{equation}
\begin{aligned}
R_3(\omega,\vartheta,\varphi)&=R_Z(\omega)\,R_Y(\vartheta)\,R_Z(\varphi)\\
&=\begin{pmatrix}
e^{-i(\varphi+\omega)/2}\cos\frac{\vartheta}{2} & -e^{i(\varphi-\omega)/2}\sin\frac{\vartheta}{2}\\[2pt]
e^{-i(\varphi-\omega)/2}\sin\frac{\vartheta}{2} & e^{i(\varphi+\omega)/2}\cos\frac{\vartheta}{2}
\end{pmatrix}.
\end{aligned}\label{eq:r3}
\end{equation}
The three angles $\omega$, $\vartheta$, and $\varphi$ of one qubit in one layer are trainable, and $R_Y$ and $R_Z$ are the rotations around the $y$ and $z$ axes of the Bloch sphere. After the rotations, a chain of controlled-NOT (CNOT) gates entangles neighboring qubits:
\begin{equation}
C=\mathrm{CNOT}_{0,1}\,\mathrm{CNOT}_{1,2}\cdots\mathrm{CNOT}_{m-2,m-1}.\label{eq:chain}
\end{equation}
Each gate $\mathrm{CNOT}_{q,q+1}$ has the control qubit $q$ and the target qubit $q+1$, and the rightmost gate acts first. With $\Theta_l$ denoting the angles of layer $l$, the block is
\begin{equation}
\begin{aligned}
U(\Theta)&=E_{L-1}R(\Theta_{L-1})\cdots E_{1}R(\Theta_{1})\,E_{0}R(\Theta_{0}),\\
R(\Theta_l)&=\bigotimes_{q=0}^{m-1}R_3(\omega_{l,q},\vartheta_{l,q},\varphi_{l,q}),\\
E_l&=\begin{cases}C, & l\ \text{even},\\ C^{\dagger}, & l\ \text{odd}.\end{cases}
\end{aligned}\label{eq:sec}
\end{equation}
The block thus alternates the rotations of every layer $l$ with its entangling chain $E_l$. Like the public implementation of \citet{cacioppo2025hybrid}, we alternate the chain $C$ with its inverse $C^{\dagger}$ and employ an even number of layers. Since $R_3(0,0,0)=I$ and $C^{\dagger}C=I$, the block reduces to the identity when all angles vanish. We initialize the angles from $\mathcal{N}(0,0.1^2)$, a narrower distribution than the standard deviation of $0.7$ in the public implementation, so that every block, and therefore the whole denoiser, starts close to the identity map (Proposition~\ref{prop:init}). This initialization suits the task, since each reverse step maps a state onto a nearby state.

\paragraph{Reverse-bottleneck denoiser}
The denoiser composes three SEC blocks. The first block $U_1^{(k,t)}$ acts on the data register, the second block $U_2^{(t)}$ acts on the data register together with an ancillary qubit prepared in $\ket{0}_a$, and the third block $U_3^{(t)}$ acts on the data register after the ancilla has been measured. Before the measurement, the joint state is
\begin{equation}
\begin{aligned}
\ket{\Psi_t}&=U_2^{(t)}\Big(\ket{0}_a\otimes U_1^{(k,t)}\ket{x_t}\Big)\\
&=\sqrt{p_0}\,\ket{0}_a\otimes\ket{\phi_0}+\sqrt{p_1}\,\ket{1}_a\otimes\ket{\phi_1}.
\end{aligned}\label{eq:joint}
\end{equation}
In this state, $p_b=\big\lVert(\bra{b}_a\otimes I)\ket{\Psi_t}\big\rVert_2^{2}$ represents the probability of measuring the ancilla in $\ket{b}$, $b\in\{0,1\}$, and $\ket{\phi_b}=(\bra{b}_a\otimes I)\ket{\Psi_t}/\sqrt{p_b}$ denotes the post-measurement state of the data register. The third block then produces one output state per outcome,
\begin{equation}
\ket{\psi_{\mathrm{out}}^{(b)}}=U_3^{(t)}\ket{\phi_b},\qquad b\in\{0,1\}.\label{eq:branches}
\end{equation}

We consider two ways of handling the ancilla. With post-selection, also called branch selection, the denoiser keeps the outcome $b=0$ only, as in the public implementation of \citet{cacioppo2025hybrid}:
\begin{equation}
\begin{aligned}
\Phi_\theta(\ket{x_t};t,k)&=\ket{\psi_{\mathrm{out}}^{(0)}}\\
&=\frac{U_3^{(t)}\big(\bra{0}_a\otimes I\big)\,U_2^{(t)}\big(\ket{0}_a\otimes U_1^{(k,t)}\ket{x_t}\big)}{\sqrt{p_0}}.
\end{aligned}\label{eq:postselect}
\end{equation}
In this map, the division by $\sqrt{p_0}$, which depends on the input state, makes the denoiser a nonlinear function of the amplitudes of $\ket{x_t}$. The mid-circuit measurement is therefore the only source of nonlinearity of a circuit that is otherwise composed of unitary blocks \citep{cacioppo2023quantum}. Without branch selection, the denoiser outputs the mixture
\begin{equation}
\varrho_{\mathrm{out}}=\sum_{b\in\{0,1\}}p_b\,\ket{\psi_{\mathrm{out}}^{(b)}}\bra{\psi_{\mathrm{out}}^{(b)}}.\label{eq:trace}
\end{equation}
This mixture is the density matrix of the data register after the ancilla is traced out. The variant without branch selection is therefore a quantum channel, which acts linearly on the input density matrix and avoids the cost of post-selection on hardware discussed in Section~\ref{sec:sampling}. By default, the public implementation additionally applies a complex leaky rectified linear unit to the amplitudes between the measurement and $U_3^{(t)}$; since this operation has no quantum counterpart, we exclude it from our model.

\paragraph{Conditioning on the step and the class}
The denoiser contains no classical layer, so we condition it through the choice of its angles. Following the public implementation of \citet{cacioppo2025hybrid}, every step $t$ has its own angles for every block, although \citet{cacioppo2023quantum} report that a single set of angles shared by all steps reduces the performance only slightly. For the class, we give every class $k$ its own first block, while all classes share the second and the third blocks:
\begin{equation}
\theta=\Big\{\theta_1[k,t],\;\theta_2[t],\;\theta_3[t]\;:\;k=1,\dots,K,\;t=1,\dots,T\Big\}.\label{eq:params}
\end{equation}
In this parameterization, $\theta_1[k,t]$, $\theta_2[t]$, and $\theta_3[t]$ denote the angles of $U_1^{(k,t)}$, $U_2^{(t)}$, and $U_3^{(t)}$, respectively. \citet{cacioppo2023quantum} condition their model on the class with additional label qubits, using the input $\ket{k}\otimes\ket{x_t}$; our class-specific first block instead keeps the register at $n+1$ qubits for any number of classes. The number of trainable angles is
\begin{equation}
P=3T\big(K\,n\,L_1+(n+1)\,L_2+n\,L_3\big).\label{eq:count}
\end{equation}
In this count, $L_1$, $L_2$, and $L_3$ represent the numbers of layers of the three blocks, and the factor $3$ stems from the three angles of each rotation. With $n=5$, $K=17$, $T=10$, and $L_1=L_2=L_3=4$, the denoiser has $P=11{,}520$ trainable angles, of which the class-specific first blocks hold $10{,}200$ (Table~\ref{tab:params}).

\begin{table}[t]
\small
\caption{Trainable angles of the quantum denoiser with $n=5$ data qubits, $K=17$ classes, $T=10$ steps, and $L_1=L_2=L_3=4$ layers (min--max codec). With the sign-split codec ($n=6$), the total becomes $13{,}800$.}\label{tab:params}
\begin{tabular*}{\tblwidth}{@{}LLLLR@{}}
\toprule
Block & Qubits & Angles per set & Parameter sets & Angles\\
\midrule
$U_1^{(k,t)}$ & $n=5$ & $3nL_1=60$ & $KT=170$ & $10{,}200$\\
$U_2^{(t)}$ & $n+1=6$ & $3(n+1)L_2=72$ & $T=10$ & $720$\\
$U_3^{(t)}$ & $n=5$ & $3nL_3=60$ & $T=10$ & $600$\\
\midrule
Total & & & & $11{,}520$\\
\bottomrule
\end{tabular*}
\end{table}

\subsection{Training objective}\label{sec:training}

We train the denoiser to map every noisy state onto the corresponding target state. Since both are quantum states, we employ the infidelity loss of \citet{cacioppo2023quantum}, which can be estimated on quantum hardware with a swap test \citep{barenco1997stabilization,buhrman2001quantum}. For the post-selected denoiser, the loss of a single sample is
\begin{equation}
\ell_{\mathrm{ps}}(\theta;x_0,k,t,\epsilon)=1-\big\lvert\braket{\tilde\mu_t|\psi_{\mathrm{out}}^{(0)}}\big\rvert^{2},\label{eq:loss_ps}
\end{equation}
and for the denoiser without branch selection,
\begin{equation}
\begin{aligned}
\ell_{\mathrm{tr}}(\theta;x_0,k,t,\epsilon)&=1-\bra{\tilde\mu_t}\varrho_{\mathrm{out}}\ket{\tilde\mu_t}\\
&=1-\sum_{b\in\{0,1\}}p_b\,\big\lvert\braket{\tilde\mu_t|\psi_{\mathrm{out}}^{(b)}}\big\rvert^{2}.
\end{aligned}\label{eq:loss_tr}
\end{equation}
In these losses, $\lvert\braket{\tilde\mu_t|\psi}\rvert^{2}$ represents the fidelity between the target state and an output state, and the second form of Eq.~\eqref{eq:loss_tr} stems from the linearity of the fidelity with a pure state in the density matrix. The training objective averages the loss over the training records, the diffusion steps, and the noise:
\begin{equation}
\begin{aligned}
\mathcal{L}(\theta)={}&\mathbb{E}_{(z,k)\sim\mathcal{D}_{\mathrm{tr}}}\Bigg[\frac{1}{T}\sum_{t=1}^{T}\\
&\mathbb{E}_{\epsilon\sim\mathcal{CN}(0,2I_D)}\Big[\ell\big(\theta;a(z),k,t,\epsilon\big)\Big]\Bigg].
\end{aligned}\label{eq:objective}
\end{equation}
In this objective, $\ell$ represents either $\ell_{\mathrm{ps}}$ or $\ell_{\mathrm{tr}}$, and the sum over all steps reflects our training procedure: like the public implementation, we diffuse every record of a mini-batch to every step $t=1,\dots,T$, instead of drawing a single random step per record as in classical DDPMs \citep{ho2020denoising}.

We minimize Eq.~\eqref{eq:objective} with the Adam optimizer \citep{kingma2015adam} in exact state-vector simulation, since the estimation of gradients on current noisy hardware remains impractical \citep{cacioppo2025hybrid}. Lemma~\ref{lem:smooth} in Appendix~\ref{app:proofs} bounds the smoothness of the objective and the convergence rate of this procedure. The parameter-shift rule \citep{mitarai2018quantum,schuld2019evaluating} would allow the same gradients to be estimated on a quantum device. We select the model with an early-stopping rule on the validation set. Every five epochs, we evaluate the loss on all validation records at all steps with a fixed noise realization, and we stop the training when the validation loss has not improved by at least $0.1\%$ during eight consecutive evaluations. As a reference, we also compute the loss of the identity map, which returns its input unchanged:
\begin{equation}
\mathcal{L}_{\mathrm{id}}=\mathbb{E}\Big[1-\big\lvert\braket{\tilde\mu_t|x_t}\big\rvert^{2}\Big].\label{eq:identity}
\end{equation}
The identity map approximates our denoiser at initialization. A trained denoiser therefore contributes to the generation only when its validation loss falls below $\mathcal{L}_{\mathrm{id}}$. Algorithm~\ref{alg:train} summarizes the training procedure.

Post-selected sampling on hardware pays for every rejected run (Section~\ref{sec:sampling}). To examine whether training can exchange denoising accuracy for a higher acceptance probability, we also train the post-selected denoiser with the penalized objective
\begin{equation}
\begin{aligned}
\mathcal{L}_{\lambda}(\theta)={}&\mathcal{L}(\theta)+\lambda\,\mathbb{E}_{(z,k)\sim\mathcal{D}_{\mathrm{tr}}}\Bigg[\frac{1}{T}\sum_{t=1}^{T}\\
&\mathbb{E}_{\epsilon}\Big[-\log p_0\big(\theta;a(z),k,t,\epsilon\big)\Big]\Bigg].
\end{aligned}\label{eq:pacc_loss}
\end{equation}
The weight $\lambda\geq0$ scales the penalty on the probability $p_0(\theta;a(z),k,t,\epsilon)$ of the ancilla outcome $0$ for the noisy input of each pair; along a trajectory whose inputs follow the forward process, the sum of $-\log p_0$ over the $T$ steps equals $-\log P_{\mathrm{acc}}$ of Eq.~\eqref{eq:acceptance}. With $\lambda>0$, we also select the model with $\mathcal{L}_{\lambda}$ on the validation pairs.

\begin{algorithm}[t]
\caption{Training of the quantum denoiser}\label{alg:train}
\begin{algorithmic}[1]
\REQUIRE encoded training set $\{(x_0^{(j)},k_j)\}$, schedule $\{\beta_t\}_{t=1}^{T}$, initial angles $\theta\sim\mathcal{N}(0,0.1^2)$
\REPEAT
\STATE Draw a mini-batch $\mathcal{B}$ of training records
\FORALL{$(x_0,k)\in\mathcal{B}$ and $t=1,\dots,T$}
\STATE $\epsilon\gets\epsilon_r+i\,\epsilon_i$ with $\epsilon_r,\epsilon_i\sim\mathcal{N}(0,I_D)$
\STATE $x_t\gets\sqrt{\bar\alpha_t}\,x_0+\sqrt{1-\bar\alpha_t}\,\epsilon$ and $\ket{x_t}\gets x_t/\lVert x_t\rVert_2$
\STATE $\ket{\tilde\mu_t}\gets$ normalized posterior mean of Eq.~\eqref{eq:posterior}
\STATE Evaluate $\Phi_\theta(\ket{x_t};t,k)$ and the loss of Eq.~\eqref{eq:loss_ps} or Eq.~\eqref{eq:loss_tr}
\ENDFOR
\STATE Update $\theta$ with Adam on the mean loss over $\mathcal{B}$ and all steps
\STATE Every five epochs, evaluate the validation loss and keep the best angles
\UNTIL{the validation loss stops improving}
\ENSURE trained angles $\theta$
\end{algorithmic}
\end{algorithm}

\subsection{Sampling and decoding}\label{sec:sampling}

To generate a record of class $k$, we start from a random state and apply the trained denoisers in reverse order (Algorithm~\ref{alg:sample}):
\begin{equation}
\begin{aligned}
\ket{z_T}&=\frac{\xi}{\lVert\xi\rVert_2},\qquad \xi\sim\mathcal{CN}(0,2I_D),\\
\ket{z_{t-1}}&=\Phi_\theta(\ket{z_t};t,k),\qquad t=T,\dots,1.
\end{aligned}\label{eq:reverse}
\end{equation}
In this chain, $\ket{z_T}$ represents the initial state, which is distributed uniformly over the pure states of the register (Haar-random), and $\ket{z_0}$ denotes the generated state. Unlike ancestral sampling in classical DDPMs \citep{ho2020denoising}, the chain adds no noise between the steps. With post-selection, the generated state is therefore a deterministic function of $\ket{z_T}$. Without branch selection, the outcome $b_t$ of the ancilla at step $t$ is drawn with probability $p_{b_t}$, and the chain continues with $\ket{\psi^{(b_t)}_{\mathrm{out}}}$, so that the measurements become a second source of randomness.

Since every step maps its input onto an estimate of a posterior mean, the deterministic chain may contract the samples toward the mean of their class. To test this effect, we also evaluate two samplers of classical diffusion models, adapted to the normalized states of our model. The denoiser returns a direction only, so we restore the expected norms of the unnormalized chain, $n_t=\big(\bar\alpha_t+2D(1-\bar\alpha_t)\big)^{1/2}$ for $x_t$ and $r_t=\big((c_{1,t}+c_{2,t}\sqrt{\bar\alpha_t})^{2}+2Dc_{2,t}^{2}(1-\bar\alpha_t)\big)^{1/2}$ for $\tilde\mu_t$, which follow from Eqs.~\eqref{eq:forward} and \eqref{eq:posterior}, with $c_{1,t}$ and $c_{2,t}$ the coefficients of $x_0$ and $x_t$ in Eq.~\eqref{eq:posterior}. With $\ket{\phi_t}=\Phi_\theta(\ket{z_t};t,k)$, the two samplers set, for $t\geq2$,
\begin{equation}
\begin{aligned}
&\text{Ancestral:}\quad \ket{z_{t-1}}\propto r_t\ket{\phi_t}+\sqrt{\tilde\beta_t}\,\xi_t,\\
&\text{DDIM:}\quad \ket{z_{t-1}}\propto\sqrt{\bar\alpha_{t-1}}\,\hat x_0+\sqrt{1-\bar\alpha_{t-1}}\,\frac{n_t\ket{z_t}-\sqrt{\bar\alpha_t}\,\hat x_0}{\sqrt{1-\bar\alpha_t}}.
\end{aligned}\label{eq:samplers}
\end{equation}
Here, $\tilde\beta_t=(1-\bar\alpha_{t-1})\beta_t/(1-\bar\alpha_t)$ is the variance of the posterior of the forward process, $\xi_t\sim\mathcal{CN}(0,2I_D)$ is fresh noise, $\hat x_0=(r_t\ket{\phi'_t}-c_{2,t}n_t\ket{z_t})/c_{1,t}$ is the estimate of the clean vector implied by the output, and $\ket{\phi'_t}$ is the output with its global phase aligned to $\ket{z_t}$. The first update follows ancestral sampling \citep{ho2020denoising}, and the second follows the deterministic sampler of \citet{song2021denoising}; both return the output of the last step unchanged. Adding Gaussian noise to the amplitudes and renormalizing is not a quantum operation, so that the ancestral sampler would require, on hardware, a quantum channel with a comparable effect or a measurement and a new preparation of the state after every step; we therefore employ both samplers to test the role of the sampler in simulation.

On hardware, post-selection discards every run in which at least one ancilla returns $1$. The probability that a run is accepted and the expected number of runs per accepted sample are
\begin{equation}
P_{\mathrm{acc}}=\prod_{t=1}^{T}p_0^{(t)},\qquad \mathbb{E}\big[N_{\mathrm{runs}}\big]=\frac{1}{P_{\mathrm{acc}}}.\label{eq:acceptance}
\end{equation}
In these expressions, $p_0^{(t)}$ represents the probability of the ancilla outcome $0$ at step $t$ along the trajectory of the sample. Since $P_{\mathrm{acc}}$ decreases exponentially with $T$ when $p_0^{(t)}$ stays below one, we report it for every trained model as a measure of the cost of post-selected sampling on a quantum device; when every attempt starts from a fresh initial state, a record costs $1/\mathbb{E}[P_{\mathrm{acc}}]$ runs on average, where the expectation runs over the initial states.

Post-selection also changes the distribution of the released records. On hardware, only accepted runs release a record, so that the initial states of the released records follow the Haar distribution weighted by $P_{\mathrm{acc}}(\ket{z_T})$, whereas the simulation returns every initial state once. We quantify this effect with the weights $w_i=P_{\mathrm{acc}}(\ket{z_T^{(i)}})$ of the $n_s$ samples of a class, their effective sample size $\mathrm{ESS}=(\sum_iw_i)^{2}/\sum_iw_i^{2}$, and the scores of $n_s$ records drawn with replacement with probabilities $w_i/\sum_jw_j$, which approximate the release of a quantum device.

Finally, we read out the generated state in the computational basis. On hardware, measuring $S$ copies of $\ket{z_0}$ gives counts $c_i$ for the basis states, from which we estimate the magnitudes of the amplitudes as
\begin{equation}
\lvert\hat\alpha_i\rvert=\sqrt{c_i/S},\qquad i=0,\dots,D-1.\label{eq:readout}
\end{equation}
This estimate converges to the exact magnitude $\lvert\braket{i|z_0}\rvert$ as the number of shots $S$ grows. Like \citet{cacioppo2023quantum}, we discard the phases of the amplitudes, since the encoded data are non-negative. The decoder of Eq.~\eqref{eq:decode} then maps $\lvert\hat\alpha\rvert$ onto a latent vector $\hat z\in\mathbb{R}^{d}$. Lemma~\ref{lem:shots} in Appendix~\ref{app:proofs} bounds the number of shots that this read-out requires.

\begin{algorithm}[t]
\caption{Sampling with the quantum denoiser}\label{alg:sample}
\begin{algorithmic}[1]
\REQUIRE trained angles $\theta$, class $k$, training lengths $\mathcal{R}_k$, bandwidth $\tau_k$
\STATE $\xi\gets\xi_r+i\,\xi_i$ with $\xi_r,\xi_i\sim\mathcal{N}(0,I_D)$, and $\ket{z_T}\gets\xi/\lVert\xi\rVert_2$
\FOR{$t=T$ \TO $1$}
\STATE Apply $U_1^{(k,t)}$, append the ancilla $\ket{0}_a$, and apply $U_2^{(t)}$
\STATE Measure the ancilla: keep $b_t=0$ (post-selection), or draw $b_t$ with probability $p_{b_t}$
\STATE $\ket{z_{t-1}}\gets U_3^{(t)}\ket{\phi_{b_t}}$
\ENDFOR
\STATE Read out $\lvert\hat\alpha\rvert$ from $\ket{z_0}$ with Eq.~\eqref{eq:readout}
\STATE Draw $\rho_j\sim\mathcal{U}(\mathcal{R}_k)$ and $\zeta\sim\mathcal{N}(0,1)$, and set $\hat z\gets l+(h-l)\odot\big(\rho_j e^{\tau_k\zeta}\lvert\hat\alpha\rvert\big)$
\ENSURE synthetic latent vector $\hat z$
\end{algorithmic}
\end{algorithm}

\subsection{Simulation of hardware noise}\label{sec:noise}

Algorithms~\ref{alg:train} and \ref{alg:sample} operate on exact state vectors. To estimate how the trained denoiser behaves on NISQ devices \citep{preskill2018quantum}, we additionally sample with density matrices. After every layer of every block, we apply a single-qubit depolarizing channel to each qubit \citep{nielsen2010quantum},
\begin{equation}
\mathcal{E}_p(\varrho)=(1-p)\,\varrho+\frac{p}{3}\big(X\varrho X+Y\varrho Y+Z\varrho Z\big).\label{eq:depolarizing}
\end{equation}
Here, $p$ is the error probability, and $X$, $Y$, and $Z$ are the Pauli operators acting on the qubit. We handle the ancilla on the density matrix as well. Post-selection keeps the block of the joint density matrix $\varrho_{\mathrm{joint}}$ that corresponds to the ancilla outcome $0$ and renormalizes it:
\begin{equation}
\varrho\;\mapsto\;\frac{(\bra{0}_a\otimes I)\,\varrho_{\mathrm{joint}}\,(\ket{0}_a\otimes I)}{\operatorname{Tr}\big[(\bra{0}_a\otimes I)\,\varrho_{\mathrm{joint}}\,(\ket{0}_a\otimes I)\big]}.\label{eq:dm_postselect}
\end{equation}
The trace in the denominator is the probability $p_0^{(t)}$ of the accepted outcome, which enters $P_{\mathrm{acc}}$ in Eq.~\eqref{eq:acceptance}. The variant without branch selection traces the ancilla out, $\varrho\mapsto\operatorname{Tr}_a\varrho_{\mathrm{joint}}$. We carry the density matrix through all $T$ steps, as on hardware, where the $T$ circuits run back to back with one ancilla measurement per step, and we read out the final state with $S$ shots as in Eq.~\eqref{eq:readout}, with counts drawn from the multinomial distribution defined by the diagonal of the final density matrix.

\subsection{Parameter-matched classical control}\label{sec:control}

To separate the contribution of the quantum circuit from that of its parameter budget, we compare the denoiser with a classical denoiser of the same size, trained on the same task. The control is a residual multilayer perceptron (MLP) that predicts a correction of the noisy state:
\begin{equation}
\hat x=\frac{x_t+g_\phi(f_t)}{\lVert x_t+g_\phi(f_t)\rVert_2},\qquad f_t=\big[\operatorname{Re}x_t\,;\;\operatorname{Im}x_t\,;\;e_t\,;\;e_k\big].\label{eq:control}
\end{equation}
In this control, $g_\phi$ represents an MLP with two hidden layers of width $h$ and sigmoid linear unit (SiLU) activations \citep{elfwing2018sigmoid}, whose $2D$ outputs form the real and the imaginary parts of the correction, and $e_t\in\{0,1\}^{T}$ and $e_k\in\{0,1\}^{K}$ denote the one-hot encodings of the step and the class. We choose the width $h$ as the positive root of
\begin{equation}
h^{2}+\big(d_{\mathrm{in}}+d_{\mathrm{out}}+2\big)\,h+d_{\mathrm{out}}=P,\label{eq:width}
\end{equation}
rounded to the nearest integer. The left-hand side counts the weights and biases of an MLP with the input and output sizes $d_{\mathrm{in}}=2D+T+K$ and $d_{\mathrm{out}}=2D$, and $P$ is the number of angles of the quantum denoiser in Eq.~\eqref{eq:count}. For $P=11{,}520$, the control has $h=54$ hidden units and $11{,}458$ parameters. We initialize the last layer of $g_\phi$ at zero, so that the control starts at the identity map like the quantum denoiser, and we train it with the loss of Eq.~\eqref{eq:loss_ps} and the same optimization procedure.

\subsection{Utility, fidelity, and privacy measures}\label{sec:measures}

We evaluate the generated latent vectors with a fixed protocol, which we apply to every generator compared in this work, on all $K$ classes. Let $\mathcal{X}_{\mathrm{tr}}=\mathcal{D}_{\mathrm{tr}}\cup\mathcal{D}_{\mathrm{va}}$ denote the real records available for training classifiers, let $\mathcal{X}_{\mathrm{te}}=\mathcal{D}_{\mathrm{te}}$ denote the real test records, which no generator observes during training, and let $\mathcal{S}$ denote the synthetic records of all classes. We train three classifiers, a random forest \citep{breiman2001random}, a support vector machine (SVM) with a radial basis function kernel \citep{cortes1995support}, and an MLP, with scikit-learn \citep{pedregosa2011scikit}, and we report their mean macro-F1 score in three settings:
\begin{itemize}
\item \emph{Augmentation (Aug)}: train on $\mathcal{X}_{\mathrm{tr}}\cup\mathcal{S}$ and test on $\mathcal{X}_{\mathrm{te}}$;
\item \emph{Train on synthetic, test on real (TSTR)} \citep{esteban2017real}: train on $\mathcal{S}$ and test on $\mathcal{X}_{\mathrm{te}}$;
\item \emph{Train on real, test on synthetic (TRTS)}: train on $\mathcal{X}_{\mathrm{tr}}$ and test on $\mathcal{S}$.
\end{itemize}
Aug measures whether the synthetic records add useful information to the real training data. TSTR and TRTS approach the same question from opposite directions, since both test whether the synthetic records carry the information that separates the classes in the real data. Every setting tests on records that no generator observed during training, so that a generator cannot raise its scores by reproducing its training records.

To measure the fidelity of the generated distribution, we compute for every class the squared maximum mean discrepancy (MMD) \citep{gretton2012kernel} between the real records $X=\{x_i\}_{i=1}^{n_X}$ and the synthetic records $Y=\{y_j\}_{j=1}^{n_Y}$ of the class:
\begin{equation}
\begin{aligned}
\mathrm{MMD}^{2}(X,Y)={}&\sum_{\sigma\in\Sigma}\Bigg[\frac{1}{n_X^{2}}\sum_{i,i'}\kappa_\sigma(x_i,x_{i'})\\
&+\frac{1}{n_Y^{2}}\sum_{j,j'}\kappa_\sigma(y_j,y_{j'})\\
&-\frac{2}{n_X n_Y}\sum_{i,j}\kappa_\sigma(x_i,y_j)\Bigg].
\end{aligned}\label{eq:mmd}
\end{equation}
In this estimator, $\kappa_\sigma(u,v)=\exp\!\big(-\lVert u-v\rVert_2^{2}/(2\sigma^{2})\big)$ represents a Gaussian kernel with bandwidth $\sigma$, and $\Sigma=\{0.25,0.5,1,2,4\}\cdot\sigma_{\mathrm{med}}$ denotes a set of bandwidths around the median pairwise distance $\sigma_{\mathrm{med}}$ of the pooled records. Lower values indicate closer distributions.

To quantify memorization, which is the main privacy risk of releasing synthetic records, we compare the distance from each synthetic record to its nearest training record with the same distance for unseen real records. For every class, we compute
\begin{equation}
\begin{aligned}
d_{\mathrm{gen}}(y)&=\min_{x\in\mathcal{X}_{\mathrm{tr}}}\lVert y-x\rVert_2,& y&\in\mathcal{S},\\
d_{\mathrm{ref}}(x')&=\min_{x\in\mathcal{X}_{\mathrm{tr}}}\lVert x'-x\rVert_2,& x'&\in\mathcal{X}_{\mathrm{te}}.
\end{aligned}\label{eq:dcr}
\end{equation}
In these distances, $d_{\mathrm{gen}}$ represents the distance to the closest training record of a synthetic record \citep{park2018data}, and $d_{\mathrm{ref}}$ denotes the same distance for a real test record, which sets the natural scale of the data. We then report
\begin{equation}
\begin{aligned}
r_{\mathrm{mem}}&=\frac{\operatorname{median}_{y\in\mathcal{S}}\,d_{\mathrm{gen}}(y)}{\operatorname{median}_{x'\in\mathcal{X}_{\mathrm{te}}}\,d_{\mathrm{ref}}(x')},\\
q_{10}&=\frac{1}{\lvert\mathcal{S}\rvert}\sum_{y\in\mathcal{S}}\mathbf{1}\Big[d_{\mathrm{gen}}(y)<Q_{0.1}\big(d_{\mathrm{ref}}\big)\Big].
\end{aligned}\label{eq:mem}
\end{equation}
Here, $Q_{0.1}(d_{\mathrm{ref}})$ is the $10$th percentile of the reference distances, and $\mathbf{1}[\cdot]$ is the indicator function. A ratio $r_{\mathrm{mem}}<1$ indicates that the synthetic records lie closer to the training records than unseen real records do, and a fraction $q_{10}$ well above $0.1$ suggests that the generator copies training records. As a test reference, we also compute the same fraction, denoted $q_{10}^{\mathrm{te}}$, with the distance of every synthetic record to its nearest real test record in place of $d_{\mathrm{gen}}$; since no generator observes the test records, a generator that copies training records yields a fraction $q_{10}$ well above $q_{10}^{\mathrm{te}}$. We compute these distances on features standardized with the statistics of $\mathcal{X}_{\mathrm{tr}}$.

Since the decoder of Eq.~\eqref{eq:decode} draws the length of every synthetic record around the lengths of the training records, we also test whether the released lengths reveal which records belong to $\mathcal{D}_{\mathrm{tr}}$. An adversary who knows the codec computes the length $\rho(x)$ of a candidate record $x$ of class $k$ and scores it with
\begin{equation}
\begin{aligned}
s(x)&=-\min_{y\in\mathcal{S}_k}\big\lvert\log\rho(x)-\log\rho(y)\big\rvert,\\
\mathrm{AUC}_{\mathrm{len}}&=\Pr\big(s(x)>s(x')\big)+\tfrac12\Pr\big(s(x)=s(x')\big).
\end{aligned}\label{eq:lenmia}
\end{equation}
In this test, $\mathcal{S}_k$ represents the synthetic records of class $k$, $x$ denotes a training record of class $k$, and $x'$ denotes a validation or test record of the same class, both drawn uniformly at random. The area under the receiver operating characteristic curve $\mathrm{AUC}_{\mathrm{len}}$ equals $0.5$ when the lengths carry no information about membership and $1$ when they separate the training records from the other records perfectly. The score of Eq.~\eqref{eq:lenmia} describes an adversary that looks for the nearest released length. A stronger adversary also knows the bandwidth $\tau_k$ and scores a candidate with the kernel density of the released log-lengths at its own log-length,
\begin{equation}
\begin{aligned}
s_{\mathrm{LR}}(x)&=\log\sum_{y\in\mathcal{S}_k}\exp\!\left(-\frac{\Delta(x,y)^{2}}{2\tau_k^{2}}\right),\\
\Delta(x,y)&=\log\rho(x)-\log\rho(y),
\end{aligned}\label{eq:lenlr}
\end{equation}
which approximates the likelihood-ratio test under the smoothed bootstrap, since the log-length of every training record is a center of the mixture from which the released log-lengths are drawn (Proposition~\ref{prop:length}). We report both adversaries, with bootstrap intervals of the AUC over the candidates.

For an adversary that also obtains the trained generator, we evaluate a loss-based membership test \citep{yeom2018privacy}, which exploits a lower training loss on members. The score of a record $x$ of class $k$ is the negative loss of the generator, averaged over the $T$ steps and $R$ fixed noise draws,
\begin{equation}
s_{\mathrm{loss}}(x)=-\frac{1}{TR}\sum_{t=1}^{T}\sum_{r=1}^{R}\ell\big(\theta;a(x),k,t,\epsilon_r\big).\label{eq:lossmia}
\end{equation}
The loss $\ell$ is $\ell_{\mathrm{ps}}$ or $\ell_{\mathrm{tr}}$ for the quantum denoisers and the corresponding loss for the classical generators of Sections~\ref{sec:control} and \ref{sec:setup_ddpm}. The members are a fixed random subset of $\mathcal{D}_{\mathrm{tr}}$ of the size of $\mathcal{D}_{\mathrm{te}}$, and the non-members are the records of $\mathcal{D}_{\mathrm{te}}$. Finally, we evaluate the classification measures on the round trips of real records as well, which gives the ceiling of each codec described in Section~\ref{sec:encoding}.

\subsection{Implementation}\label{sec:impl}

We implement the denoiser twice. The first implementation, written in PyTorch \citep{paszke2019pytorch}, builds the unitary matrix of every block explicitly and applies it to a whole batch of states; we employ it for training and for state-vector sampling. The second implementation applies the same gates one by one in PennyLane \citep{bergholm2018pennylane}; we use it to verify the first implementation and to run the density-matrix sampling of Section~\ref{sec:noise}. Both implementations share the same parameters, and their output states agree to within $5\times10^{-7}$. Both implementations support data-parallel execution with the message passing interface (MPI), in which every run is assigned to a group of processes that split each mini-batch and average their gradients after every step; Section~\ref{sec:setup_env} describes how we executed the reported runs.
\subsection{Theoretical analysis}\label{sec:theory}

In this subsection, we establish a theorem and two propositions that support the design choices of Sections~\ref{sec:encoding} to \ref{sec:sampling}: the training target, the initialization together with the acceptance probability that it implies, and the lengths of the decoded records, which underlie the privacy analysis of Section~\ref{sec:res_privacy}. The results hold exactly for the circuits of Section~\ref{sec:denoiser}, since every angle $\theta$ enters a single rotation $e^{-i\theta Q/2}$ generated by a Pauli operator $Q$. Two further results, on the smoothness of the training objective and on the number of measurements that the read-out and the loss require on hardware, follow from standard arguments; we state them as Lemmas~\ref{lem:smooth} and \ref{lem:shots} in Appendix~\ref{app:proofs}. Appendix~\ref{app:proofs} contains all proofs, and Appendix~\ref{app:numerics} verifies every statement numerically with our implementation (Table~\ref{tab:numerics}).

\paragraph{Optimal training target}
The first result compares our posterior-mean target with a single draw of $x_{t-1}$, the target of the training algorithm of \citet{cacioppo2025hybrid}.

\begin{theorem}[Optimal target]\label{thm:target}
Let $x_{t-1}$ be a draw from the posterior $q(x_{t-1}\mid x_t,x_0)$ of the forward process of Eq.~\eqref{eq:forward}, and let $\ket{\hat x_{t-1}}=x_{t-1}/\lVert x_{t-1}\rVert_2$. There exist constants $\lambda_1>\lambda_2\geq0$ with $\lambda_1+(D-1)\lambda_2=1$, which depend on $x_t$ and $x_0$ only, such that for every density matrix $\varrho$ of the data register,
\begin{equation}
\begin{aligned}
&\mathbb{E}\Big[1-\bra{\hat x_{t-1}}\varrho\ket{\hat x_{t-1}}\;\Big|\;x_t,x_0\Big]\\
&\quad=(1-\lambda_1)+(\lambda_1-\lambda_2)\Big(1-\bra{\tilde\mu_t}\varrho\ket{\tilde\mu_t}\Big).
\end{aligned}\label{eq:thm_target}
\end{equation}
\end{theorem}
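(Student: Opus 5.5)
The plan is to reduce the expected infidelity to a linear functional of $\varrho$ and then exploit the symmetry of the posterior. Since $1-\bra{\hat x_{t-1}}\varrho\ket{\hat x_{t-1}}=1-\operatorname{Tr}[\varrho\,\ket{\hat x_{t-1}}\bra{\hat x_{t-1}}]$, linearity of expectation and trace gives
\begin{equation*}
\mathbb{E}\big[1-\bra{\hat x_{t-1}}\varrho\ket{\hat x_{t-1}}\,\big|\,x_t,x_0\big]=1-\operatorname{Tr}[\varrho M],\qquad M=\mathbb{E}\big[\ket{\hat x_{t-1}}\bra{\hat x_{t-1}}\,\big|\,x_t,x_0\big].
\end{equation*}
It then suffices to show that $M=\lambda_2 I+(\lambda_1-\lambda_2)\ket{\tilde\mu_t}\bra{\tilde\mu_t}$. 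Because $\operatorname{Tr}\varrho=1$, this yields $1-\lambda_2-(\lambda_1-\lambda_2)\bra{\tilde\mu_t}\varrho\ket{\tilde\mu_t}$, which rearranges to Eq.~\eqref{eq:thm_target}. The condition $\lambda_1+(D-1)\lambda_2=1$ is exactly $\operatorname{Tr}M=1$, and the constants depend only on $x_t,x_0$ through the posterior.

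The key structural step is to identify the posterior of the complex forward process. Since the chain of Eq.~\eqref{eq:forward} uses circularly symmetric complex Gaussian noise, the standard DDPM calculation of \citet{ho2020denoising}, applied componentwise to real and imaginary parts, gives $x_{t-1}=\tilde\mu_t+\sqrt{\tilde\beta_t}\,\eta$ with $\eta\sim\mathcal{CN}(0,2I_D)$. In particular, the noise is isotropic and invariant under every unitary. For any unitary $V$ with $V\tilde\mu_t=\tilde\mu_t$, the law of $x_{t-1}$ is $V$-invariant, hence $VMV^\dagger=M$. Schur's lemma, or a direct argument, for the stabilizer $U(D-1)$ acting on $\tilde\mu_t^\perp$ then forces $M$ to act as a scalar $\lambda_2$ on $\tilde\mu_t^\perp$. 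Moreover, $M$ has no off-diagonal block between $\ket{\tilde\mu_t}$ and $\tilde\mu_t^\perp$: taking $V=\ket{\tilde\mu_t}\bra{\tilde\mu_t}+e^{i\phi}(I-\ket{\tilde\mu_t}\bra{\tilde\mu_t})$ kills those blocks. This gives the claimed form, with $\lambda_1=\mathbb{E}[\lvert\braket{\tilde\mu_t|\hat x_{t-1}}\rvert^2]$.

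The main obstacle is the strict inequality $\lambda_1>\lambda_2$, together with $\lambda_2\ge0$. Nonnegativity is immediate, since $M\succeq0$ and $\lambda_2=\bra{v}M\ket{v}$ for a unit $v\perp\tilde\mu_t$. For strictness I would write $x_{t-1}=\lVert\tilde\mu_t\rVert\ket{\tilde\mu_t}+\sqrt{\tilde\beta_t}(\eta_\parallel\ket{\tilde\mu_t}+\eta_\perp)$ and compare $\mathbb{E}[\lvert c+\eta_\parallel\rvert^2/\lVert x_{t-1}\rVert^2]$ with $\mathbb{E}[\lvert\eta_v\rvert^2/\lVert x_{t-1}\rVert^2]$, where $c=\lVert\tilde\mu_t\rVert/\sqrt{\tilde\beta_t}$ and $\eta_v$ is a coordinate in $\tilde\mu_t^\perp$. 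A clean route is a coupling argument. Condition on $r^2=\lvert c+\eta_\parallel\rvert^2+\lvert\eta_v\rvert^2$ and the remaining coordinates, so that the denominator is fixed. Then compare the densities of $(c+\eta_\parallel,\eta_v)$ and $(\eta_v,c+\eta_\parallel)$ on the circle of radius $r$: the first assigns more weight to configurations with large first coordinate whenever $c>0$, because the Gaussian density $e^{-\lvert a-c\rvert^2/2}$ favors larger $\operatorname{Re}a$. I would make this precise by noting that the difference of the conditional expectations is proportional to $\mathbb{E}[(\lvert a\rvert^2-\lvert b\rvert^2)\,w(a)]$ with $w$ increasing in $\operatorname{Re}a$, which is positive.

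It remains to check that $c>0$, i.e.\ $\tilde\mu_t\neq0$ almost surely, and that $\tilde\beta_t>0$. Both hold for $t\ge2$ and generic $x_t$. At $t=1$ we have $\tilde\beta_1=0$, so $M=\ket{x_0}\bra{x_0}$, giving $\lambda_1=1$ and $\lambda_2=0$, which I would treat as a degenerate case.
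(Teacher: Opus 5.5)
\textbf{Gap in the strict inequality $\lambda_1>\lambda_2$.} Everything else matches the paper's proof: the reduction to the second-moment operator $M$, the posterior $\mathcal{CN}(\tilde\mu_t,2\tilde\beta_tI_D)$, the invariance argument giving $M=\lambda_1\ket{\tilde\mu_t}\bra{\tilde\mu_t}+\lambda_2(I-\ket{\tilde\mu_t}\bra{\tilde\mu_t})$ with $\operatorname{Tr}M=1$, the bound $\lambda_2\geq0$, and the case $t=1$.

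The gap is in the step you yourself call the main obstacle. Set $a=c+\eta_\parallel$ and $b=\eta_v$. Conditioning on $r^2=\lvert a\rvert^2+\lvert b\rvert^2$ and on the remaining coordinates is legitimate; the set is a $3$-sphere in $\mathbb{C}^2$, not a circle. The conditional law of $(a,b)$ is then the uniform law on that sphere tilted by $w(a)=e^{c\operatorname{Re}a}$.

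You conclude $\mathbb{E}[(\lvert a\rvert^2-\lvert b\rvert^2)\,w(a)]>0$ because $w$ increases in $\operatorname{Re}a$, but monotonicity in $\operatorname{Re}a$ does not fix the sign. Under the uniform law, $\lvert a\rvert^2-\lvert b\rvert^2$ is invariant under $a\mapsto-a$. So the increasing, nonnegative weight $1+\operatorname{Re}a/r$ gives exactly $0$. Moreover, $\mathbb{E}[\lvert a\rvert^2-\lvert b\rvert^2\mid\operatorname{Re}a]=\big(4(\operatorname{Re}a)^2-r^2\big)/3$, so the nondecreasing weight $\mathbf{1}[\operatorname{Re}a>-r/2]$ gives a strictly negative value. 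The sign is governed by how the weight depends on $\lvert a\rvert$, and your argument never establishes that dependence.

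\textbf{How to complete your route.} The specific exponential tilt does have the needed property. Under the uniform law on the sphere, the phase of $a$ is uniform given $\lvert a\rvert$. You may therefore replace $w(a)$ by $h(\lvert a\rvert)$ with $h(s)=\frac{1}{2\pi}\int_0^{2\pi}\cosh(cs\cos\phi)\,d\phi$, which is strictly increasing in $s$ for $c>0$. Exchangeability of $\lvert a\rvert^2$ and $\lvert b\rvert^2$ under the uniform law gives $\mathbb{E}[2\lvert a\rvert^2-r^2]=0$. The expectation then equals the covariance of two strictly increasing functions of $\lvert a\rvert$. That covariance is strictly positive, since $\operatorname{Cov}(f(X),g(X))=\frac12\mathbb{E}[(f(X)-f(X'))(g(X)-g(X'))]$ for an independent copy $X'$.

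\textbf{The paper's route.} The paper avoids conditioning on spheres. It writes $\lambda_1-\lambda_2=\mathbb{E}[g(A)]$ with $g(a)=(a-B)/(a+B+W)$ strictly increasing. It shows that $A=\lvert y_1\rvert^2$ strictly dominates its centered counterpart $A'$ stochastically, by differentiating $\Pr(\lvert u+\delta\rvert\le s)$ in $\delta$. It then obtains $\mathbb{E}[g(A')]=0$ from exchangeability. Both arguments rest on the same two facts: an upward shift in the modulus of the coordinate along $\tilde\mu_t$, and an exchangeable centered baseline. What your proposal is missing is precisely the first fact.
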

In this identity, the left-hand side represents the expected infidelity of an output $\varrho$ when the target is a single draw, and the last factor is our loss of Eq.~\eqref{eq:loss_ps} or Eq.~\eqref{eq:loss_tr} with the posterior-mean target. Since $\lambda_1-\lambda_2>0$, the expected single-draw loss is an increasing affine function of our loss, so both losses are minimized by the same output, $\varrho=\ket{\tilde\mu_t}\bra{\tilde\mu_t}$, for every pair $(x_t,x_0)$. Over the training set, the two objectives differ by the positive weight $\lambda_1-\lambda_2$ of each pair, and they share their minimizers whenever the circuit can represent the optimal output for every input. In other words, the posterior-mean target leaves the optimal denoiser unchanged and removes the variance of the draw, which is substantial: for our schedule, a single draw has an expected fidelity of only $\lambda_1\approx0.7$ with $\ket{\tilde\mu_t}$ (Appendix~\ref{app:numerics}).

\paragraph{Identity initialization}
The second result quantifies how close the denoiser is to the identity map at initialization.

\begin{proposition}[Identity initialization]\label{prop:init}
Let the $N=3mL$ angles of an SEC block with an even number $L$ of layers on $m$ qubits be independent and distributed as $\mathcal{N}(0,\sigma^2)$. Then $U(0)=I$, and for every state $\ket{v}$,
\begin{equation}
\begin{aligned}
\mathbb{E}\big[U(\Theta)\big]&=e^{-N\sigma^2/8}\,I,\\
\mathbb{E}\big\lVert U(\Theta)\ket{v}-\ket{v}\big\rVert_2^{2}&=2\big(1-e^{-N\sigma^2/8}\big).
\end{aligned}\label{eq:prop_init}
\end{equation}
With $N_2=3(n+1)L_2$ angles in $U_2^{(t)}$, the ancilla outcome $1$ also satisfies $\mathbb{E}[p_1\mid\ket{x_t}]\leq2(1-e^{-N_2\sigma^2/8})$ at every step, and the acceptance probability of Eq.~\eqref{eq:acceptance} satisfies $\mathbb{E}[P_{\mathrm{acc}}]\geq(2e^{-N_2\sigma^2/8}-1)^{T}$ whenever the base is non-negative.
\end{proposition}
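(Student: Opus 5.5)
The plan is to reduce everything to the single-rotation identity. Every angle $\theta$ of an SEC block enters exactly one factor $e^{-i\theta Q/2}=\cos(\theta/2)\,I-i\sin(\theta/2)\,Q$, with $Q\in\{Y,Z\}$ acting on one qubit. This holds because $R_3=R_Z(\omega)R_Y(\vartheta)R_Z(\varphi)$ is itself a product of three such factors. For $\theta\sim\mathcal{N}(0,\sigma^2)$, the odd function $\sin(\theta/2)$ has zero mean, and $\mathbb{E}[\cos(\theta/2)]=e^{-\sigma^2/8}$ by the characteristic function of the Gaussian. Hence $\mathbb{E}[e^{-i\theta Q/2}]=e^{-\sigma^2/8}I$. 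That $U(0)=I$ follows directly from $R_3(0,0,0)=I$ and the pairing $C^{\dagger}C=I$ of consecutive layers, since $L$ is even.

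\textbf{Expectation of the block.} I will write $U(\Theta)$ as an ordered product of the $N$ single-angle factors, each tensored with identities, interleaved with the fixed chains $E_l$. Because the angles are independent, the expectation of this ordered product of matrix-valued random variables equals the ordered product of the expectations; this uses only multilinearity and independence, not commutativity. Each single-angle factor is thereby replaced by the scalar $e^{-\sigma^2/8}$. The scalars pull out, giving $\mathbb{E}[U]=e^{-N\sigma^2/8}E_{L-1}\cdots E_0=e^{-N\sigma^2/8}I$. For the second identity, I expand $\lVert U\ket{v}-\ket{v}\rVert_2^2=2-2\operatorname{Re}\bra{v}U\ket{v}$ and take expectations, which gives $2(1-e^{-N\sigma^2/8})$.

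\textbf{Ancilla bound.} Set $\ket{w}=U_1^{(k,t)}\ket{x_t}$. Because $(\bra{1}_a\otimes I)(\ket{0}_a\otimes\ket{w})=0$, we can write
\begin{equation*}
p_1=\big\lVert(\bra{1}_a\otimes I)\,(U_2^{(t)}-I)\,(\ket{0}_a\otimes\ket{w})\big\rVert_2^2\leq\big\lVert(U_2^{(t)}-I)\,(\ket{0}_a\otimes\ket{w})\big\rVert_2^2 .
\end{equation*}
The angles of $U_2^{(t)}$ are independent of $U_1^{(k,t)}$ and of $\ket{x_t}$. Conditioning on these and applying the second identity of Eq.~\eqref{eq:prop_init} with $N=N_2$ and the state $\ket{0}_a\otimes\ket{w}$ gives $\mathbb{E}[p_1\mid\ket{x_t}]\leq2(1-e^{-N_2\sigma^2/8})$. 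Equivalently, $\mathbb{E}[p_0^{(t)}\mid\cdot]\geq c:=2e^{-N_2\sigma^2/8}-1$.

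\textbf{Acceptance probability.} I expect this step to be the main obstacle, because along a sampling trajectory the input of step $t$ depends on the blocks of steps $T,\dots,t+1$. Hence the factors $p_0^{(t)}$ of Eq.~\eqref{eq:acceptance} are not independent. I will handle this with the tower property along the order $t=T,\dots,1$. Let $\mathcal{F}_t$ be generated by the initial state and the angles of all steps $s>t$, together with $U_1^{(k,t)}$. Then $p_0^{(s)}$ for $s>t$ is $\mathcal{F}_t$-measurable, and $U_2^{(t)}$ is independent of $\mathcal{F}_t$. Therefore
\begin{equation*}
\mathbb{E}\Big[\prod_{s\geq t}p_0^{(s)}\Big]=\mathbb{E}\Big[\prod_{s>t}p_0^{(s)}\;\mathbb{E}\big[p_0^{(t)}\mid\mathcal{F}_t\big]\Big]\geq c\,\mathbb{E}\Big[\prod_{s>t}p_0^{(s)}\Big].
\end{equation*}
This inequality uses the non-negativity of the products. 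Inducting over the steps then yields $\mathbb{E}[P_{\mathrm{acc}}]\geq c^T$; multiplying the successive lower bounds in this way requires $c\geq0$, which is exactly the stated hypothesis. Because the per-step bound holds for every input state, the result also holds for a Haar-random $\ket{z_T}$.
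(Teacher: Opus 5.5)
Your proposal is correct and follows the paper's proof essentially step by step. It combines the single-rotation identity $\mathbb{E}[e^{-i\theta Q/2}]=e^{-\sigma^2/8}I$ with independence and the pairwise cancellation $C^{\dagger}C=I$, bounds the ancilla by $p_1\leq\lVert(U_2^{(t)}-I)(\ket{0}_a\otimes\ket{w})\rVert_2^2$, and conditions successively over the steps for $P_{\mathrm{acc}}$, where your explicit filtration $\mathcal{F}_t$ simply spells out the paper's remark that the input of step $t$ and $U_1^{(k,t)}$ do not depend on $\theta_2[t]$.
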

In this proposition, $N\sigma^2$ represents the total variance of the angles of a block, which controls its distance from the identity. With $\sigma=0.1$, the expected unitary of our blocks ($N=60$ and $N=72$) is $0.93I$ and $0.91I$, and the acceptance probability at initialization is at least $0.15$ for $T=10$. This bound is loose, since it holds for every input state and every step: at initialization, the measured acceptance probability equals $0.911$ (Appendix~\ref{app:numerics}). With the standard deviation of $0.7$ of the public implementation, the expected unitaries drop to $0.03I$ and $0.01I$, so that the blocks start far from the identity and the bound on $P_{\mathrm{acc}}$ no longer applies.

\paragraph{Lengths of decoded records}
The third result characterizes the lengths that the decoder of Eq.~\eqref{eq:decode} gives to the synthetic records.

\begin{proposition}[Lengths of decoded records]\label{prop:length}
With the min--max codec, every decoded record $\hat z$ of class $k$ satisfies $\rho(\hat z)=\rho$, where $\rho=\rho_j e^{\tau_k\zeta}$ is the length drawn in Eq.~\eqref{eq:decode}, whatever the generated state. Consequently, the following statements hold for the lengths of the synthetic records of class $k$.
(i) If $\tau_k>0$, then $\log\rho(\hat z)$ has the density
\begin{equation}
\hat f_k(v)=\frac{1}{\lvert\mathcal{R}_k\rvert}\sum_{r\in\mathcal{R}_k}\frac{1}{\sqrt{2\pi}\,\tau_k}\exp\!\left(-\frac{(v-\log r)^{2}}{2\tau_k^{2}}\right),\label{eq:kde}
\end{equation}
whose mean equals the mean $\bar v_k$ of $\{\log r:r\in\mathcal{R}_k\}$ and whose variance equals $(1-1/\lvert\mathcal{R}_k\rvert)\,s_k^{2}+\tau_k^{2}$.
(ii) If $\tau_k>0$, then $\Pr\big(\rho(\hat z)=r\big)=0$ for every $r>0$, so that no synthetic record reproduces the exact length of a training record.
(iii) If $\tau_k=0$, the lengths follow the empirical distribution of $\mathcal{R}_k$, and a training length that occurs once in $\mathcal{R}_k$ appears among the lengths of $n_s$ synthetic records of class $k$ with probability $1-(1-1/\lvert\mathcal{R}_k\rvert)^{n_s}$.
\end{proposition}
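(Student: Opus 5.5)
The proof rests on one identity: decoding and re-encoding with the min--max codec returns exactly the drawn length. Everything else follows from the law of $\log\rho=\log\rho_j+\tau_k\zeta$.

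\emph{Step 1: the decoded length equals the drawn length.} Insert the decoded vector of Eq.~\eqref{eq:decode} into the scaling of Eq.~\eqref{eq:minmax}. For every feature,
\[
\frac{\hat z_i-l_i}{h_i-l_i}=\rho\,\lvert\hat\alpha_i\rvert\geq0,
\]
so the maximum with zero is inactive and $u(\hat z)=\rho\,\lvert\hat\alpha\rvert$. This uses $h_i>l_i$, which we assume for every feature that varies over the training set. The read-out magnitudes have unit norm: for the exact read-out $\sum_i\lvert\braket{i|z_0}\rvert^2=1$, and for the shot estimate of Eq.~\eqref{eq:readout} $\sum_i c_i/S=1$. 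Hence $\rho(\hat z)=\lVert u(\hat z)\rVert_2=\rho$, independently of the generated state.

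\emph{Step 2: the density of the log-length and its moments (part~(i)).} By Step~1, $\log\rho(\hat z)=\log\rho_j+\tau_k\zeta$. Here $\log\rho_j$ is uniform on the multiset $\{\log r:r\in\mathcal{R}_k\}$, and $\zeta$ is independent of it. Conditioning on $\rho_j=r$ gives $\mathcal{N}(\log r,\tau_k^2)$, and averaging over the uniform draw gives the Gaussian mixture of Eq.~\eqref{eq:kde}. The mean is $\bar v_k+0$. Because the two terms are independent, the variance is the variance of the empirical distribution plus $\tau_k^2$. With $s_k$ the sample standard deviation (divisor $\lvert\mathcal{R}_k\rvert-1$), the empirical variance equals $(1-1/\lvert\mathcal{R}_k\rvert)s_k^2$.

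\emph{Step 3: exact reproduction (parts~(ii) and~(iii)).} For part~(ii), a law with a Lebesgue density has no atoms, so $\Pr(\log\rho(\hat z)=\log r)=0$ for every $r>0$. For part~(iii), $\tau_k=0$ gives $\rho(\hat z)=\rho_j$, which follows the empirical law of $\mathcal{R}_k$. The $n_s$ synthetic records of class $k$ draw their lengths independently, and a length that occurs once is hit by each draw with probability $1/\lvert\mathcal{R}_k\rvert$. The complement of missing it in every draw gives $1-(1-1/\lvert\mathcal{R}_k\rvert)^{n_s}$.

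The only delicate point is Step~1. It needs the clipping in Eq.~\eqref{eq:minmax} to be inactive and the magnitudes $\lvert\hat\alpha\rvert$ to have unit norm. We check both explicitly rather than assume them, since the finite-shot estimate is normalized only because the counts sum to $S$. We also state the non-degeneracy assumption $h_i>l_i$. The remaining steps are routine computations with mixtures and independent sums.
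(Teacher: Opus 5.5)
Your proposal is correct and follows the paper's proof essentially step for step. The paper also shows that the clipping is inactive and that the read-out has unit norm (for shots because $\sum_i c_i=S$), which gives $\rho(\hat z)=\rho$; (i)--(iii) then follow from the Gaussian mixture with independent-sum moments, the absence of atoms, and the complement probability. Your explicit assumption $h_i>l_i$ is left implicit in the paper's codec and is a harmless addition.
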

In this proposition, $\hat f_k$ represents the Gaussian kernel density estimate of the logarithms of the training lengths with bandwidth $\tau_k$, in which every element of $\mathcal{R}_k$ counts with its multiplicity, and $s_k$ denotes the sample standard deviation of Eq.~\eqref{eq:bandwidth}. Part (i) indicates that the synthetic lengths keep the distribution of the training lengths up to a smoothing of width $\tau_k$ on the logarithmic scale, so that the length of a synthetic record cannot serve as a class label beyond what the real training data already contain. With the bandwidth of Eq.~\eqref{eq:bandwidth} and the min--max codec, this smoothing increases the standard deviation of the logarithms by $0.9\%$ to $4.9\%$ for the classes of this study. Parts (ii) and (iii) explain why the decoder perturbs the drawn length. Part (ii) is a property of every continuous distribution and gives no guarantee by itself, since a released length close to a training length can still reveal membership; Section~\ref{sec:res_privacy} therefore measures this risk with two adversaries, also for the lengths drawn from the differentially private histogram of Eq.~\eqref{eq:dplengths}. With $\tau_k=0$, the release reproduces the exact lengths of training records: for the Startpage family, whose $97$ training records have distinct lengths, $512$ synthetic records contain the length of $99.5\%$ of them. An adversary who knows the codec can then compare these values with the length of a candidate record, and the membership test of Eq.~\eqref{eq:lenmia} reaches an AUC of $0.998$ for Startpage. With the bandwidth of Eq.~\eqref{eq:bandwidth}, no length is reproduced, and the AUC drops to $0.55$, close to the value $0.5$ of a random guess (Appendix~\ref{app:numerics}). Since the decoder is applied after sampling, the choice of $\tau_k$ requires no retraining of the denoiser. With the sign-split codec, the decoded length is at most $\rho$, and it equals $\rho$ when the two halves of $\lvert\hat\alpha\rvert$ have disjoint supports.

\section{Experimental setup}\label{sec:setup}

In this section, we describe the data, the configurations of the trained models, the reference points of the evaluation, the classical diffusion baseline, the evaluation protocol, and the computing environment of our experiments.

\subsection{Dataset}\label{sec:setup_data}

We employ a corpus of malware executables of $20$ families from the public repository VirusShare \citep{virusshare}, which was collected and preprocessed in the same way as the corpus of \citet{bao2025generating}. The corpus contains the $17$ most common families of the collection together with three of its rarest families, so that it reflects the class imbalance of real collections and contains families with few samples. The samples were collected in August 2023, and every sample keeps the malware-family label provided with it by VirusShare, without modification. Unlike the corpus of \citet{bao2025generating}, which combines samples of $25$ families from the Malicia dataset \citep{nappa2015malicia} and from VirusShare, our corpus draws every family from VirusShare. We disassembled every executable with objdump from the GNU Binutils \citep{binutils} and the pipeline of \citet{bao2025generating}, which keeps the lines of the disassembly that contain an instruction, extracts the instruction field, and removes its operands, so that the output lists the opcode mnemonics of the program in their order, one per line. Executables whose disassembly does not produce a usable instruction stream, for instance because of packing or anti-disassembly protections, are excluded without any attempt at repair, and exact duplicates are removed, which leaves $21{,}522$ programs. Since the CAE of Section~\ref{sec:latent} reads the first $7{,}000$ opcodes of a program, we further keep one program of every group whose first $7{,}000$ opcodes coincide, which leaves $19{,}843$ programs; $83\%$ of them contain more than $7{,}000$ opcodes, with a median of $28{,}711$. We discard the three families with fewer than $30$ programs (Diplugem, Lamechi.B, and Systex.A), which leaves $K=17$ families with $148$ to $4{,}158$ programs each. For every family, we draw $20\%$ of the programs at random for the test set and $15\%$ for the validation set and keep the rest for training, which yields $12{,}892$, $2{,}973$, and $3{,}969$ records; the smallest family, Startpage, has $97$ training, $22$ validation, and $29$ test records. The split depends on a single fixed seed, so that every training run observes the same records.

We fix this split before training any model, the CAE included. The vocabulary of the CAE consists of the $1{,}185$ opcodes that occur in the truncated training records, together with a padding token and an unknown token, which replaces the $138$ opcode occurrences of validation and test records outside the vocabulary. We train the CAE on the training records for $120$ epochs with Adam at a learning rate of $10^{-5}$ and mini-batches of $64$ records, and we replace every opcode of its input by the unknown token with a probability of $0.15$, so that the CAE learns to reconstruct corrupted sequences. The validation records only monitor this training, and the test records reach the CAE only after training, when its encoder maps them onto latent vectors. The trained CAE recovers $32.4\%$, $31.6\%$, and $31.7\%$ of the opcodes of the training, validation, and test records, with cross-entropies of $3.05$, $3.11$, and $3.11$. Hence, the CAE reconstructs unseen records nearly as well as its training records, and since all generative models receive the same latent vectors, the representation affects every compared model in the same way.

\subsection{Configurations of the quantum denoiser}\label{sec:setup_config}

We train the quantum denoiser with the linear schedule of Section~\ref{sec:forward} over $T=10$ steps, with $\beta$ increasing from $0.02$ to $0.10$, and with $L_1=L_2=L_3=4$ layers per block, which gives the $11{,}520$ angles of Table~\ref{tab:params} for the min--max codec and $13{,}800$ angles for the sign-split codec. We draw the initial angles from $\mathcal{N}(0,0.1^2)$, so that every block starts close to the identity (Proposition~\ref{prop:init}). We optimize Eq.~\eqref{eq:objective} with Adam at a learning rate of $0.01$ and a weight decay of $10^{-5}$, and we multiply the learning rate by $0.97$ every $10$ epochs. Every mini-batch contains $128$ training records, each combined with all $T$ steps, so that one update uses $1{,}280$ pairs $(x_t,\tilde\mu_t)$. Every five epochs, we evaluate the infidelity on fixed validation pairs, which combine every validation record with every step and keep the noise fixed across evaluations. We stop training after eight evaluations without a relative improvement of $0.1\%$, with at least $40$ and at most $400$ epochs, and we keep the parameters of the best evaluation, which we then evaluate on fixed test pairs.

We train six configurations: the denoiser with post-selection, the denoiser without branch selection, and the classical control of Section~\ref{sec:control}, each with the min--max and the sign-split codec. Every configuration runs with the ten seeds $0$ to $7$, $13$, and $42$, which set the initial parameters, the order of the mini-batches, the noise of the forward process, and the initial states of sampling. After training, every run generates $512$ records per family with Algorithm~\ref{alg:sample}, that is, $8{,}704$ records per run. We train the four configurations of the circuit with both implementations of Section~\ref{sec:impl}, the PennyLane implementation with the seeds $42$, $7$, and $13$ for verification, and report the PyTorch runs, which gives $60$ PyTorch runs and $12$ PennyLane runs in total. For the simulation of hardware noise (Section~\ref{sec:noise}), we sample the $20$ post-selected circuits for all families with $p\in\{0,0.001,0.01\}$ and with an exact read-out or $S=1{,}000$ shots, which gives $120$ sampling configurations of $512$ records per family. We also sample every post-selected circuit and every classical control with the two samplers of Eq.~\eqref{eq:samplers}, and we train the post-selected circuit with the penalized objective of Eq.~\eqref{eq:pacc_loss} for $\lambda\in\{0.001,0.01,0.1\}$, with both codecs and the seeds $42$, $7$, and $13$. Finally, to test whether early stopping ends the training of the circuit too early, we repeat the min--max run with the seed $13$, which stops at the highest test infidelity, with a patience of $16$ instead of eight evaluations.

\subsection{Reference points}\label{sec:setup_refs}

We interpret the scores of the quantum denoiser against reference points that share its latent vectors, its amplitude encoding, its decoder, and its evaluation protocol, so that a difference in scores can be attributed to the denoiser. Generators of other designs, such as classical or hybrid diffusion models with $1{,}000$ stochastic steps on unnormalized latent vectors, differ from our model in several of these components at once, and a comparison with them would therefore confound the contribution of the circuit with that of the representation and of the sampling procedure. We therefore employ such a generator only to place the complete framework (Section~\ref{sec:setup_ddpm}). We do not include hybrid quantum diffusion models, such as the quantum latent diffusion model of \citet{defalco2024latent}, in which a classical network carries the state between the steps, since their scores would not isolate the contribution of a circuit (Section~\ref{sec:rel_quantum}). We isolate the denoiser with five reference points:
\begin{itemize}
\item \emph{Identity map}: the map that returns the noisy input state unchanged, which is the starting point of every denoiser (Proposition~\ref{prop:init}) and the reference of the test infidelity.
\item \emph{Random states}: the magnitudes of the initial states $\ket{z_T}$ of Algorithm~\ref{alg:sample}, decoded with Eq.~\eqref{eq:decode} without any denoising step. These are the records that the identity map would release, and they mark the scores of a generator that has learned nothing. For every seed, we decode the $512$ initial states per family from which the trained denoisers of that seed start, so that every comparison between a denoiser and the random states is paired.
\item \emph{Round-trip ceiling}: the scores of the encoded and decoded real records of Section~\ref{sec:encoding}, which a perfect generator would reach with the same codec and decoder.
\item \emph{Real records}: the classifiers trained on $\mathcal{X}_{\mathrm{tr}}$ alone for the utility, the real test records, which set the scale of the memorization measures of Eq.~\eqref{eq:mem}, and the encoded test records for the class information of Section~\ref{sec:res_states}.
\item \emph{Parameter-matched classical control}: the classical denoiser of Section~\ref{sec:control}, which replaces the circuit with an MLP of the same size and keeps every other component, so that it separates the contribution of the circuit from that of the formulation of the generative task.
\end{itemize}
The random states and the round-trip ceiling bound the scores that a denoiser of this framework can attain, so that the position of a configuration between them indicates how much of the attainable class information its denoiser has learned. The classical control then indicates how much of the remaining gap a classical map of the same size closes under the same formulation.

\subsection{Classical diffusion baseline}\label{sec:setup_ddpm}

To place the fully quantum framework as a whole, including its encoding, its ten deterministic steps, and its decoder, we compare it with a classical latent DDPM in the design that \citet{bao2025generating} employed for embeddings of malware opcode sequences. The DDPM follows \citet{ho2020denoising} with $T=1{,}000$ steps and a linear schedule of $\beta_t$ from $10^{-4}$ to $0.02$, and it operates on the latent vectors scaled to $[-1,1]$ with the minimum and the maximum of every feature over the training records. A one-dimensional U-Net predicts the noise from the scaled vector, which it reads as a sequence of $32$ values. The U-Net consists of residual blocks with group normalization, SiLU activations, and convolutions of kernel size three, with stride-two downsampling, nearest-neighbor upsampling, and skip connections, and every block receives a sinusoidal embedding of the step added to a learned embedding of the family. Like the quantum denoiser, one model covers all families. We train two sizes: a full U-Net with $32$, $64$, and $128$ channels, two residual blocks per level, a dropout rate of $0.1$, and $1{,}185{,}089$ parameters, and a small U-Net with $7$ and $14$ channels, one residual block per level, and $11{,}404$ parameters, comparable to the $11{,}520$ angles of the quantum denoiser. We train both with Adam at a learning rate of $10^{-3}$ on mini-batches of $256$ records, keep an exponential moving average of the weights with a decay of $0.999$, evaluate the loss of this average on fixed validation pairs every $500$ steps, and stop after ten evaluations without a relative improvement of $0.1\%$, with at least $5{,}000$ and at most $200{,}000$ steps. Every size runs with the ten seeds of the quantum denoiser and generates $512$ records per family with the ancestral sampler of \citet{ho2020denoising}, which starts from Gaussian noise. \citet{bao2025generating} instead started the reverse process from training records noised to step $T/2$. We report this sampler as well, since every record it generates derives from an individual training record, which tests whether the memorization measures of Section~\ref{sec:measures} detect such a dependence.

\subsection{Evaluation protocol}\label{sec:setup_eval}

We evaluate every run with the measures of Section~\ref{sec:measures} on its $512$ records per family. The random forest uses $100$ trees, the SVM uses a radial basis function kernel with $C=1$ and a kernel width set from the variance of the features, and the MLP has two hidden layers of $128$ and $64$ units trained for at most $500$ iterations. The SVM and the MLP receive features standardized with the statistics of their training data. For every run, we average the macro-F1 scores of the three classifiers, and we then report the mean and the standard deviation over the ten seeds. We compute the round-trip ceiling of each codec with the same classifiers and average it over eight draws of the decoded lengths, since the decoder of Eq.~\eqref{eq:decode} is random. We compare paired configurations, such as the circuit and its classical control or the circuit and the random states of its codec, seed by seed with the two-sided Wilcoxon signed-rank test \citep{wilcoxon1945individual}; with ten seeds, its smallest attainable $p$-value equals $0.002$, which requires the difference to have the same sign in every seed. For the loss-based membership test, we employ $R=4$ noise draws per record and step for the quantum denoisers and the classical control, and $32$ draws of the step and the noise for the DDPM. We compute the length-based AUCs per family and average them over the $17$ families and the ten seeds, and we compute the loss-based AUC over the pooled records of all families, with a bootstrap interval over the records. For the private length histogram of Eq.~\eqref{eq:dplengths}, we employ $B=64$ bins and $\varepsilon\in\{0.1,0.3,1,3,10\}$, and we average its round-trip ceiling over three draws of the lengths.

\subsection{Computing environment}\label{sec:setup_env}

We carried out all experiments on classical hardware running quantum simulators, since fault-tolerant quantum processors that could train the denoiser are not yet available in the present NISQ era \citep{preskill2018quantum}. We performed all runs on the REmotely-managed Power Aware Computing Systems and Services (REPACSS) supercomputer at Texas Tech University \citep{repacss}, whose CPU nodes each provide two AMD EPYC 9754 processors with $256$ cores and whose GPU nodes each provide two Intel Xeon Gold 6448Y processors and four NVIDIA H100 NVL accelerators. We employed the CPU nodes for every training and sampling run of the diffusion models, because the largest matrices of the denoiser have $128\times128$ entries, for which a GPU brings no speed-up at our mini-batch size. We trained the CAE on one H100 accelerator in $42$~min, and the numerical verification of Appendix~\ref{app:numerics} ran on the processors of a GPU node.

We submitted the runs with the SLURM workload manager and executed them with MPI (Open MPI~4.1.8 with mpi4py~4.1.2). Every training run of the quantum denoiser and of the classical control was assigned to a group of four processes with two CPU threads each, which split every mini-batch and averaged their gradients, and the runs of one job trained concurrently. The $18$ PyTorch runs of the seeds $42$, $7$, and $13$ completed within $1.0$~h on one CPU node, the $42$ PyTorch runs of the seeds $0$ to $6$ within $2.3$~h on two nodes, and the $12$ PennyLane runs within $1.1$~h on one node. A post-selected run took $41$~min on average with the min--max codec and $79$~min with the sign-split codec, against about $2$~min for the classical control. The $120$ sampling configurations with hardware noise ran on $512$ processes of four CPU nodes, with a group of four or five processes per configuration that split its samples, and took $1.1$~h. The $20$ DDPM runs ran concurrently within one job on one CPU node, with $16$ threads for every full U-Net and four threads for every small one. Training took $101$ to $215$~min for the full U-Net and $40$ to $65$~min for the small U-Net, and the $1{,}000$ steps of the ancestral sampler generated the $8{,}704$ records of a run in $6.6$ to $8.3$~min and in $1.6$ to $2.0$~min, respectively. We implemented every model in Python~3.12 with PyTorch~2.5.1 \citep{paszke2019pytorch}, PennyLane~0.42.3 \citep{bergholm2018pennylane}, and scikit-learn~1.7.2 \citep{pedregosa2011scikit}.

\section{Results}\label{sec:results}

In this section, we evaluate the quantum denoiser with both codecs and both measurement rules against the reference points of Section~\ref{sec:setup_refs} and the classical diffusion baseline of Section~\ref{sec:setup_ddpm}. Unless stated otherwise, every value is the mean and the standard deviation over the ten seeds, the scores cover all $17$ families, the records come from the PyTorch implementation, and the decoder draws the smoothed lengths of Eq.~\eqref{eq:decode}. Section~\ref{sec:res_training} describes the training of the denoisers, Sections~\ref{sec:res_utility} and \ref{sec:res_privacy} evaluate the generated records, Section~\ref{sec:res_states} examines the generated states, Section~\ref{sec:res_sampling} varies the sampler, the weighting of the released records, and the training objective, and Section~\ref{sec:res_noise} reports the behavior of post-selected sampling under hardware noise.

\subsection{Training of the denoisers}\label{sec:res_training}

\begin{figure*}[t]
\centering
\includegraphics{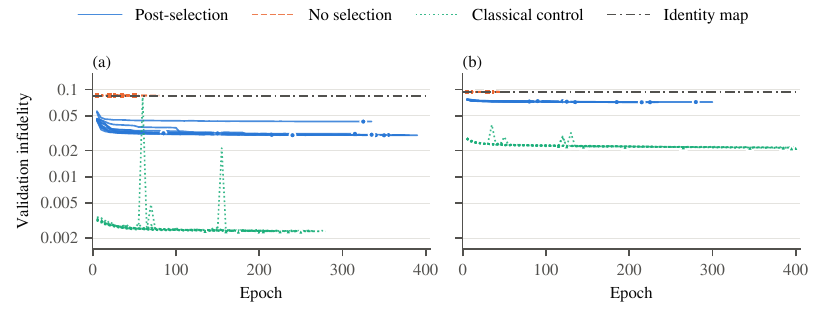}
\caption{Validation infidelity during training for the min--max codec (a) and the sign-split codec (b), with one line for each of the ten seeds. Blue solid lines: circuit with post-selection; orange dashed lines: circuit without branch selection; green dotted lines: parameter-matched classical control; gray dash-dotted line: identity map, which returns the noisy input state unchanged. Markers indicate the best epoch of each run. The models are evaluated every five epochs on fixed validation pairs of all steps $t=1,\dots,10$, with $T=10$, the linear schedule from $0.02$ to $0.10$, Adam with a learning rate of $0.01$, and early stopping after eight evaluations without improvement. The vertical axis is logarithmic.}
\label{fig:curves}
\end{figure*}

\begin{table*}[t]
\small
\caption{Trained denoisers with the two codecs, mean and standard deviation over the ten seeds. Best epoch: epoch of the lowest validation infidelity; reduction: decrease of the test infidelity relative to the identity map; $\bar p_0$: mean probability of the ancilla outcome $0$ on the test pairs; $P_{\mathrm{acc}}$: mean acceptance probability of Eq.~\eqref{eq:acceptance} along the sampling trajectories, averaged over the $512$ initial states of every family and over the $17$ families and the seeds. Without branch selection, no sample is rejected, so that $P_{\mathrm{acc}}=1$; the classical control has no ancilla.}\label{tab:training}
\begin{tabular*}{\tblwidth}{@{}LLRRRRRR@{}}
\toprule
Codec & Denoiser & Parameters & Best epoch & Test infidelity & Reduction & $\bar p_0$ & $P_{\mathrm{acc}}$\\
\midrule
Min--max & Identity map & -- & -- & $0.0842$ & -- & -- & --\\
 & Post-selection & $11{,}520$ & $252\pm95$ & $0.0315\pm0.0040$ & $62.6\%$ & $0.585\pm0.039$ & $3.0\times10^{-7}$\\
 & No selection & $11{,}520$ & $10\pm12$ & $0.0848\pm0.0000$ & $-0.8\%$ & $1.000$ & $1$\\
 & Classical control & $11{,}458$ & $198\pm36$ & $0.00225\pm0.00002$ & $97.3\%$ & -- & --\\
\midrule
Sign-split & Identity map & -- & -- & $0.0929$ & -- & -- & --\\
 & Post-selection & $13{,}800$ & $130\pm66$ & $0.0721\pm0.0009$ & $22.4\%$ & $0.712\pm0.021$ & $9.8\times10^{-5}$\\
 & No selection & $13{,}800$ & $5\pm0$ & $0.0935\pm0.0000$ & $-0.7\%$ & $1.000$ & $1$\\
 & Classical control & $13{,}862$ & $270\pm120$ & $0.0219\pm0.0004$ & $76.5\%$ & -- & --\\
\bottomrule
\end{tabular*}
\end{table*}

Figure~\ref{fig:curves} illustrates the validation infidelity during training, and Table~\ref{tab:training} summarizes the trained denoisers. With post-selection, the validation infidelity of the circuit exhibits a steep decline during the first $20$ epochs, followed by a gradual decline until early stopping ends the run, after $125$ to $390$ epochs with the min--max codec and after $105$ to $300$ epochs with the sign-split codec. On the test pairs, the circuit reaches an infidelity of $0.0315\pm0.0040$ with the min--max codec and $0.0721\pm0.0009$ with the sign-split codec, which lie $62.6\%$ and $22.4\%$ below the infidelity of the identity map, the map that returns the noisy input state unchanged. The spread with the min--max codec stems from the seed $13$, whose run stops at $0.0426$, whereas the other nine seeds reach $0.0295$ to $0.0312$, that is, $63\%$ to $65\%$ below the identity map. With a patience of $16$ instead of eight evaluations, the run of the seed $13$ continues to the limit of $400$ epochs and reaches $0.0424$, so that this seed converges to a worse solution and early stopping does not explain its higher infidelity. By contrast, the circuit without branch selection remains at the level of the identity map: its best validation infidelity occurs after $5$ to $40$ epochs, early stopping ends every run after $45$ to $80$ epochs, and the ancilla returns the outcome $0$ with probability $1.000$. The circuit therefore stays at the initialization of Proposition~\ref{prop:init}, which suggests that, without selection, the measurement only mixes the two output branches and that no mixture of them brings the output closer to the target than the unchanged input. The parameter-matched classical control, with $11{,}458$ parameters against the $11{,}520$ angles of the circuit, reaches $0.00225\pm0.00002$ and $0.0219\pm0.0004$, which are lower than the infidelities of the post-selected circuit by factors of $14$ and $3.3$, and it trains in about $2$~min, against $41$ and $79$~min for the circuit (Section~\ref{sec:setup_env}).

The PennyLane implementation reproduces these runs. Over the $12$ runs that both implementations trained, the test infidelities differ by at most $8.1\times10^{-8}$, every pair of runs selects the same best epoch, and the utility scores of Section~\ref{sec:res_utility} for the PennyLane runs agree with those of the PyTorch runs of the same seeds, with a mean TSTR difference of $-0.002\pm0.009$ for the post-selected circuit with the min--max codec. Since both implementations ran with the same groups of processes and the same seeds, they also drew the same noise during training.

The mean probability $\bar p_0$ of the ancilla outcome $0$ on the test pairs equals $0.585$ with the min--max codec and $0.712$ with the sign-split codec, but it hides a strong dependence on the step. Along the sampling trajectories of Algorithm~\ref{alg:sample}, $p_0^{(t)}$ stays above $0.95$ for $t=10$ to $5$ and falls to $0.05$, $0.03$, $0.01$, and $0.006$ for $t=4$ to $1$ (min--max codec, seed $42$, Startpage). As a result, the acceptance probability of Eq.~\eqref{eq:acceptance} drops from $0.91$ at initialization (Appendix~\ref{app:numerics}) to $3.0\times10^{-7}$ after training with the min--max codec and to $9.8\times10^{-5}$ with the sign-split codec. In other words, the trained circuit achieves its denoising in the last steps through ancilla outcomes of low probability, and a quantum device would need about $3.4\times10^{6}$ and $1.0\times10^{4}$ runs per accepted record.

\subsection{Utility of the synthetic records}\label{sec:res_utility}

\begin{table*}[t]
\small
\caption{Classification utility (macro-F1, mean of the random forest, the SVM, and the MLP) of the synthetic records of all $17$ families, mean and standard deviation over the ten seeds. The classifiers trained on the real records alone reach $0.635$. The round-trip ceilings of Section~\ref{sec:encoding} are averaged over eight length draws. Random states: the initial states of sampling, decoded without any denoising step (Section~\ref{sec:setup_refs}). Classical latent DDPM: the baseline of Section~\ref{sec:setup_ddpm}, which generates the latent vectors directly and samples from Gaussian noise or, with the sampler of \citet{bao2025generating}, from training records noised to step $T/2$.}\label{tab:utility}
\begin{tabular*}{\tblwidth}{@{}LLRRRR@{}}
\toprule
Model & Variant & Parameters & Aug & TSTR & TRTS\\
\midrule
Round-trip ceiling & Min--max & -- & $0.634\pm0.002$ & $0.605\pm0.002$ & $0.537\pm0.003$\\
 & Sign-split & -- & $0.627\pm0.003$ & $0.593\pm0.003$ & $0.521\pm0.004$\\
Random states & Min--max & -- & $0.581\pm0.002$ & $0.076\pm0.014$ & $0.020\pm0.001$\\
 & Sign-split & -- & $0.626\pm0.006$ & $0.039\pm0.007$ & $0.038\pm0.001$\\
\midrule
Quantum denoiser, min--max & Post-selection & $11{,}520$ & $0.593\pm0.005$ & $0.114\pm0.015$ & $0.028\pm0.002$\\
 & No selection & $11{,}520$ & $0.581\pm0.003$ & $0.070\pm0.007$ & $0.020\pm0.001$\\
 & Classical control & $11{,}458$ & $0.624\pm0.005$ & $0.218\pm0.025$ & $0.167\pm0.016$\\
Quantum denoiser, sign-split & Post-selection & $13{,}800$ & $0.625\pm0.002$ & $0.036\pm0.006$ & $0.041\pm0.001$\\
 & No selection & $13{,}800$ & $0.626\pm0.004$ & $0.041\pm0.007$ & $0.038\pm0.001$\\
 & Classical control & $13{,}862$ & $0.617\pm0.004$ & $0.367\pm0.008$ & $0.317\pm0.017$\\
\midrule
Classical latent DDPM & Full U-Net & $1{,}185{,}089$ & $0.631\pm0.003$ & $0.531\pm0.006$ & $0.529\pm0.009$\\
 & Small U-Net & $11{,}404$ & $0.620\pm0.003$ & $0.442\pm0.007$ & $0.381\pm0.010$\\
 & Full U-Net, start at $T/2$ & $1{,}185{,}089$ & $0.632\pm0.003$ & $0.531\pm0.006$ & $0.528\pm0.009$\\
\bottomrule
\end{tabular*}
\end{table*}

Table~\ref{tab:utility} places the classification utility of the generated records between the two bounds of Section~\ref{sec:setup_refs}, the decoded random states and the round-trip ceiling of each codec. The random states reach TSTR scores of $0.076\pm0.014$ and $0.039\pm0.007$, close to the macro-F1 score of $0.051$ that uniformly random labels obtain on the imbalanced real test records of the $17$ families. We attribute the small excess of the min--max codec to the decoded lengths, which follow the training lengths of each family (Proposition~\ref{prop:length}) and therefore carry a small amount of class information. Their TRTS scores of $0.020\pm0.001$ and $0.038\pm0.001$ lie below the value of $0.059$ of uniformly random labels on the balanced synthetic records, which indicates that the classifiers trained on real records assign most random states to a few families. The ceilings lie at $0.605$ and $0.593$ in TSTR and at $0.537$ and $0.521$ in TRTS, which leaves a wide range in which a denoiser can add class information to the records.

Within this range, the records of the post-selected circuit remain close to the random states. With the min--max codec, TSTR reaches $0.114\pm0.015$, which exceeds the random states in every seed, by $0.039\pm0.023$ (Wilcoxon $p=0.002$), and TRTS reaches $0.028\pm0.002$ against $0.020\pm0.001$, again in every seed. The circuit therefore covers $7\%$ of the TSTR range between the random states and the ceiling. With the sign-split codec, the Aug and TSTR scores of the circuit do not differ from those of the random states, with differences of $-0.002\pm0.005$ ($p=0.38$) and $-0.004\pm0.008$ ($p=0.28$), and TRTS exceeds them by $0.003\pm0.001$. The circuit without branch selection scores like the random states with both codecs, with differences of at most $0.006$ and $p\geq0.10$ in every setting, in agreement with its training at the identity map.

No generator of this study raises the Aug score above the real-only score of $0.635$. With the min--max codec, the records of the post-selected circuit lower it to $0.593$, which is less than the drop to $0.581$ caused by the random states, by $0.012\pm0.007$ in every seed, and with the sign-split codec, they lower it to $0.625$, as the random states do ($0.626$). Synthetic records that carry little class information thus act as label noise for the classifiers, and the size of this cost depends more on the codec than on the denoiser.

The classical control indicates how much of this range a classical map of the same size covers under the same formulation, since it shares the loss, the ten steps, the initial states, and the decoder of the circuit. With the min--max codec, the control reaches TSTR and TRTS scores of $0.218\pm0.025$ and $0.167\pm0.016$, and with the sign-split codec, it reaches $0.367\pm0.008$ and $0.317\pm0.017$, so that it covers $27\%$ and $59\%$ of the range between the random states and the ceiling in TSTR. The control exceeds the post-selected circuit in every seed, in TSTR by $0.104\pm0.027$ with the min--max codec and $0.332\pm0.009$ with the sign-split codec, and in TRTS by $0.138\pm0.015$ and $0.276\pm0.016$. Hence, the min--max codec restricts both denoisers, whereas with the sign-split codec a classical map of the same size recovers a large part of the class information that remains out of reach of the circuit.

The classical latent DDPM indicates the utility that a generator of the same latent vectors attains without the amplitude encoding and the ten deterministic steps. The full U-Net reaches TSTR and TRTS scores of $0.531\pm0.006$ and $0.529\pm0.009$, and the small U-Net, whose $11{,}404$ parameters are comparable to the $11{,}520$ angles of the circuit, reaches $0.442\pm0.007$ and $0.381\pm0.010$. The small U-Net exceeds the post-selected circuit with the min--max codec in every seed, by $0.328\pm0.017$ in TSTR and $0.353\pm0.011$ in TRTS, and it also exceeds the classical control with the sign-split codec in every seed, by $0.075\pm0.011$ and $0.064\pm0.019$. With the sign-split codec, the classical control therefore comes within $0.08$ of the small U-Net in TSTR, whereas the post-selected circuit remains $0.41$ below it, so that most of the gap between the circuit and the classical diffusion model lies between the circuit and a classical map under the same formulation. Adding the DDPM records to the real training data lowers the macro-F1 score slightly, to $0.631$ and $0.620$. The sampler of \citet{bao2025generating}, which starts from training records noised to step $T/2$, scores like the sampler that starts from Gaussian noise, with a TSTR difference of $0.000\pm0.006$ ($p=0.85$) for the full U-Net. Every DDPM run ended through early stopping, after $50{,}500$ to $93{,}500$ steps for the full U-Net and $100{,}000$ to $159{,}500$ steps for the small U-Net, well before the limit of $200{,}000$ steps, so that these scores describe converged models.

The ceilings themselves depend on how much family information the latent vectors retain. To place them, we trained the classifiers of Section~\ref{sec:measures} on opcode frequencies of the same programs, with the same split and the $1{,}217$ opcodes that occur in at least ten training programs. The relative opcode frequencies of the first $7{,}000$ opcodes, the part of a program that the CAE reads, reach a macro-F1 score of $0.682$, those of the whole program $0.696$, and these frequencies together with those of the $1{,}000$ most frequent opcode bigrams $0.749$, against $0.635$ for the $32$ latent features. The CAE therefore retains most of the family information of the opcode frequencies that it reads, and the remaining gap to richer representations of the programs bounds every generator of these latent vectors from above, the classical DDPM included.

\subsection{Distribution fidelity, memorization, and membership}\label{sec:res_privacy}

\begin{table*}[t]
\small
\caption{Distribution fidelity and memorization of the synthetic records of all $17$ families, mean and standard deviation over the ten seeds. MMD$^2$ and $r_{\mathrm{mem}}$ are averaged over the families; $q_{10}$ is the fraction of synthetic records that lie closer to the training data than $90\%$ of the real test records (Eq.~\eqref{eq:mem}), and $q_{10}^{\mathrm{te}}$ is the same fraction computed with the distances to the nearest real test record. Random states: the initial states of sampling, decoded without any denoising step.}\label{tab:privacy}
\begin{tabular*}{\tblwidth}{@{}LLRRRR@{}}
\toprule
Model & Variant & MMD$^2$ & $r_{\mathrm{mem}}$ & $q_{10}$ & $q_{10}^{\mathrm{te}}$\\
\midrule
Random states & Min--max & $2.237\pm0.004$ & $46.7\pm0.1$ & $0.0\%$ & $0.0\%$\\
 & Sign-split & $1.214\pm0.002$ & $4.90\pm0.02$ & $0.0\%$ & $0.0\%$\\
\midrule
Quantum denoiser, min--max & Post-selection & $1.787\pm0.075$ & $34.3\pm1.8$ & $0.0\%$ & $0.0\%$\\
 & No selection & $2.237\pm0.004$ & $46.7\pm0.1$ & $0.0\%$ & $0.0\%$\\
 & Classical control & $0.904\pm0.030$ & $3.11\pm0.06$ & $0.0\%$ & $0.0\%$\\
Quantum denoiser, sign-split & Post-selection & $1.171\pm0.010$ & $4.77\pm0.03$ & $0.0\%$ & $0.0\%$\\
 & No selection & $1.214\pm0.002$ & $4.91\pm0.02$ & $0.0\%$ & $0.0\%$\\
 & Classical control & $0.250\pm0.020$ & $2.67\pm0.06$ & $1.4\%$ & $1.1\%$\\
\midrule
Classical latent DDPM & Full U-Net & $0.032\pm0.003$ & $2.25\pm0.06$ & $0.05\%$ & $0.01\%$\\
 & Small U-Net & $0.089\pm0.006$ & $3.33\pm0.06$ & $0.0\%$ & $0.0\%$\\
 & Full U-Net, start at $T/2$ & $0.031\pm0.003$ & $2.25\pm0.06$ & $0.05\%$ & $0.01\%$\\
\bottomrule
\end{tabular*}
\end{table*}

Table~\ref{tab:privacy} reports the fidelity and memorization measures of Section~\ref{sec:measures}. The post-selected circuit moves its records toward the real distribution: with the min--max codec, it lowers the squared MMD from $2.237$ for the random states to $1.787$, that is, by $20\%$ and in every seed ($p=0.002$), and the memorization ratio from $46.7$ to $34.3$, and with the sign-split codec, it lowers the squared MMD from $1.214$ to $1.171$, again in every seed. A synthetic record of the circuit nevertheless lies $4.8$ to $34$ times farther from its nearest training record than an unseen real record does, and no synthetic record falls below the $10$th percentile of the reference distances ($q_{10}=0$). These measures exclude the copying of training records. However, the random states, which contain no learned information, also reach $q_{10}=0$, so that the distance of the records from the training data reflects their distance to the real distribution as a whole, and these measures alone do not establish a favorable trade-off between privacy and utility. The classical control lies closer to the real distribution (squared MMD of $0.90$ and $0.25$), at memorization ratios of $3.1$ and $2.7$ and with $q_{10}\leq1.4\%$ against $q_{10}^{\mathrm{te}}\leq1.1\%$. For the family Allaple.A, the classical control with the sign-split codec reaches $q_{10}=10.5\%$ on average over the seeds, but $q_{10}^{\mathrm{te}}=14.1\%$, so that these records lie close to the real records of the family in general rather than to individual training records.

The classical DDPM generates records that lie much closer to the real distribution, with a squared MMD of $0.032\pm0.003$ for the full U-Net and $0.089\pm0.006$ for the small U-Net, and these records still lie $2.3$ and $3.3$ times farther from their nearest training record than unseen real records do, with $q_{10}$ of $0.05\%$ and $0\%$. The sampler that starts from noised training records gives the same values. At step $T/2$, the linear schedule of the DDPM keeps a fraction $\sqrt{\bar\alpha_{T/2}}=0.28$ of the scaled record against Gaussian noise with a standard deviation of $0.96$, so that the reverse process retains little of the individual record from which it starts. At the scale of our data, the distance-based measures therefore detect no copying even for a generator whose records approach the real distribution.

\begin{table*}[t]
\small
\caption{Membership tests (AUC; $0.5$ corresponds to a random guess), mean over the ten seeds. Loss: the loss-based test of Eq.~\eqref{eq:lossmia} against the trained generator, with members drawn from $\mathcal{D}_{\mathrm{tr}}$ and the test records as non-members; its $95\%$ bootstrap interval spans about $\pm0.013$ in every row. Length: the tests of Eqs.~\eqref{eq:lenmia} (nearest) and \eqref{eq:lenlr} (likelihood) on the released lengths, averaged over the $17$ families, with the lengths drawn without the perturbation ($\tau_k=0$) or with the bandwidth of Eq.~\eqref{eq:bandwidth} (smoothed). The DDPM has no decoder that draws lengths, so that its length test uses the lengths of its records under the min--max codec.}\label{tab:mia}
\begin{tabular*}{\tblwidth}{@{}LLRRRR@{}}
\toprule
 & & & \multicolumn{3}{c}{Length}\\
Model & Variant & Loss & $\tau_k=0$ & Smoothed, nearest & Smoothed, likelihood\\
\midrule
Quantum denoiser, min--max & Post-selection & $0.492$ & $0.80$ & $0.51$ & $0.52$\\
 & No selection & $0.501$ & $0.80$ & $0.51$ & $0.52$\\
 & Classical control & $0.477$ & $0.80$ & $0.51$ & $0.52$\\
Quantum denoiser, sign-split & Post-selection & $0.477$ & $0.48$ & $0.48$ & $0.48$\\
 & No selection & $0.492$ & $0.48$ & $0.48$ & $0.48$\\
 & Classical control & $0.478$ & $0.50$ & $0.50$ & $0.51$\\
\midrule
Classical latent DDPM & Full U-Net & $0.484$ & \multicolumn{3}{c}{$0.51$ (lengths of the generated records)}\\
 & Small U-Net & $0.478$ & \multicolumn{3}{c}{$0.51$ (lengths of the generated records)}\\
\bottomrule
\end{tabular*}
\end{table*}

Table~\ref{tab:mia} reports the membership tests. The loss-based test, which requires the trained generator, reaches AUCs between $0.48$ and $0.50$ for every generator, so that no generator gives this adversary an advantage, whether it learns little, as the post-selected circuit, or approaches the real distribution, as the full DDPM. The values below $0.5$, whose intervals exclude $0.5$ for several generators, indicate that the training records have a slightly higher loss than the test records under every generator, which suggests a property of the latent vectors rather than of a generator; an adversary that reverses the score would reach at most $0.52$.

The smoothed lengths remove the leak that Proposition~\ref{prop:length} identifies. Since the decoded lengths of the min--max codec do not depend on the generated states, the three min--max denoisers and the random states release lengths with the same distribution. With $\tau_k=0$, these lengths repeat the length of $59.5\%$ of the training records of a family, on average over the $17$ families, against $1.8\%$ of its held-out records, and the nearest adversary reaches an AUC of $0.80$, with values up to $1.00$ for single families and seeds. With the bandwidth of Eq.~\eqref{eq:bandwidth}, the two rates become $2.3\%$ and $2.3\%$, and the AUC drops to $0.51$ for the nearest adversary and to $0.52$ for the adversary that knows $\tau_k$. The most exposed family is Startpage, the smallest one, with an AUC of $0.61$ and a $95\%$ interval of $[0.52,0.71]$ for the second adversary. This change requires no retraining and leaves the round-trip ceilings nearly unchanged: with the min--max codec, TSTR equals $0.605\pm0.002$ with smoothing and $0.602\pm0.002$ without it, and TRTS equals $0.537\pm0.003$ and $0.543\pm0.003$. The lengths of the sign-split records reveal little about membership even with $\tau_k=0$ (AUC of $0.48$ to $0.50$), since their decoded length is at most $\rho$ (Proposition~\ref{prop:length}). The records of the classical DDPM, which generates every length together with the rest of the latent vector, reach an AUC of $0.51$ under the min--max codec, so that the smoothed bootstrap brings the quantum denoiser to the level of a generator without this channel.

\begin{table*}[t]
\small
\caption{Lengths drawn from the $\varepsilon$-differentially private histogram of Eq.~\eqref{eq:dplengths} (min--max codec, $B=64$ bins), against the two rules of Eq.~\eqref{eq:decode}. AUC: length membership tests of Eqs.~\eqref{eq:lenmia} and \eqref{eq:lenlr}, averaged over the $17$ families ($512$ released lengths per family; for the histogram, one draw with the seed $42$). Aug, TSTR, and TRTS: round-trip ceiling with the lengths of each rule, averaged over three draws for the histogram and eight draws for the other rules.}\label{tab:dp}
\begin{tabular*}{\tblwidth}{@{}LRRRRR@{}}
\toprule
Length rule & AUC, nearest & AUC, likelihood & Aug & TSTR & TRTS\\
\midrule
Training lengths ($\tau_k=0$) & $0.80$ & $0.80$ & $0.635\pm0.002$ & $0.602\pm0.002$ & $0.543\pm0.003$\\
Smoothed bootstrap & $0.51$ & $0.52$ & $0.634\pm0.002$ & $0.605\pm0.002$ & $0.537\pm0.003$\\
\midrule
Histogram, $\varepsilon=10$ & $0.52$ & $0.51$ & $0.633\pm0.003$ & $0.604\pm0.003$ & $0.517\pm0.001$\\
Histogram, $\varepsilon=3$ & $0.52$ & $0.51$ & $0.632\pm0.002$ & $0.601\pm0.004$ & $0.514\pm0.001$\\
Histogram, $\varepsilon=1$ & $0.51$ & $0.52$ & $0.630\pm0.002$ & $0.597\pm0.004$ & $0.498\pm0.003$\\
Histogram, $\varepsilon=0.3$ & $0.51$ & $0.51$ & $0.626\pm0.003$ & $0.576\pm0.004$ & $0.460\pm0.002$\\
Histogram, $\varepsilon=0.1$ & $0.51$ & $0.51$ & $0.623\pm0.002$ & $0.550\pm0.001$ & $0.386\pm0.003$\\
\bottomrule
\end{tabular*}
\end{table*}

Table~\ref{tab:dp} reports the lengths drawn from the differentially private histogram of Eq.~\eqref{eq:dplengths}. For every privacy budget from $\varepsilon=10$ to $\varepsilon=0.1$, the mean AUC of both adversaries lies between $0.51$ and $0.52$, the level of the smoothed bootstrap. The largest values of single families occur for Startpage ($0.58$ to $0.64$), whose $51$ held-out records give its AUC a standard error of about $0.05$. The formal guarantee therefore keeps the empirical protection of the smoothed bootstrap, and its cost in utility depends on the budget. With $\varepsilon\geq1$, the round-trip ceiling stays within $0.008$ of the smoothed bootstrap in TSTR and within $0.039$ in TRTS, whereas $\varepsilon=0.1$ lowers it to $0.550$ and $0.386$. The noise of the counts probably places released lengths in bins that few or no training records of the family occupy, which weakens the class information that the lengths carry. Since the histogram replaces only the length rule of the decoder, the records of every denoiser inherit its guarantee without retraining.

\subsection{Class information in the generated states}\label{sec:res_states}

To locate the source of the limited utility, we compare the generated magnitudes with the mean encoded state of each class. For a vector $y\in\mathbb{R}^{D}_{\geq0}$ of unit norm generated for class $k$, we compute
\begin{equation}
\begin{aligned}
F_{\mathrm{own}}(y)&=\big(m_k^{\top}y\big)^{2},\qquad \hat k(y)=\arg\max_{c}\,\big(m_c^{\top}y\big)^{2},\\
m_c&=\frac{\sum_{j:\,k_j=c}a(z_j)}{\big\lVert\sum_{j:\,k_j=c}a(z_j)\big\rVert_2}.
\end{aligned}\label{eq:classfid}
\end{equation}
In these quantities, $m_c$ represents the normalized mean of the encoded training states of class $c$, $F_{\mathrm{own}}$ measures how close $y$ lies to the mean state of its own class, and $\hat k(y)$ denotes the class of the closest mean state. We report $F_{\mathrm{own}}$ and the nearest-mean accuracy, which is the fraction of vectors with $\hat k(y)=k$ over the $K=17$ classes.

\begin{table}[t]
\small
\caption{Class information in the generated magnitudes, pooled over the ten seeds and the $17$ classes ($512$ records per class and seed), measured with Eq.~\eqref{eq:classfid} against the mean encoded states of the training records. The magnitudes of random states are those of the initial states $\ket{z_T}$ before denoising. The chance level of the nearest-mean accuracy equals $1/17=0.059$.}\label{tab:states}
\begin{tabular*}{\tblwidth}{@{}LRRRR@{}}
\toprule
 & \multicolumn{2}{c}{Min--max} & \multicolumn{2}{c}{Sign-split}\\
Records & $F_{\mathrm{own}}$ & Accuracy & $F_{\mathrm{own}}$ & Accuracy\\
\midrule
Real test records & $0.982$ & $0.406$ & $0.675$ & $0.477$\\
Magnitudes of random states & $0.681$ & $0.059$ & $0.455$ & $0.061$\\
Post-selection & $0.816$ & $0.065$ & $0.475$ & $0.059$\\
No selection & $0.681$ & $0.058$ & $0.454$ & $0.059$\\
Classical control & $0.996$ & $0.173$ & $0.805$ & $0.392$\\
\bottomrule
\end{tabular*}
\end{table}

Table~\ref{tab:states} indicates that the post-selected circuit moves the random initial states toward the data without giving them the identity of their class. With the min--max codec, the magnitudes of random states have $F_{\mathrm{own}}=0.681$, the circuit raises this value to $0.816$, and real test records reach $0.982$. However, the nearest-mean accuracy of the generated states remains at $0.065$, close to the chance level of $0.059$, whereas real test records reach $0.406$. With the sign-split codec, the circuit leaves the states almost where they start ($0.475$ against $0.455$), and their accuracy equals the chance level. The circuit without branch selection leaves the random states unchanged with both codecs, in agreement with its training. By contrast, the classical control produces states closer to the mean states than real records ($0.996$ and $0.805$) and recovers part of the class identity ($0.173$ and $0.392$).

The geometry of the encoded states offers one explanation. With the min--max codec, two training records of different classes have a mean fidelity of $0.944$, even slightly higher than two records of the same class ($0.938$), so that the infidelity loss rewards a class-specific output only through small differences between nearly parallel states. With the sign-split codec, the classes separate, with a mean fidelity of $0.239$ between classes against $0.449$ within a class, but the post-selected circuit removes only $22\%$ of the infidelity of the identity map, which leaves the generated states close to random states.

Two further diagnostics exclude other explanations of the limited utility. First, sampling starts from Haar-random states, whereas training presents the circuit at step $T$ with records noised by the forward process. For $4{,}000$ training records of each codec, the magnitudes of the noised states at $t=T$ and those of as many Haar-random states differ by a squared MMD of $0.0013$ and $0.0012$, close to the value of $0.0012$ between two independent sets of Haar-random states of the same size, and the nearest-mean accuracy of the noised states ($0.167$ and $0.083$) lies within $0.006$ of that of the Haar-random states, which we evaluate with the labels of the same records. Since these records are drawn at random from the training set, their labels follow its imbalanced class frequencies, and the accuracy of states without class information can therefore differ from the chance level of $1/17$ that applies to the balanced records of Table~\ref{tab:states}. The initial states of sampling therefore match the distribution on which the circuit is trained. Second, the gradients of the post-selected loss could vanish with the number of qubits \citep{mcclean2018barren}. Over $20$ random initializations with our standard deviation of $0.1$, the mean variance of the partial derivatives of the shared blocks $U_2^{(t)}$ and $U_3^{(t)}$ equals $6.8\times10^{-5}$ and $8.2\times10^{-5}$ with five data qubits and $6.0\times10^{-5}$ and $7.0\times10^{-5}$ with six, so that the additional qubit of the sign-split codec lowers it by $12\%$ to $14\%$, and by $20\%$ for the class-specific blocks. With angles drawn with a standard deviation of $\pi$, the variance of the shared blocks drops by two orders of magnitude, and the additional qubit lowers the variance of every block by a factor of $3.6$ to $3.8$, as expected for random circuits. The narrow initialization of Proposition~\ref{prop:init} therefore keeps the circuit away from this regime, and vanishing gradients do not explain why the circuit learns less with the sign-split codec.

\subsection{Samplers, weighting of the released records, and acceptance penalty}\label{sec:res_sampling}

\begin{table*}[t]
\small
\caption{Samplers of Eq.~\eqref{eq:samplers} and weighting of the released records, mean and standard deviation over the ten seeds (all $17$ families). Deterministic: the chain of Algorithm~\ref{alg:sample}. Weighted: the records of the deterministic chain drawn with replacement with probabilities proportional to $P_{\mathrm{acc}}$, which approximate the release of a quantum device. $P_{\mathrm{acc}}$ and the effective sample size (ESS, out of $512$ initial states per family) are averaged over the families and the seeds.}\label{tab:sampling}
\footnotesize
\begin{tabular*}{\tblwidth}{@{}LLLRRRRRR@{}}
\toprule
Codec & Denoiser & Sampler & $P_{\mathrm{acc}}$ & ESS & Aug & TSTR & TRTS & MMD$^2$\\
\midrule
Min--max & Post-selection & Deterministic & $3.0\times10^{-7}$ & $361$ & $0.593\pm0.005$ & $0.114\pm0.015$ & $0.028\pm0.002$ & $1.79$\\
 & & Ancestral & $1.9\times10^{-7}$ & $405$ & $0.588\pm0.004$ & $0.098\pm0.014$ & $0.025\pm0.001$ & $1.96$\\
 & & DDIM & $1.7\times10^{-7}$ & $396$ & $0.588\pm0.004$ & $0.093\pm0.014$ & $0.023\pm0.002$ & $2.00$\\
 & & Weighted & -- & -- & $0.595\pm0.006$ & $0.112\pm0.009$ & $0.032\pm0.003$ & $1.66$\\
 & Classical control & Deterministic & -- & -- & $0.624\pm0.005$ & $0.218\pm0.025$ & $0.167\pm0.016$ & $0.90$\\
 & & Ancestral & -- & -- & $0.621\pm0.003$ & $0.234\pm0.020$ & $0.186\pm0.020$ & $0.56$\\
 & & DDIM & -- & -- & $0.621\pm0.003$ & $0.243\pm0.010$ & $0.186\pm0.022$ & $0.52$\\
\midrule
Sign-split & Post-selection & Deterministic & $9.8\times10^{-5}$ & $487$ & $0.625\pm0.002$ & $0.036\pm0.006$ & $0.041\pm0.001$ & $1.17$\\
 & & Ancestral & $8.5\times10^{-5}$ & $494$ & $0.627\pm0.002$ & $0.036\pm0.007$ & $0.040\pm0.001$ & $1.18$\\
 & & DDIM & $8.3\times10^{-5}$ & $492$ & $0.625\pm0.002$ & $0.036\pm0.005$ & $0.040\pm0.001$ & $1.18$\\
 & & Weighted & -- & -- & $0.620\pm0.003$ & $0.039\pm0.006$ & $0.041\pm0.002$ & $1.17$\\
 & Classical control & Deterministic & -- & -- & $0.617\pm0.004$ & $0.367\pm0.008$ & $0.317\pm0.017$ & $0.25$\\
 & & Ancestral & -- & -- & $0.619\pm0.003$ & $0.384\pm0.007$ & $0.336\pm0.012$ & $0.20$\\
 & & DDIM & -- & -- & $0.616\pm0.003$ & $0.382\pm0.008$ & $0.318\pm0.008$ & $0.18$\\
\bottomrule
\end{tabular*}
\end{table*}

Table~\ref{tab:sampling} tests whether the deterministic chain of Algorithm~\ref{alg:sample} limits the utility of the records. With the classical control, both samplers of Eq.~\eqref{eq:samplers} raise the TSTR score in every seed ($p=0.002$), by $0.016\pm0.010$ with the ancestral sampler and $0.026\pm0.019$ with the DDIM sampler for the min--max codec and by $0.017\pm0.007$ and $0.014\pm0.006$ for the sign-split codec, and they lower the squared MMD from $0.90$ to $0.56$ and $0.52$ and from $0.25$ to $0.20$ and $0.18$. The deterministic chain therefore costs the control part of its utility, but with the best sampler the control still reaches only $0.243$ and $0.384$ in TSTR, far below the ceilings of $0.605$ and $0.593$. For the post-selected circuit, both samplers lower the TSTR score with the min--max codec, by $0.017\pm0.019$ and $0.021\pm0.019$ ($p=0.010$ for each), and raise the squared MMD from $1.79$ to $1.96$ and $2.00$, whereas they leave the scores of the sign-split codec unchanged, with TSTR differences of at most $0.001$. The noise that both samplers add between the steps likely moves the states back toward random states, which the circuit cannot denoise again in the remaining steps. The sampler therefore does not explain the gap between the circuit and its classical control.

Post-selection on hardware releases a record only for the accepted runs, which weights the initial states with their acceptance probability (Section~\ref{sec:sampling}). Over the $512$ initial states of a family, the logarithm $\log_{10}P_{\mathrm{acc}}$ of the min--max circuit has a standard deviation of $0.26$, so that the effective sample size of the weights equals $361$ of $512$ on average, with a minimum of $326$, and the ten records of highest weight carry $6.7\%$ of the total weight, against $2.0\%$ for equal weights. With the sign-split codec, the effective sample size reaches $487$. The weighted records score like the unweighted ones, with TSTR differences of $-0.002\pm0.015$ ($p=0.63$) and $0.003\pm0.006$ ($p=0.28$), and the weighting lowers the squared MMD of the min--max records slightly, from $1.79$ to $1.66$. For the trained circuits, post-selection therefore changes the distribution of the released records little, so that the scores of the simulation, which returns every initial state once, also describe the records of a quantum device.

\begin{table*}[t]
\small
\caption{Post-selected circuits trained with the penalized objective of Eq.~\eqref{eq:pacc_loss}, mean and standard deviation over the seeds $42$, $7$, and $13$ (all $17$ families); $\lambda=0$ denotes the runs of Table~\ref{tab:training} with the same seeds. Runs per record: the expected number of runs per accepted record, $1/\mathbb{E}[P_{\mathrm{acc}}]$, with $P_{\mathrm{acc}}$ averaged over the initial states, the families, and the seeds.}\label{tab:penalty}
\footnotesize
\begin{tabular*}{\tblwidth}{@{}LLRRRRRRR@{}}
\toprule
Codec & $\lambda$ & Test infidelity & Reduction & $P_{\mathrm{acc}}$ & Runs per record & ESS & TSTR & TRTS\\
\midrule
Min--max & $0$ & $0.0341\pm0.0073$ & $59.4\%$ & $1.6\times10^{-7}$ & $6.3\times10^{6}$ & $366$ & $0.110\pm0.012$ & $0.028\pm0.003$\\
 & $0.001$ & $0.0366\pm0.0059$ & $56.5\%$ & $2.9\times10^{-4}$ & $3.5\times10^{3}$ & $405$ & $0.108\pm0.010$ & $0.024\pm0.001$\\
 & $0.01$ & $0.0370\pm0.0018$ & $56.0\%$ & $1.7\times10^{-2}$ & $58$ & $435$ & $0.091\pm0.012$ & $0.024\pm0.001$\\
 & $0.1$ & $0.0461\pm0.0020$ & $45.3\%$ & $0.14$ & $7.4$ & $487$ & $0.077\pm0.004$ & $0.020\pm0.001$\\
\midrule
Sign-split & $0$ & $0.0723\pm0.0008$ & $22.2\%$ & $1.3\times10^{-4}$ & $7.8\times10^{3}$ & $487$ & $0.039\pm0.010$ & $0.041\pm0.002$\\
 & $0.001$ & $0.0722\pm0.0006$ & $22.3\%$ & $1.8\times10^{-3}$ & $5.5\times10^{2}$ & $490$ & $0.041\pm0.007$ & $0.040\pm0.003$\\
 & $0.01$ & $0.0738\pm0.0008$ & $20.5\%$ & $4.4\times10^{-2}$ & $23$ & $497$ & $0.037\pm0.008$ & $0.040\pm0.002$\\
 & $0.1$ & $0.0935\pm0.0000$ & $-0.7\%$ & $1.00$ & $1.0$ & $512$ & $0.047\pm0.004$ & $0.038\pm0.001$\\
\bottomrule
\end{tabular*}
\end{table*}

Table~\ref{tab:penalty} reports the penalized objective of Eq.~\eqref{eq:pacc_loss}. With the min--max codec, the weights $\lambda=0.001$, $0.01$, and $0.1$ raise the mean acceptance probability from $1.6\times10^{-7}$ to $2.9\times10^{-4}$, $1.7\times10^{-2}$, and $0.14$, so that the expected number of runs per accepted record falls from $6.3\times10^{6}$ to $3.5\times10^{3}$, $58$, and $7.4$. The same weights lower the reduction of the test infidelity from $59.4\%$ to $56.5\%$, $56.0\%$, and $45.3\%$, and the TSTR score from $0.110$ to $0.108$, $0.091$, and $0.077$, the last of which equals the score of the random states ($0.076$). With the sign-split codec, $\lambda=0.001$ and $0.01$ raise the acceptance probability from $1.3\times10^{-4}$ to $1.8\times10^{-3}$ and $4.4\times10^{-2}$ at reductions of $22.3\%$ and $20.5\%$, and $\lambda=0.1$ returns the circuit to the identity map: its best epoch is the first evaluation, the ancilla returns the outcome $0$ with probability $1.000$, and its test infidelity equals that of the circuit without branch selection. In other words, the denoising of the trained circuit relies on the ancilla outcomes of low probability that make post-selection expensive, and the penalty exchanges denoising for acceptance without raising the utility of the records above that of the random states.

\subsection{Sampling under hardware noise}\label{sec:res_noise}

\begin{table*}[t]
\small
\caption{Density-matrix sampling of the post-selected circuits for all $17$ families, with depolarizing noise of probability $p$ after every layer and a read-out with exact probabilities ($S=\infty$) or with $S=1{,}000$ shots. $P_{\mathrm{acc}}$: mean acceptance probability of a sample. The scores are means over the ten seeds; their standard deviation is at most $0.015$ for TSTR, $0.005$ for Aug, and $0.003$ for TRTS. At $p=0$, the values differ slightly from those of Tables~\ref{tab:training} and \ref{tab:utility}, because the initial states and the decoded lengths are drawn separately.}\label{tab:noise}
\begin{tabular*}{\tblwidth}{@{}LRRRRRRRR@{}}
\toprule
Codec & $p$ & $S$ & $P_{\mathrm{acc}}$ & Aug & TSTR & TRTS & MMD$^2$ & $r_{\mathrm{mem}}$\\
\midrule
Min--max & $0$ & $\infty$ & $3.0\times10^{-7}$ & $0.592$ & $0.110$ & $0.028$ & $1.78$ & $34.2$\\
 & $0$ & $1{,}000$ & $3.0\times10^{-7}$ & $0.589$ & $0.113$ & $0.028$ & $1.79$ & $35.3$\\
 & $0.001$ & $\infty$ & $6.5\times10^{-7}$ & $0.624$ & $0.055$ & $0.021$ & $3.23$ & $22.4$\\
 & $0.001$ & $1{,}000$ & $6.5\times10^{-7}$ & $0.619$ & $0.063$ & $0.020$ & $2.88$ & $23.5$\\
 & $0.01$ & $\infty$ & $2.6\times10^{-5}$ & $0.625$ & $0.021$ & $0.019$ & $5.42$ & $26.2$\\
 & $0.01$ & $1{,}000$ & $2.6\times10^{-5}$ & $0.626$ & $0.029$ & $0.015$ & $3.98$ & $27.2$\\
\midrule
Sign-split & $0$ & $\infty$ & $9.9\times10^{-5}$ & $0.625$ & $0.038$ & $0.040$ & $1.17$ & $4.76$\\
 & $0$ & $1{,}000$ & $9.9\times10^{-5}$ & $0.624$ & $0.041$ & $0.041$ & $1.12$ & $4.94$\\
 & $0.001$ & $\infty$ & $1.7\times10^{-4}$ & $0.630$ & $0.018$ & $0.027$ & $2.14$ & $3.50$\\
 & $0.001$ & $1{,}000$ & $1.7\times10^{-4}$ & $0.629$ & $0.023$ & $0.031$ & $1.91$ & $3.64$\\
 & $0.01$ & $\infty$ & $1.3\times10^{-3}$ & $0.635$ & $0.017$ & $0.007$ & $3.04$ & $3.32$\\
 & $0.01$ & $1{,}000$ & $1.3\times10^{-3}$ & $0.632$ & $0.019$ & $0.023$ & $2.39$ & $3.47$\\
\bottomrule
\end{tabular*}
\end{table*}

Table~\ref{tab:noise} reports the density-matrix sampling of Section~\ref{sec:noise} for the post-selected circuits. At $p=0$ and with the exact read-out, this sampler applies the same circuit as the state-vector sampler of Algorithm~\ref{alg:sample}, but it draws its initial states and decoded lengths separately, so that its acceptance probability and scores differ slightly from those of Tables~\ref{tab:training} and \ref{tab:utility}. Depolarizing noise increases the acceptance probability, at $p=0.01$ from $3.0\times10^{-7}$ to $2.6\times10^{-5}$ with the min--max codec and from $9.9\times10^{-5}$ to $1.3\times10^{-3}$ with the sign-split codec, since the noise drives the ancilla toward the maximally mixed state, in which the outcome $0$ has probability $1/2$. The same noise degrades the generated records: at $p=0.01$ and with the exact read-out, TSTR decreases from $0.110$ and $0.038$ to $0.021$ and $0.017$, below the scores of the random states ($0.076$ and $0.039$), TRTS decreases to $0.019$ and $0.007$, the latter close to the value of $0.0065$ of a classifier that assigns every synthetic record to the same family, and the squared MMD increases from $1.78$ to $5.42$ and from $1.17$ to $3.04$. The noise also drives the data register toward the maximally mixed state, whose magnitudes are uniform, so that the records of different families become similar to each other and dissimilar to the real records. This similarity may explain the rise of the Aug score toward the real-only score of $0.635$, since classifiers can separate such records from the real ones without changing their decisions on real records. The memorization ratio decreases under noise, while $q_{10}$ remains zero in every setting. A read-out of $1{,}000$ shots changes the scores at $p=0$ by less than their spread between seeds, in agreement with Lemma~\ref{lem:shots}, which bounds the mean squared error of the magnitudes by $0.031$. Even at $p=0.01$, one accepted record requires about $3.9\times10^{4}$ runs with the min--max codec and $7.9\times10^{2}$ runs with the sign-split codec on average, so that post-selected sampling of the trained circuits would still require a large number of circuit executions on current hardware.

\section{Discussion}\label{sec:discussion}

In this section, we interpret the results of Section~\ref{sec:results}, discuss what they imply for the release of synthetic security data and for the execution of the denoiser on quantum hardware, and state the limitations of our study.

\subsection{From the training objective to class information}\label{sec:disc_learning}

The central observation of our experiments is that the post-selected circuit learns its training objective, while the class information of its records remains close to that of the random states from which sampling starts. The circuit lowers the test infidelity by $62.6\%$ and $22.4\%$ relative to the identity map (Table~\ref{tab:training}), the PennyLane runs reproduce this result to within $8.1\times10^{-8}$, and the circuit moves its records toward the real distribution, with a squared MMD $20\%$ below that of the random states with the min--max codec (Table~\ref{tab:privacy}). Nevertheless, its records cover $7\%$ of the TSTR range between the random states and the round-trip ceiling with the min--max codec and none of it with the sign-split codec, whereas a classical latent DDPM of comparable size reaches a TSTR score of $0.442$ on the same latent vectors (Table~\ref{tab:utility}). These observations are compatible, since the infidelity measures the distance between whole states, while the classification depends on the small part of each state that differs between families.

We attribute this gap to three properties of the model. First, the min--max encoding hides the family in small differences between nearly parallel states: two records of different families have a slightly higher mean fidelity than two records of the same family ($0.944$ against $0.938$). A denoiser can therefore reach a low infidelity by moving every state toward the common region of the data, which the post-selected circuit does ($F_{\mathrm{own}}$ rises from $0.681$ to $0.816$) without giving the states a family identity (nearest-mean accuracy of $0.065$ against a chance level of $0.059$). The sign-split codec separates the families, and the classical control exploits this separation (TSTR of $0.367$), but on this codec the circuit removes only $22\%$ of the infidelity of the identity map and leaves the states close to random ones. Neither a mismatch between the initial states and the training distribution nor vanishing gradients explains this behavior (Section~\ref{sec:res_states}).

Second, the formulation of sampling constrains every denoiser of this framework, including the classical control. The control reaches an infidelity $14$ and $3.3$ times lower than the circuit, yet its TSTR scores of $0.218$ and $0.367$ remain below the ceilings of $0.605$ and $0.593$. Its states lie closer to the mean states of their families than real records do ($F_{\mathrm{own}}$ of $0.996$ against $0.982$ with the min--max codec and $0.805$ against $0.675$ with the sign-split codec), which is the behavior that a deterministic chain of posterior means predicts, since the fidelity-optimal output of every step is a mean direction over the records compatible with its input. The samplers of Table~\ref{tab:sampling} confirm part of this explanation: the noise that they add between the steps raises the TSTR score of the control in every seed and lowers its squared MMD, but with the best sampler the control still remains at $0.243$ and $0.384$, and neither sampler raises the scores of the circuit. The remaining gap between the control and the small U-Net, $0.075$ in TSTR with the sign-split codec, therefore stems from the other differences between them, such as the normalization of the states, the ten steps instead of $1{,}000$, and the infidelity loss, which our experiments do not separate.

Third, the non-unitary step of the circuit relies on post-selection. Without branch selection, the circuit remains at the level of the identity map, which agrees with the argument of \citet{zhang2024generative} that a denoising map must be contractive and that a unitary circuit cannot provide this contraction. With post-selection, the circuit learns, but it concentrates its denoising in the last steps, through ancilla outcomes whose probability falls to $0.006$ (Section~\ref{sec:res_training}). The penalized objective indicates that this concentration is not an accident of the optimization (Table~\ref{tab:penalty}): when training rewards acceptance, the acceptance probability rises by up to six orders of magnitude, but the reduction of the infidelity and the TSTR score fall toward those of the random states, and the strongest penalty returns the sign-split circuit to the identity map. In this design, the denoising and the acceptance of the circuit therefore compete for the same ancilla outcomes.

\subsection{Privacy of the released records}\label{sec:disc_privacy}

The privacy measures of the quantum denoiser appear favorable at first sight: no synthetic record lies closer to the training data than $90\%$ of the real test records ($q_{10}=0$), the loss-based membership test reaches an AUC of at most $0.50$, and the length test reaches $0.51$ once the decoder smooths the lengths (Tables~\ref{tab:privacy} and \ref{tab:mia}). However, the decoded random states, which contain no information learned from the data, reach the same distance values, so that the distance to the training records reflects the distance of the records to the real distribution as a whole. The classical control and the classical DDPM, whose records carry far more class information, also avoid copying, even with the sampler that starts from noised training records, and their loss-based AUCs equal about $0.48$. At the scale of our data, neither the distance measures nor the loss-based test separates the generators. The favorable privacy measures of the quantum denoiser should therefore be read together with its utility, as \citet{stadler2022synthetic} recommend for synthetic data in general, and the trade-off between privacy and utility of the fully quantum design can be assessed once its records approach the round-trip ceiling. Moreover, the loss-based test is a weak adversary, and attacks that calibrate the loss of every record with shadow models trained on similar data \citep{shokri2017membership} may detect membership that it misses.

The analysis of the decoder carries a lesson that extends beyond our model. Every amplitude encoding removes the length of the encoded vector, so that any generator that operates on encoded states requires a classical component that restores it. This component derives from the training data, whether it stores the training lengths as our decoder does or learns a separate model of the lengths, and in both cases it forms a channel that bypasses the quantum circuit. Storing the length in an additional amplitude would move this channel into the circuit, where it would face the same memorization risks as the rest of the record. With exact training lengths, our decoder reveals $59.5\%$ of the training lengths of a family and supports a membership test with an AUC of $0.80$ on average over the families, whatever the quality of the generated states (Proposition~\ref{prop:length}). The smoothed bootstrap of Eq.~\eqref{eq:decode} closes this channel without retraining and without a measurable cost in utility, and it lowers the AUC to $0.51$ for the nearest adversary and to $0.52$ for the adversary that knows the bandwidth, close to the $0.51$ of the classical DDPM, which has no such decoder. The smallest family, Startpage, remains the most exposed, with an AUC of $0.61$ for the second adversary, so that the empirical protection weakens where the training records of a family are few. The differentially private histogram of Eq.~\eqref{eq:dplengths} turns this protection into a formal guarantee: for every budget from $\varepsilon=10$ to $\varepsilon=0.1$, the two length tests remain at $0.51$ to $0.52$, and with $\varepsilon\geq1$, the round-trip ceiling stays within $0.008$ of the smoothed bootstrap in TSTR (Table~\ref{tab:dp}). This guarantee covers the lengths only. A release with decoder also exposes the scaling parameters of the codec and the CAE, which derive from $\mathcal{D}_{\mathrm{tr}}$ as well, so that the guarantee of the lengths holds conditionally on these parameters, and a complete release would require an encoder and scaling parameters estimated from public data or with a privacy guarantee of their own. A privacy analysis of a quantum generator must therefore cover the classical pre- and post-processing around the circuit.

\subsection{Feasibility on quantum hardware}\label{sec:disc_hardware}

The two expectations behind fully quantum denoisers (Section~\ref{sec:intro}) can now be examined directly. The first concerns the parameter budget: the denoiser employs $11{,}520$ angles and acts on six qubits with the min--max codec, yet the classical control, with $11{,}458$ parameters and the same formulation, reaches a test infidelity $14$ times lower and trains in about $2$ minutes against $41$ minutes for the circuit (Table~\ref{tab:training}). A classical latent DDPM with $11{,}404$ parameters reaches a TSTR score of $0.442$ against $0.114$ for the circuit (Table~\ref{tab:utility}). At this scale, classical models of the same size therefore make better use of the budget than the circuit. The second expectation concerns the execution on a quantum device, where the cost of post-selection dominates. After training, the mean acceptance probability equals $3.0\times10^{-7}$ with the min--max codec, so that one accepted record requires about $3.4\times10^{6}$ executions of the ten-step circuit when every attempt starts from a fresh initial state. The acceptance probability varies little between the initial states (an effective sample size of $361$ of $512$), so that post-selection on hardware would release nearly the same distribution of records as the simulation (Table~\ref{tab:sampling}). A penalty on low acceptance probabilities lowers the cost to $7.4$ runs per record, but only by removing most of the denoising (Table~\ref{tab:penalty}), and depolarizing noise raises the acceptance probability by mixing the ancilla while it degrades the records (Table~\ref{tab:noise}). The read-out with $1{,}000$ shots costs little in comparison, in agreement with Lemma~\ref{lem:shots}. Training on a quantum device would add the cost of estimating every term of the loss with swap tests (Lemma~\ref{lem:shots}), on top of the state preparation of Eq.~\eqref{eq:amp}, whose gate count grows exponentially with the number of qubits. At the scale of this study, our design therefore does not indicate a quantum advantage for the generation of malware records, and the competition between denoising and acceptance identifies post-selection as the first challenge for an execution on quantum hardware.

\subsection{Ethical and dual-use considerations}\label{sec:disc_ethics}

The generation of malware data raises the question of dual use. Our generators produce latent vectors of $32$ features, which contain no executable code, and the decoder of the CAE would map them onto sequences of opcode mnemonics without operands, addresses, or data, from which no executable program can be assembled directly. In this study, the executables were processed only statically, by disassembly, and were not executed. The released artifacts therefore do not extend the capabilities of an attacker beyond those of the public repositories from which the samples come, whereas the privacy analysis addresses the main risk of a release, the disclosure of information about the training records of the data holder.

\subsection{Limitations and future work}\label{sec:disc_limitations}

One of the primary limitations of this study is the scope of its evidence. We evaluate a single corpus of $17$ families from one repository, with ten seeds, so that the results describe this corpus and its latent representation. The differences that support our conclusions, such as the TSTR differences of $0.49$ and $0.56$ between the post-selected circuit and the round-trip ceilings, of $0.33$ between the circuit and its classical control with the sign-split codec, or of $0.33$ between the circuit and the small DDPM, exceed the spread between seeds many times, and the paired differences among them hold in every seed, whereas an evaluation on further corpora would test whether they generalize. The comparison is also limited in its generators: the only generator of another design is the classical latent DDPM. Other classical generators, such as tabular diffusion models \citep{kotelnikov2023tabddpm} or conditional GANs \citep{xu2019modeling}, and hybrid quantum diffusion models would widen the range against which the fully quantum design is placed, although the scores of the latter would not isolate the contribution of a circuit (Section~\ref{sec:setup_refs}).

A second limitation concerns the design space of the denoiser. We keep the ten steps, the depth of the blocks, and the family-specific first blocks fixed, and we do not separate the design choices in which our model differs from that of \citet{cacioppo2023quantum}, such as the posterior-mean target, the narrow initialization, and the conditioning on the family. In particular, $88.5\%$ of the angles lie in the family-specific blocks, whereas every parameter of the classical control serves all families, so that the two denoisers match in their number of parameters but not in their allocation. A third limitation concerns the execution of the circuits, which we simulate with exact state vectors and with a depolarizing channel of uniform strength. Real devices add readout errors, crosstalk, and drift, which our noise model does not capture.

A fourth limitation lies in the representation and in the privacy evaluation. We evaluate the generated records in the latent space of the CAE and do not decode them into opcode sequences, so that we cannot judge whether the decoded sequences resemble executable programs. The CAE reads only the first $7{,}000$ opcodes, which cover the whole program for $17\%$ of the records, and its latent vectors reach a macro-F1 score of $0.635$ against $0.749$ for opcode unigram and bigram frequencies of the whole program (Section~\ref{sec:res_utility}), so that they summarize a program coarsely. On the privacy side, our membership tests employ the generator and the released lengths, but not shadow models, and the differentially private lengths leave the CAE and the scaling parameters of the codec unprotected (Section~\ref{sec:disc_privacy}).

Based on the results, we propose several directions to improve the fully quantum denoiser. First, since denoising and acceptance compete for the same ancilla outcomes in our design, amplitude amplification of the accepted branch, or circuits whose non-unitary step does not rely on rare outcomes, may raise the acceptance probability without removing the denoising. Second, encodings that separate the families, together with a conditioning of every block on the family, may give the loss a stronger signal about the family, and a larger number of steps may reduce the contraction of the samples toward the family means. Third, an encoder trained on public programs and scaling parameters estimated with a privacy guarantee would extend the guarantee of the lengths to the complete release with decoder. Since the decoder runs after sampling, each of these changes keeps the length protection of Proposition~\ref{prop:length} intact.

\section{Conclusion}\label{sec:conclusion}

In this research, we presented a fully quantum latent diffusion model for malware records, in which a reverse-bottleneck circuit with one ancilla measurement denoised amplitude-encoded latent vectors of opcode sequences without any classical neural layer, and we examined whether such a model can generate useful and private security records at a sampling cost that quantum hardware can afford.

The circuit learned its objective: with post-selection, it lowered the infidelity of its denoising steps by $63\%$ relative to the identity map and moved its samples toward the data distribution, the theorem, the propositions, and the lemmas that we established on its target, initialization, training, read-out, and decoder held in every numerical test, and an independent PennyLane implementation reproduced every run to within $8.1\times10^{-8}$. Its records, however, carried little more class information than the random states from which sampling starts. Over $17$ families and ten seeds, classifiers trained on them reached a TSTR score of $0.114$, against $0.076$ for the random states and a round-trip ceiling of $0.605$, whereas a parameter-matched classical control reached $0.367$ with the sign-split codec and a classical latent DDPM with $11{,}404$ parameters reached $0.442$. Stochastic samplers did not raise the scores of the circuit, and on three seeds, a penalty on rejected runs raised the acceptance probability from $1.6\times10^{-7}$ to $0.14$ only by removing most of the denoising. On the privacy side, no synthetic record copied a training record, a loss-based membership test against every generator remained at the level of a random guess, and the smoothed bootstrap of the decoded lengths reduced the AUC of a membership test on the released lengths from $0.80$ to $0.51$ without retraining and without a measurable loss of utility. A differentially private histogram of the lengths gave the same empirical protection with a formal guarantee, at a cost of at most $0.008$ in the TSTR ceiling for $\varepsilon\geq1$.

These results suggest that a low training loss and a small parameter budget form a necessary first step toward a quantum generator of sensitive data, and that the utility and the privacy of its samples, measured together with the cost of sampling them, determine whether such a generator can serve as a source of shareable data. Moreover, the privacy analysis must include the classical encoding and decoding around the circuit, where our decoder initially exposed membership information that the circuit itself never processed.

Two limitations of our study deserve attention. First, our evidence comes from a single corpus of $17$ families, evaluated in the latent space of an encoder that reads only the first $7{,}000$ opcodes of a program and retains less family information than opcode frequencies. Second, we simulated the circuits with exact state vectors and a uniform depolarizing channel, so that the behavior on real devices with readout errors and drift remains untested.

We plan to improve the utility first, since it precedes any hardware study. Since denoising and acceptance compete for the same ancilla outcomes in the current design, we will examine non-unitary steps that do not rely on rare outcomes, such as amplitude amplification of the accepted branch, together with encodings that separate the families and a conditioning of every block on the family. We will also train the encoder and the scaling parameters of the codec on public programs, so that the guarantee of the lengths extends to a release with decoder, and assess the decoded opcode sequences of further malware corpora.

As quantum processors gain qubits and reliable mid-circuit measurements, a joint evaluation of utility and privacy that accounts for the cost of sampling can determine when a quantum generator becomes a credible source of shareable security data.

\section*{Declaration of generative AI use}
During the preparation of this work, the authors used Claude (Anthropic) to assist with language editing, code documentation, and the drafting of parts of the manuscript. After using this tool, the authors reviewed and edited the content as needed and take full responsibility for the content of the published article.

\section*{Author contributions}
\textbf{Ban Q. Tran}: Conceptualization, Methodology, Software, Validation, Formal analysis, Investigation, Data curation, Visualization, Writing -- original draft. \textbf{Quang Duy Tran}: Data curation, Validation, Writing -- review \& editing. \textbf{Anh H. Nguyen}: Software, Validation, Writing -- review \& editing. \textbf{Ziwen Pan}: Methodology, Writing -- review \& editing. \textbf{Tommy Dang}: Resources, Writing -- review \& editing. \textbf{Susan Mengel}: Conceptualization, Supervision, Resources, Writing -- review \& editing.

\section*{Declaration of competing interest}
The authors declare that they have no known competing financial interests or personal relationships that could have appeared to influence the work reported in this paper.

\section*{Acknowledgements}
We thank Texas Tech University and its REmotely-managed Power Aware Computing Systems and Services (REPACSS) for providing the high-performance computing servers on which we ran the experiments of this study. This research was partially funded by FPT University.

\section*{Data availability}
The malware samples are available from the public repository VirusShare. The hashes and family labels of the samples and the source code of this study are available from the authors upon request.

\appendix
\counterwithin{equation}{section}
\counterwithin{table}{section}

\section{Proofs and supplementary results}\label{app:proofs}

\subsection{Proof of Theorem~\ref{thm:target}}

\begin{proof}
\emph{Posterior.} With $\epsilon\sim\mathcal{CN}(0,2I_D)$, the real and the imaginary parts of the forward chain evolve as two independent real Gaussian chains with variances $\beta_t$, so the posterior of each part follows from the real case \citep{ho2020denoising}:
\begin{equation}
q(x_{t-1}\mid x_t,x_0)=\mathcal{CN}\big(\tilde\mu_t,\,2\tilde\beta_t I_D\big),\qquad \tilde\beta_t=\frac{1-\bar\alpha_{t-1}}{1-\bar\alpha_t}\,\beta_t.\label{eq:app_posterior}
\end{equation}
In this posterior, $\tilde\beta_t$ represents the variance of each real and each imaginary component. For $t=1$, $\tilde\beta_1=0$ and $x_0=\tilde\mu_1$, so Eq.~\eqref{eq:thm_target} holds with $\lambda_1=1$ and $\lambda_2=0$. For $t\geq2$, $\tilde\beta_t>0$, and we write $x_{t-1}=\tilde\mu_t+\sqrt{\tilde\beta_t}\,\xi$ with $\xi\sim\mathcal{CN}(0,2I_D)$. Since $\tilde\mu_t\neq0$ almost surely, we let $\hat\mu=\tilde\mu_t/\lVert\tilde\mu_t\rVert_2$, so that $\ket{\hat\mu}=\ket{\tilde\mu_t}$.

\emph{Second-moment operator.} Conditional on $(x_t,x_0)$, the left-hand side of Eq.~\eqref{eq:thm_target} equals $1-\operatorname{Tr}(M\varrho)$, where $M=\mathbb{E}\big[\ket{\hat x_{t-1}}\bra{\hat x_{t-1}}\big]$ is a density matrix. Let $V$ be a unitary with $V\hat\mu=\hat\mu$. The distribution of $\xi$ is invariant under unitaries, so $Vx_{t-1}=\tilde\mu_t+\sqrt{\tilde\beta_t}\,V\xi$ has the same distribution as $x_{t-1}$, and $VMV^{\dagger}=M$. The choice $V=\ket{\hat\mu}\bra{\hat\mu}-(I-\ket{\hat\mu}\bra{\hat\mu})$ implies that $M$ has no block coupling $\operatorname{span}\{\hat\mu\}$ with its orthogonal complement $\hat\mu^{\perp}$. The choice $V=\ket{\hat\mu}\bra{\hat\mu}+W$, with an arbitrary unitary $W$ on $\hat\mu^{\perp}$, implies that the block of $M$ on $\hat\mu^{\perp}$ commutes with every unitary of $\hat\mu^{\perp}$, so it is a multiple of the identity. Hence
\begin{equation}
\begin{aligned}
M&=\lambda_1\ket{\hat\mu}\bra{\hat\mu}+\lambda_2\big(I-\ket{\hat\mu}\bra{\hat\mu}\big),\\
\lambda_1+(D-1)\lambda_2&=\operatorname{Tr}M=1.
\end{aligned}\label{eq:app_M}
\end{equation}
In this decomposition, $\lambda_1=\mathbb{E}\lvert\braket{\hat\mu|\hat x_{t-1}}\rvert^{2}$ and $\lambda_2=\mathbb{E}\lvert\braket{e|\hat x_{t-1}}\rvert^{2}$ for any unit vector $e\perp\hat\mu$. Substituting $M$ gives $\operatorname{Tr}(M\varrho)=\lambda_2+(\lambda_1-\lambda_2)\bra{\hat\mu}\varrho\ket{\hat\mu}$, which is Eq.~\eqref{eq:thm_target}.

\emph{Ordering of the constants.} Let $e_1=\hat\mu,e_2,\dots,e_D$ be an orthonormal basis and $y_j=\braket{e_j|x_{t-1}}$. The $y_j$ are independent, with $y_1=m+\sqrt{\tilde\beta_t}\,\xi_1$, $m=\lVert\tilde\mu_t\rVert_2>0$, and $y_j=\sqrt{\tilde\beta_t}\,\xi_j$ for $j\geq2$, where $\xi_j=\braket{e_j|\xi}$. With $A=\lvert y_1\rvert^{2}$, $B=\lvert y_2\rvert^{2}$, and $W=\sum_{j\geq3}\lvert y_j\rvert^{2}$,
\begin{equation}
\lambda_1-\lambda_2=\mathbb{E}\big[g(A)\big],\qquad g(a)=\frac{a-B}{a+B+W}.\label{eq:app_gap}
\end{equation}
In this expression, $g$ is bounded and strictly increasing in $a$ whenever $B+W>0$, which holds almost surely. Let $A'=\tilde\beta_t\lvert\xi_1\rvert^{2}$ be the value of $A$ for $m=0$. Since $A'$ and $B$ are independent, identically distributed, and independent of $W$, exchangeability gives $\mathbb{E}[g(A')]=0$. We write $\xi_1=u+iv$ with independent $u,v\sim\mathcal{N}(0,1)$ and $\delta=m/\sqrt{\tilde\beta_t}>0$. For every $s>0$, the function $\delta\mapsto\Pr(\lvert u+\delta\rvert\leq s)=\Phi(s-\delta)-\Phi(-s-\delta)$ has the derivative $\varphi(s+\delta)-\varphi(s-\delta)<0$ for $\delta>0$, where $\Phi$ and $\varphi$ denote the standard normal distribution and density. Conditionally on $v$, this gives $\Pr(A\leq c\mid v)<\Pr(A'\leq c\mid v)$ for every $c>\tilde\beta_t v^{2}$, so $\Pr(A\leq c)<\Pr(A'\leq c)$ for every $c>0$, and $A$ strictly dominates $A'$ stochastically. Since $(B,W)$ is independent of $A$ and $A'$, integration by parts gives $\mathbb{E}[g(A)]-\mathbb{E}[g(A')]=\mathbb{E}\int_0^{\infty}g'(a)\big(\Pr(A'\leq a)-\Pr(A\leq a)\big)\,da>0$, that is, $\lambda_1>\lambda_2$.
\end{proof}

\subsection{Proof of Proposition~\ref{prop:init}}

\begin{proof}
Every angle $\theta$ of a block enters a single rotation $e^{-i\theta Q/2}=\cos(\theta/2)\,I-i\sin(\theta/2)\,Q$ with a Pauli operator $Q$, and the angles are independent. For $\theta\sim\mathcal{N}(0,\sigma^2)$, $\mathbb{E}\cos(\theta/2)=e^{-\sigma^2/8}$ and $\mathbb{E}\sin(\theta/2)=0$, so $\mathbb{E}\big[e^{-i\theta Q/2}\big]=e^{-\sigma^2/8}I$. By independence, the expectation of each layer $R(\Theta_l)$, a product and tensor product of $3m$ such rotations, equals $e^{-3m\sigma^2/8}I$, and the expectation of the product in Eq.~\eqref{eq:sec} equals $e^{-N\sigma^2/8}E_{L-1}\cdots E_0=e^{-N\sigma^2/8}I$, since the chains cancel in pairs ($C^{\dagger}C=I$) for even $L$. The same cancellation with $\Theta=0$ gives $U(0)=I$. For a state $\ket{v}$, $\lVert U\ket{v}-\ket{v}\rVert_2^{2}=2-2\operatorname{Re}\bra{v}U\ket{v}$, whose expectation is $2(1-e^{-N\sigma^2/8})$.

For the ancilla, $(\bra{1}_a\otimes I)(\ket{0}_a\otimes\ket{y})=0$ for every $\ket{y}$. With $\ket{y}=U_1^{(k,t)}\ket{x_t}$, this gives
\begin{equation}
\begin{aligned}
p_1&=\Big\lVert(\bra{1}_a\otimes I)\big(U_2^{(t)}-I\big)\big(\ket{0}_a\otimes\ket{y}\big)\Big\rVert_2^{2}\\
&\leq\Big\lVert\big(U_2^{(t)}-I\big)\big(\ket{0}_a\otimes\ket{y}\big)\Big\rVert_2^{2}.
\end{aligned}\label{eq:app_p1}
\end{equation}
In this bound, the right-hand side has the expectation $2(1-e^{-N_2\sigma^2/8})$ over the angles of $U_2^{(t)}$, which are independent of $\ket{y}$. Hence $\mathbb{E}[p_0\mid\ket{x_t},\ket{y}]\geq c:=2e^{-N_2\sigma^2/8}-1$ at every step. During sampling, the input of step $t$ and the block $U_1^{(k,t)}$ do not depend on $\theta_2[t]$, so conditioning successively on the steps $t=1,2,\dots,T$ gives $\mathbb{E}[P_{\mathrm{acc}}]=\mathbb{E}\big[\prod_t p_0^{(t)}\big]\geq c^{T}$ when $c\geq0$.
\end{proof}

\subsection{Smoothness and convergence of training}\label{app:smooth}

The first supplementary result bounds the derivatives of the losses and the resulting convergence of gradient-based training. Convergence analyses of neural networks often assume a convex or strongly convex objective, which deep models do not satisfy; the bounds below follow from the circuit itself.

\begin{lemma}[Smoothness and convergence]\label{lem:smooth}
Let $\ell(\theta)$ denote the loss of one sample $(x_0,k,t,\epsilon)$ as a function of the $P$ angles, and let $\partial_i$ denote the derivative with respect to the $i$-th angle.
(i) Without branch selection, $\lvert\partial_i\ell_{\mathrm{tr}}\rvert\leq\frac12$ and $\lvert\partial_i\partial_j\ell_{\mathrm{tr}}\rvert\leq\frac12$ for all $i,j$, so that $\lVert\nabla\ell_{\mathrm{tr}}\rVert_2\leq G=\sqrt{P}/2$ and $\nabla\ell_{\mathrm{tr}}$ is $\Lambda$-Lipschitz with $\Lambda=P/2$.
(ii) With post-selection, wherever $p_0\geq p_{\min}>0$, $\lvert\partial_i\ell_{\mathrm{ps}}\rvert\leq1/p_{\min}$ and $\lvert\partial_i\partial_j\ell_{\mathrm{ps}}\rvert\leq2/p_{\min}^{2}$, so that $G=\sqrt{P}/p_{\min}$ and $\Lambda=2P/p_{\min}^{2}$.
(iii) Let stochastic gradient descent minimize $\mathcal{L}$ of Eq.~\eqref{eq:objective} with mini-batches of $B$ independent samples and a constant step $\eta\leq1/\Lambda$, and, in case (ii), let $p_0\geq p_{\min}$ hold for every sample along the iterates. Then, after $J$ steps,
\begin{equation}
\frac{1}{J}\sum_{j=0}^{J-1}\mathbb{E}\big\lVert\nabla\mathcal{L}(\theta_j)\big\rVert_2^{2}\leq\frac{2}{\eta J}+\frac{\eta\,\Lambda\,G^{2}}{B}.\label{eq:thm_conv}
\end{equation}
\end{lemma}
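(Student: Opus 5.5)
The plan is to exploit the structure noted in Section~\ref{sec:theory}: every angle $\theta_i$ enters a single rotation $e^{-i\theta_iQ/2}$ with a Pauli generator $Q$. Consequently, any quantity of the form $\operatorname{Tr}(O\,\mathcal{C}_\theta(\varrho))$ is a trigonometric polynomial of degree one in each angle. Here $\mathcal{C}_\theta$ is a composition of the blocks, the ancilla projection or partial trace, and fixed operators. Such a quantity has the form $a+b\cos\theta_i+c\sin\theta_i$ in $\theta_i$, and is bilinear in $(\cos\theta_i,\sin\theta_i)$ and $(\cos\theta_j,\sin\theta_j)$ for a pair of angles. For such functions the parameter-shift rule holds exactly:
\begin{equation*}
\partial_i f=\tfrac12\big[f(\theta+\tfrac{\pi}{2}e_i)-f(\theta-\tfrac{\pi}{2}e_i)\big],
\end{equation*}
and $\partial_i\partial_j f$ is one quarter of the four-term signed sum of $f$ at the shifts $\pm\tfrac{\pi}{2}e_i\pm\tfrac{\pi}{2}e_j$. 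For $i=j$ this reduces to $\tfrac12[f(\theta+\pi e_i)-f(\theta)]$ up to sign, since the second derivative of $a+b\cos\theta+c\sin\theta$ equals $-(f-a)$. Therefore, whenever $f$ takes values in $[0,1]$, we get $\lvert\partial_if\rvert\leq\tfrac12$ and $\lvert\partial_i\partial_jf\rvert\leq\tfrac12$.

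\textbf{Part (i).} By Eq.~\eqref{eq:loss_tr}, $1-\ell_{\mathrm{tr}}=\bra{\tilde\mu_t}\varrho_{\mathrm{out}}\ket{\tilde\mu_t}$ is such a function, with values in $[0,1]$, so both bounds follow at once. The gradient bound then gives $\lVert\nabla\ell\rVert_2\leq\sqrt{P}\max_i\lvert\partial_i\ell\rvert$. The Lipschitz constant follows from bounding the operator norm of the Hessian by $P$ times its largest entry.

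\textbf{Part (ii).} Here I write $\ell_{\mathrm{ps}}=1-N/p_0$, where
\begin{equation*}
N=\big\lvert\bra{\tilde\mu_t}U_3(\bra{0}_a\otimes I)U_2(\ket{0}_a\otimes U_1\ket{x_t})\big\rvert^2 .
\end{equation*}
Both $N$ and $p_0$ are degree-one trigonometric functions of each angle, and they satisfy $0\leq N\leq p_0\leq1$. They therefore inherit the bounds $\tfrac12$ on all first and second derivatives. The quotient rule gives
\begin{equation*}
\lvert\partial_i(N/p_0)\rvert\leq\frac{\lvert\partial_iN\rvert}{p_0}+\frac{N}{p_0}\cdot\frac{\lvert\partial_ip_0\rvert}{p_0}\leq\frac{1}{p_0}.
\end{equation*}
For the second derivatives, I expand into the five terms
\begin{equation*}
\frac{\partial_{ij}N}{p_0},\quad
\frac{\partial_iN\,\partial_jp_0+\partial_jN\,\partial_ip_0}{p_0^2},\quad
\frac{N\,\partial_{ij}p_0}{p_0^2},\quad
\frac{2N\,\partial_ip_0\,\partial_jp_0}{p_0^3}.
\end{equation*}
Using $N/p_0\leq1$, they are bounded by $\tfrac{1}{2p_0}$, $\tfrac{1}{2p_0^2}$, $\tfrac{1}{2p_0}$, and $\tfrac{1}{2p_0^2}$, respectively. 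Their sum is at most $1/p_0+1/p_0^2\leq2/p_{\min}^2$. The constants $G$ and $\Lambda$ follow as in (i).

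\textbf{Part (iii).} This is the standard nonconvex SGD argument. The mini-batch gradient $g_j$ is unbiased for $\nabla\mathcal{L}(\theta_j)$, and its variance is at most $G^2/B$, because the variance of one sample is bounded by its second moment, which is at most $G^2$. Since $\mathcal{L}$ averages losses whose Hessians are bounded by $\Lambda$, the descent lemma gives
\begin{equation*}
\mathbb{E}\mathcal{L}(\theta_{j+1})\leq\mathbb{E}\mathcal{L}(\theta_j)-\eta\,\mathbb{E}\lVert\nabla\mathcal{L}\rVert^2+\tfrac{\Lambda\eta^2}{2}\big(\mathbb{E}\lVert\nabla\mathcal{L}\rVert^2+G^2/B\big).
\end{equation*}
With $\eta\leq1/\Lambda$, the gradient coefficient is at most $-\eta/2$. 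Telescoping over $J$ steps and using $\mathcal{L}(\theta_0)-\inf\mathcal{L}\leq1$, which holds because every infidelity lies in $[0,1]$, then dividing by $\eta J/2$, yields Eq.~\eqref{eq:thm_conv}.

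\textbf{Main obstacle.} The delicate point is case (ii) inside the descent lemma. That lemma needs the Hessian bound along the whole segment between $\theta_j$ and $\theta_{j+1}$, not only at the iterates. I would therefore read the hypothesis ``$p_0\geq p_{\min}$ along the iterates'' as holding on these segments, or equivalently on a region containing the trajectory. I would state this interpretation explicitly, because $N/p_0$ is not globally smooth where $p_0\to0$.
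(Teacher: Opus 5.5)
Your proposal is correct and follows the paper's proof essentially step by step. It uses the same three ingredients: the parameter-shift identities for functions that are degree-one trigonometric in each angle and take values in $[0,1]$; the quotient-rule bounds for $\ell_{\mathrm{ps}}=1-N/p_0$ with $0\leq N\leq p_0\leq1$; and the standard descent-lemma telescoping with the variance bound $G^2/B$ and $\mathcal{L}\in[0,1]$. Your remark that, in case (ii), the Hessian bound is needed on the segments between iterates and not only at the iterates sharpens a point the paper passes over when it simply asserts that $\mathcal{L}$ is $\Lambda$-smooth.
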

In this bound, the first term represents the decrease of the gradient norm with the number of steps, and the second term denotes the floor caused by the variance of the stochastic gradients. With $\eta=\min\{1/\Lambda,\sqrt{2B/(\Lambda G^{2}J)}\}$, the bound decreases as $O(1/\sqrt{J})$. The lemma requires neither convexity nor assumptions on the data, because the constants $\Lambda$ and $G$ follow from the circuit itself: the parameter-shift rule \citep{mitarai2018quantum,schuld2019evaluating} makes every loss a trigonometric polynomial of degree one in each angle, and the loss lies in $[0,1]$. Proposition~\ref{prop:init} indicates that $p_0$ is close to one at initialization, and we monitor its mean on the validation set during training; beyond initialization, condition (ii) remains an assumption. The bound covers stochastic gradient descent; for Adam, which we employ, comparable rates hold under the same smoothness and bounded-gradient conditions \citep{defossez2022simple}. Since the constants hold for the worst case, the step size that part (iii) requires for the post-selected circuit, $\eta\leq1/\Lambda\leq1.7\times10^{-4}$, lies far below our learning rate of $0.01$, so that the lemma bounds the smoothness of the objective without describing the course of our training runs.

\begin{proof}
\emph{Structure of the losses.} Without branch selection,
\begin{equation}
\begin{aligned}
1-\ell_{\mathrm{tr}}={}&\operatorname{Tr}\Big[\big(I_a\otimes U_3^{(t)\dagger}\ket{\tilde\mu_t}\bra{\tilde\mu_t}U_3^{(t)}\big)\\
&U_2^{(t)}\big(\ket{0}\bra{0}_a\otimes U_1^{(k,t)}\ket{x_t}\bra{x_t}U_1^{(k,t)\dagger}\big)U_2^{(t)\dagger}\Big].
\end{aligned}\label{eq:app_ltr}
\end{equation}
In this expression, the trace is the probability of an outcome of a measurement, so $\ell_{\mathrm{tr}}\in[0,1]$, and every angle appears once in a unitary and once in its adjoint. Writing each rotation as $\cos(\theta_i/2)I-i\sin(\theta_i/2)Q$ implies that $1-\ell_{\mathrm{tr}}$ is affine in $(\cos\theta_i,\sin\theta_i)$ for every angle $\theta_i$. With post-selection, $\ell_{\mathrm{ps}}=1-g/p_0$, where $g=\lvert\bra{\tilde\mu_t}U_3^{(t)}(\bra{0}_a\otimes I)U_2^{(t)}(\ket{0}_a\otimes U_1^{(k,t)}\ket{x_t})\rvert^{2}$ and $p_0$ are probabilities of the same form, and $0\leq g\leq p_0\leq1$ by the Cauchy--Schwarz inequality.

\emph{Derivative bounds.} Let $f\in[0,1]$ be a function of this form, let $e_i$ be the $i$-th unit vector, and let $s=\pi/2$. The parameter-shift rule \citep{mitarai2018quantum,schuld2019evaluating} gives, for $i\neq j$,
\begin{equation}
\begin{aligned}
\partial_if&=\tfrac12\big[f(\theta+se_i)-f(\theta-se_i)\big],\\
\partial_i\partial_jf&=\tfrac14\big[f_{++}-f_{+-}-f_{-+}+f_{--}\big],\\
\partial_i^{2}f&=-(f-a_i).
\end{aligned}\label{eq:app_shift}
\end{equation}
In these identities, $f_{\pm\pm}$ denotes $f$ evaluated at $\theta\pm se_i\pm se_j$, and $a_i$ is the mean of $f$ over $\theta_i$. Since $f\in[0,1]$, the first two expressions lie in $[-\frac12,\frac12]$, and $\lvert f-a_i\rvert\leq\frac12\big(\max_{\theta_i}f-\min_{\theta_i}f\big)\leq\frac12$ bounds the third. Applied to $f=1-\ell_{\mathrm{tr}}$, these bounds give part (i), with $\lVert\nabla\ell_{\mathrm{tr}}\rVert_2\leq\sqrt{P}/2$ and $\lVert\nabla^{2}\ell_{\mathrm{tr}}\rVert_2\leq\lVert\nabla^{2}\ell_{\mathrm{tr}}\rVert_F\leq P/2$. For part (ii), the same bounds for $g$ and $p_0$, together with $g\leq p_0$, give
\begin{equation}
\begin{aligned}
\Big\lvert\partial_i\frac{g}{p_0}\Big\rvert&\leq\frac{\lvert\partial_ig\rvert}{p_0}+\frac{g\,\lvert\partial_ip_0\rvert}{p_0^{2}}\leq\frac{1}{p_0},\\
\Big\lvert\partial_i\partial_j\frac{g}{p_0}\Big\rvert&\leq\frac{1}{p_0}+\frac{1}{p_0^{2}}\leq\frac{2}{p_0^{2}}.
\end{aligned}\label{eq:app_quot}
\end{equation}
In these bounds, the second inequality expands the second derivative of the quotient into four terms, which are bounded by $\frac{1}{2p_0}$, $\frac{1}{2p_0^{2}}$, $\frac{1}{2p_0}$, and $\frac{1}{2p_0^{2}}$, respectively.

\emph{Convergence.} The objective $\mathcal{L}$ averages the per-sample losses, so it inherits the bounds on the gradient and the Hessian and is $\Lambda$-smooth. For a mini-batch gradient $\hat g_j$ of $B$ independent samples, $\mathbb{E}[\hat g_j]=\nabla\mathcal{L}(\theta_j)$ and $\mathbb{E}\lVert\hat g_j-\nabla\mathcal{L}(\theta_j)\rVert_2^{2}\leq G^{2}/B$. The descent lemma for $\Lambda$-smooth functions and $\eta\leq1/\Lambda$ give
\begin{equation}
\mathbb{E}\,\mathcal{L}(\theta_{j+1})\leq\mathbb{E}\,\mathcal{L}(\theta_j)-\frac{\eta}{2}\,\mathbb{E}\big\lVert\nabla\mathcal{L}(\theta_j)\big\rVert_2^{2}+\frac{\Lambda\eta^{2}G^{2}}{2B}.\label{eq:app_descent}
\end{equation}
In this recursion, summing over $j=0,\dots,J-1$, using $\mathcal{L}(\theta_0)-\mathcal{L}(\theta_J)\leq1$ because $\mathcal{L}\in[0,1]$, and dividing by $\eta J/2$ yield Eq.~\eqref{eq:thm_conv} \citep{ghadimi2013stochastic}.
\end{proof}

\subsection{Shot complexity}\label{app:shots}

The second supplementary result bounds the number of measurements required on hardware for the read-out and for the training loss.

\begin{lemma}[Shot complexity]\label{lem:shots}
(i) Let $p_i=\lvert\braket{i|z_0}\rvert^{2}$, and let the counts of $S$ measurements of $\ket{z_0}$ in the computational basis follow the multinomial distribution with probabilities $p$. The read-out of Eq.~\eqref{eq:readout} satisfies
\begin{equation}
\mathbb{E}\sum_{i=0}^{D-1}\Big(\lvert\hat\alpha_i\rvert-\sqrt{p_i}\Big)^{2}\leq\frac{D-1}{S},\label{eq:prop_shots}
\end{equation}
and, for the same drawn length $\rho$, the decoded vectors of Eq.~\eqref{eq:decode} satisfy $\mathbb{E}\lVert\hat z_S-\hat z_\infty\rVert_2^{2}\leq\rho^{2}\Delta^{2}(D-1)/S$, with $\Delta=\max_i(h_i-l_i)$.
(ii) If the fidelity $F=\bra{\tilde\mu_t}\varrho_{\mathrm{out}}\ket{\tilde\mu_t}$ is estimated from $S$ swap tests as $\hat F=2\hat P_0-1$, where $\hat P_0$ is the fraction of zero outcomes, then $\Pr(\lvert\hat F-F\rvert\geq\varepsilon)\leq2e^{-S\varepsilon^{2}/2}$.
\end{lemma}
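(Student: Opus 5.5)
For part (i), I will bound each term of the sum separately and then add them up. Write $\hat p_i=c_i/S$, so $\lvert\hat\alpha_i\rvert=\sqrt{\hat p_i}$ with $\mathbb{E}\hat p_i=p_i$ and $\operatorname{Var}\hat p_i=p_i(1-p_i)/S$. The key elementary inequality is
\begin{equation*}
\big(\sqrt{a}-\sqrt{b}\big)^{2}\leq\frac{(a-b)^{2}}{(\sqrt a+\sqrt b)^{2}}\leq\frac{(a-b)^{2}}{b}\qquad(b>0).
\end{equation*}
Applying it with $a=\hat p_i$ and $b=p_i$ gives $\mathbb{E}(\sqrt{\hat p_i}-\sqrt{p_i})^{2}\leq(1-p_i)/S$ whenever $p_i>0$. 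Terms with $p_i=0$ vanish, since then $c_i=0$ almost surely. Summing over $i$ gives $\sum_i(1-p_i)/S=(D-1)/S$, because $\sum_ip_i=1$.

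For the decoded vectors, I fix the drawn length $\rho$. Eq.~\eqref{eq:decode} gives $\hat z_S-\hat z_\infty=(h-l)\odot\rho\,(\lvert\hat\alpha\rvert-\sqrt p)$. Each coordinate is therefore scaled by at most $\rho\Delta$, so $\lVert\hat z_S-\hat z_\infty\rVert_2^{2}\leq\rho^{2}\Delta^{2}\sum_i(\lvert\hat\alpha_i\rvert-\sqrt{p_i})^{2}$, and the first bound applies. This step only needs $\rho$ and $\zeta$ to be independent of the shot counts, which holds by construction.

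For part (ii), the swap test on $\varrho_{\mathrm{out}}$ and the pure state $\ket{\tilde\mu_t}$ returns the outcome $0$ with probability $P_0=(1+F)/2$. I use the standard identity $\operatorname{Tr}[\varrho\,\sigma]=F$ for $\sigma=\ket{\tilde\mu_t}\bra{\tilde\mu_t}$, which follows from the swap-test circuit. The $S$ repetitions are independent Bernoulli trials with mean $P_0$. Hoeffding's inequality then gives $\Pr(\lvert\hat P_0-P_0\rvert\geq\varepsilon/2)\leq2e^{-2S(\varepsilon/2)^{2}}=2e^{-S\varepsilon^{2}/2}$. Since $\hat F-F=2(\hat P_0-P_0)$, this is exactly the claim.

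The only delicate point is the square-root error in part (i), where a naive delta-method argument fails near $p_i=0$. The inequality $(\sqrt a-\sqrt b)^2\leq(a-b)^2/b$ handles this regime uniformly, together with the separate treatment of $p_i=0$. No earlier result of the paper is needed beyond the definitions of the read-out and the decoder.
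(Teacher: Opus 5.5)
Your proposal is correct and follows the paper's proof essentially step by step. The shared steps are the term-wise bound $(\sqrt{\hat p_i}-\sqrt{p_i})^2\leq(\hat p_i-p_i)^2/p_i$ with the binomial variance and a separate treatment of $p_i=0$, the affine form $\hat z_S-\hat z_\infty=\rho\,(h-l)\odot(\lvert\hat\alpha\rvert-\sqrt p)$ of the decoder, and Hoeffding's inequality for $\hat P_0$ with $P_0=(1+F)/2$, extended to mixed outputs by linearity. The only cosmetic point is that your final sum effectively runs over the indices with $p_i>0$. It therefore yields $\sum_{i:\,p_i>0}(1-p_i)/S\leq(D-1)/S$ rather than an equality, exactly as the paper writes it.
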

In this lemma, $\hat z_S$ and $\hat z_\infty$ represent the latent vectors decoded from $S$ shots and from the exact magnitudes. The read-out error decreases as $(D-1)/S$, which remains small for $D=32$: with $S=1{,}000$ shots, the mean squared error of the magnitudes is at most $0.031$. Part (ii) indicates that $S\geq2\ln(2/\delta)/\varepsilon^{2}$ swap tests estimate one term of the loss to accuracy $\varepsilon$ with probability at least $1-\delta$ \citep{hoeffding1963probability}, which quantifies the cost of evaluating the training objective on a quantum device.

\begin{proof}
(i) If $p_i=0$, then $c_i=0$ almost surely and the term vanishes. If $p_i>0$, then, with $\hat p_i=c_i/S$,
\begin{equation}
\begin{aligned}
\big(\sqrt{\hat p_i}-\sqrt{p_i}\big)^{2}&=\frac{(\hat p_i-p_i)^{2}}{\big(\sqrt{\hat p_i}+\sqrt{p_i}\big)^{2}}\leq\frac{(\hat p_i-p_i)^{2}}{p_i},\\
\mathbb{E}(\hat p_i-p_i)^{2}&=\frac{p_i(1-p_i)}{S}.
\end{aligned}\label{eq:app_shots}
\end{equation}
In these relations, the variance stems from the binomial marginal distribution of $c_i$. Summing gives $\sum_{i:\,p_i>0}(1-p_i)/S\leq(D-1)/S$, which is Eq.~\eqref{eq:prop_shots} since $\lvert\hat\alpha_i\rvert=\sqrt{\hat p_i}$. The decoder of Eq.~\eqref{eq:decode} is affine in the magnitudes, $\hat z_S-\hat z_\infty=\rho\,(h-l)\odot\big(\lvert\hat\alpha\rvert-\sqrt{p}\big)$, which gives the bound on the decoded vectors.
(ii) The swap test returns $0$ with probability $P_0=(1+F)/2$ \citep{buhrman2001quantum}, by linearity also for the mixed state $\varrho_{\mathrm{out}}$, so $\hat F$ is unbiased, and Hoeffding's inequality \citep{hoeffding1963probability} gives $\Pr(\lvert\hat P_0-P_0\rvert\geq\varepsilon/2)\leq2e^{-S\varepsilon^{2}/2}$.
\end{proof}

\subsection{Proof of Proposition~\ref{prop:length}}

\begin{proof}
By Eq.~\eqref{eq:decode}, $(\hat z_i-l_i)/(h_i-l_i)=\rho\,\lvert\hat\alpha_i\rvert\geq0$, so the maximum with zero in Eq.~\eqref{eq:minmax} is inactive and $u(\hat z)=\rho\,\lvert\hat\alpha\rvert$. Both the exact magnitudes and the estimate of Eq.~\eqref{eq:readout} have unit norm, the latter because $\sum_ic_i=S$; hence $\rho(\hat z)=\lVert u(\hat z)\rVert_2=\rho$. Since $\rho_j$ and $\zeta$ are drawn independently of the generated state, the length of $\hat z$ does not depend on this state.
(i) The logarithm $\log\rho=\log\rho_j+\tau_k\zeta$ is the sum of a variable distributed uniformly over the elements of $\{\log r:r\in\mathcal{R}_k\}$ and an independent $\mathcal{N}(0,\tau_k^{2})$ variable. Its density is therefore the convolution of the two distributions, which is Eq.~\eqref{eq:kde}. The law of total expectation gives $\mathbb{E}[\log\rho]=\mathbb{E}[\log\rho_j]=\bar v_k$, and the law of total variance gives $\operatorname{Var}[\log\rho]=\operatorname{Var}[\log\rho_j]+\tau_k^{2}$, in which $\operatorname{Var}[\log\rho_j]=(1-1/\lvert\mathcal{R}_k\rvert)\,s_k^{2}$ is the variance of the empirical distribution.
(ii) For $\tau_k>0$, the distribution of $\log\rho$ has the density of Eq.~\eqref{eq:kde} and therefore assigns probability zero to every single value, so that $\Pr(\rho=r)=\Pr(\log\rho=\log r)=0$ for every $r>0$.
(iii) For $\tau_k=0$, $\rho=\rho_j$ follows the empirical distribution of $\mathcal{R}_k$. The $n_s$ lengths are independent draws from $\mathcal{U}(\mathcal{R}_k)$, and a value that occurs once in $\mathcal{R}_k$ is missed by all of them with probability $(1-1/\lvert\mathcal{R}_k\rvert)^{n_s}$.
For the sign-split codec, the same computation gives $u(\hat z)=\rho\,\big[\max(0,a^{+}-a^{-})\,;\,\max(0,a^{-}-a^{+})\big]$ with $a^{\pm}$ the two halves of $\lvert\hat\alpha\rvert$, whose norm is at most $\rho$, with equality when the halves have disjoint supports.
\end{proof}

\section{Numerical verification}\label{app:numerics}

We verify every statement of Section~\ref{sec:theory} with the PyTorch implementation of Section~\ref{sec:impl} and its default configuration ($n=5$, $K=17$, $T=10$, $L_1=L_2=L_3=4$, $\sigma=0.1$, linear schedule from $0.02$ to $0.10$). Table~\ref{tab:numerics} summarizes the results. For Theorem~\ref{thm:target}, we estimate $M$ from $10^{5}$ posterior draws for $t\in\{2,5,10\}$ and three random pairs $(x_t,x_0)$ per step, and we evaluate Eq.~\eqref{eq:thm_target} for $20$ random pure outputs per pair; the leading eigenvalue $\lambda_1$ lies between $0.65$ and $0.76$ for these pairs. For Proposition~\ref{prop:init}, we average $U(\Theta)$ over $4{,}000$ random initializations and run the post-selected sampler of Algorithm~\ref{alg:sample} for $64$ initial states under $20$ random initializations. For Lemma~\ref{lem:smooth}, we compute the exact gradient and Hessian with automatic differentiation for the $192$ angles used by one pair $(k,t)$, for $12$ random combinations of input, target, and angle scale ($0.1$, $1$, and $\pi$); the observed derivatives remain well below the bounds, which hold for the worst case. For Lemma~\ref{lem:shots}, we draw $20{,}000$ multinomial read-outs of a random state with $D=32$ and $200{,}000$ swap-test estimates. For Proposition~\ref{prop:length}, we decode $200$ random states per family with the codec fitted on our training set, compare the mean and the standard deviation of $10^{5}$ logarithms of lengths drawn with Eq.~\eqref{eq:decode} with part (i) for every family, and draw $512$ lengths $200$ times for three families of different sizes (Startpage, Expiro.BK, and FakeRean, with $97$, $184$, and $691$ training records), with $\tau_k=0$ and with the bandwidth of Eq.~\eqref{eq:bandwidth}. On the same draws, the membership test of Eq.~\eqref{eq:lenmia}, with the validation and test records of each family as non-members, reaches a mean AUC of $0.998$, $0.968$, and $0.767$ for these families with $\tau_k=0$, and of $0.552$, $0.513$, and $0.511$ with the bandwidth of Eq.~\eqref{eq:bandwidth}.

\begin{table*}[htbp]
\small
\caption{Numerical verification of Theorem~\ref{thm:target}, Propositions~\ref{prop:init} and \ref{prop:length}, and Lemmas~\ref{lem:smooth} and \ref{lem:shots} with the implementation of Section~\ref{sec:impl} ($n=5$, $K=17$, $T=10$, $L_1=L_2=L_3=4$, $\sigma=0.1$). The quantities $\overline{U}$, $p_1$, and $P_{\mathrm{acc}}$ are averaged over random initializations.}\label{tab:numerics}
\begin{tabular*}{\tblwidth}{@{}lp{6.7cm}ll@{}}
\toprule
Result & Quantity & Prediction & Observed\\
\midrule
Theorem~\ref{thm:target} & Overlap of the leading eigenvector of $M$ with $\ket{\tilde\mu_t}$ & $1$ & $0.99999$\\
 & Largest residual of Eq.~\eqref{eq:thm_target} over random outputs & $0$ & $3\times10^{-4}$\\
Proposition~\ref{prop:init} & $\lVert\overline{U}-e^{-N\sigma^2/8}I\rVert$ for $N=60$ / $72$ & $0$ / $0$ & $0.007$ / $0.015$\\
 & $\mathbb{E}[p_1]$ at initialization & $\leq0.172$ & $0.009$\\
 & $\mathbb{E}[P_{\mathrm{acc}}]$ at initialization & $\geq0.151$ & $0.911$\\
Lemma~\ref{lem:smooth} & $\max\lvert\partial_i\ell_{\mathrm{tr}}\rvert$ / $\max\lvert\partial_i\partial_j\ell_{\mathrm{tr}}\rvert$ & $\leq0.5$ / $\leq0.5$ & $0.079$ / $0.072$\\
 & $\max p_0\lvert\partial_i\ell_{\mathrm{ps}}\rvert$ / $\max p_0^{2}\lvert\partial_i\partial_j\ell_{\mathrm{ps}}\rvert$ & $\leq1$ / $\leq2$ & $0.079$ / $0.072$\\
Lemma~\ref{lem:shots} & Read-out error for $S=100$ / $1{,}000$ / $10{,}000$ & $\leq0.31$ / $0.031$ / $0.0031$ & $0.097$ / $0.0090$ / $0.0008$\\
 & $\Pr(\lvert\hat F-F\rvert\geq0.1)$ for $S=1{,}000$, $F=0.8$ & $\leq0.013$ & $1\times10^{-5}$\\
Proposition~\ref{prop:length} & $\max\lvert\rho(\hat z)-\rho\rvert$ & $0$ & $9\times10^{-16}$\\
 & Deviation of the mean / standard deviation of $\log\rho(\hat z)$ from part (i), in predicted standard deviations (maximum over $17$ classes) & $0$ / $0$ & $0.004$ / $0.004$\\
 & Reproduced training lengths with $512$ samples and $\tau_k=0$: Startpage / Expiro.BK / FakeRean ($\lvert\mathcal{R}_k\rvert=97$ / $184$ / $691$) & $0.995$ / $0.939$ / $0.524$ & $0.995$ / $0.938$ / $0.524$\\
 & The same with $\tau_k$ of Eq.~\eqref{eq:bandwidth} & $0$ / $0$ / $0$ & $0$ / $0$ / $0$\\
\bottomrule
\end{tabular*}
\end{table*}
\FloatBarrier

\bibliographystyle{elsarticle-num-names}
\bibliography{qdm-refs}

@inproceedings{sohl2015deep,
  author    = {Sohl-Dickstein, Jascha and Weiss, Eric and Maheswaranathan, Niru and Ganguli, Surya},
  title     = {Deep Unsupervised Learning using Nonequilibrium Thermodynamics},
  booktitle = {Proceedings of the 32nd International Conference on Machine Learning},
  series    = {Proceedings of Machine Learning Research},
  volume    = {37},
  pages     = {2256--2265},
  year      = {2015}
}

@inproceedings{ho2020denoising,
  author    = {Ho, Jonathan and Jain, Ajay and Abbeel, Pieter},
  title     = {Denoising Diffusion Probabilistic Models},
  booktitle = {Advances in Neural Information Processing Systems},
  volume    = {33},
  pages     = {6840--6851},
  year      = {2020}
}

@inproceedings{song2021denoising,
  author    = {Song, Jiaming and Meng, Chenlin and Ermon, Stefano},
  title     = {Denoising Diffusion Implicit Models},
  booktitle = {International Conference on Learning Representations},
  year      = {2021}
}

@article{cacioppo2023quantum,
  author  = {Cacioppo, Andrea and Colantonio, Lorenzo and Bordoni, Simone and Giagu, Stefano},
  title   = {Quantum Diffusion Models},
  journal = {arXiv preprint arXiv:2311.15444},
  year    = {2023}
}

@inproceedings{cacioppo2025hybrid,
  author    = {Cacioppo, Andrea and Colantonio, Lorenzo and Bordoni, Simone and Giagu, Stefano},
  title     = {Hybrid and Hardware-Oriented Approaches for Quantum Diffusion Models},
  booktitle = {2025 International Joint Conference on Neural Networks (IJCNN)},
  publisher = {IEEE},
  year      = {2025},
  doi       = {10.1109/IJCNN64981.2025.11227283}
}

@article{zhang2024generative,
  author  = {Zhang, Bingzhi and Xu, Peng and Chen, Xiaohui and Zhuang, Quntao},
  title   = {Generative Quantum Machine Learning via Denoising Diffusion Probabilistic Models},
  journal = {Physical Review Letters},
  volume  = {132},
  number  = {10},
  pages   = {100602},
  year    = {2024}
}

@article{schuld2020circuit,
  author  = {Schuld, Maria and Bocharov, Alex and Svore, Krysta M. and Wiebe, Nathan},
  title   = {Circuit-centric quantum classifiers},
  journal = {Physical Review A},
  volume  = {101},
  number  = {3},
  pages   = {032308},
  year    = {2020}
}

@book{schuld2021machine,
  author    = {Schuld, Maria and Petruccione, Francesco},
  title     = {Machine Learning with Quantum Computers},
  publisher = {Springer},
  address   = {Cham},
  year      = {2021}
}

@inproceedings{tran2026systematic,
  author    = {Tran, Ban Q. and Chu, Duong M. and Pham, Hai T. D. and Mengel, Susan},
  title     = {A Systematic Literature Review of Quantum Deep Learning in the {NISQ} Era},
  booktitle = {Proceedings of the 2026 15th International Conference on Software and Computer Applications (ICSCA 2026)},
  publisher = {ACM},
  address   = {Langkawi, Malaysia},
  pages     = {93--103},
  year      = {2026}
}

@inproceedings{tran2026comprehensive,
  author    = {Tran, Ban Q. and Chu, Duong M. and Pham, Hai T. D. and Nguyen, Viet Q. and Pham, Quan A. and Mengel, Susan},
  title     = {A Comprehensive Analysis of Accuracy and Robustness in Quantum Neural Networks},
  booktitle = {International Conference on Quantum Computing Infrastructure \& Technologies (QUANCOM 2026)},
  address   = {Trento, Italy},
  note      = {To appear; preprint arXiv:2604.26110},
  year      = {2026}
}

@inproceedings{goodfellow2014generative,
  author    = {Goodfellow, Ian and Pouget-Abadie, Jean and Mirza, Mehdi and Xu, Bing and Warde-Farley, David and Ozair, Sherjil and Courville, Aaron and Bengio, Yoshua},
  title     = {Generative Adversarial Nets},
  booktitle = {Advances in Neural Information Processing Systems},
  volume    = {27},
  pages     = {2672--2680},
  year      = {2014}
}

@inproceedings{abadi2016deep,
  author    = {Abadi, Martin and Chu, Andy and Goodfellow, Ian and McMahan, H. Brendan and Mironov, Ilya and Talwar, Kunal and Zhang, Li},
  title     = {Deep Learning with Differential Privacy},
  booktitle = {Proceedings of the 2016 ACM SIGSAC Conference on Computer and Communications Security},
  pages     = {308--318},
  year      = {2016}
}

@article{jiang2024netdiffusion,
  author  = {Jiang, Xi and Liu, Shinan and Gember-Jacobson, Aaron and Bhagoji, Arjun Nitin and Schmitt, Paul and Bronzino, Francesco and Feamster, Nick},
  title   = {{NetDiffusion}: Network Data Augmentation Through Protocol-Constrained Traffic Generation},
  journal = {Proceedings of the ACM on Measurement and Analysis of Computing Systems},
  volume  = {8},
  number  = {1},
  pages   = {11},
  year    = {2024}
}

@article{kwun2025mixed,
  author  = {Kwun, Gino and Zhang, Bingzhi and Zhuang, Quntao},
  title   = {Mixed-state quantum denoising diffusion probabilistic model},
  journal = {Physical Review A},
  volume  = {111},
  pages   = {032610},
  year    = {2025}
}

@inproceedings{carlini2023extracting,
  author    = {Carlini, Nicholas and Hayes, Jamie and Nasr, Milad and Jagielski, Matthew and Sehwag, Vikash and Tram{\`e}r, Florian and Balle, Borja and Ippolito, Daphne and Wallace, Eric},
  title     = {Extracting Training Data from Diffusion Models},
  booktitle = {32nd USENIX Security Symposium},
  pages     = {5253--5270},
  year      = {2023}
}

@inproceedings{stadler2022synthetic,
  author    = {Stadler, Theresa and Oprisanu, Bristena and Troncoso, Carmela},
  title     = {Synthetic Data -- Anonymisation Groundhog Day},
  booktitle = {31st USENIX Security Symposium},
  pages     = {1451--1468},
  year      = {2022}
}

@inproceedings{shokri2017membership,
  author    = {Shokri, Reza and Stronati, Marco and Song, Congzheng and Shmatikov, Vitaly},
  title     = {Membership Inference Attacks Against Machine Learning Models},
  booktitle = {IEEE Symposium on Security and Privacy},
  year      = {2017}
}

@article{anderson2018ember,
  author  = {Anderson, Hyrum S. and Roth, Phil},
  title   = {{EMBER}: An Open Dataset for Training Static {PE} Malware Machine Learning Models},
  journal = {arXiv preprint arXiv:1804.04637},
  year    = {2018}
}

@article{santos2013opcode,
  author  = {Santos, Igor and Brezo, Felix and Ugarte-Pedrero, Xabier and Bringas, Pablo G.},
  title   = {Opcode sequences as representation of executables for data-mining-based unknown malware detection},
  journal = {Information Sciences},
  volume  = {231},
  pages   = {64--82},
  year    = {2013}
}

@article{hu2017generating,
  author  = {Hu, Weiwei and Tan, Ying},
  title   = {Generating Adversarial Malware Examples for Black-Box Attacks Based on {GAN}},
  journal = {arXiv preprint arXiv:1702.05983},
  year    = {2017}
}

@inproceedings{xu2019modeling,
  author    = {Xu, Lei and Skoularidou, Maria and Cuesta-Infante, Alfredo and Veeramachaneni, Kalyan},
  title     = {Modeling Tabular Data using Conditional {GAN}},
  booktitle = {Advances in Neural Information Processing Systems},
  volume    = {32},
  year      = {2019}
}

@inproceedings{kotelnikov2023tabddpm,
  author    = {Kotelnikov, Akim and Baranchuk, Dmitry and Rubachev, Ivan and Babenko, Artem},
  title     = {{TabDDPM}: Modelling Tabular Data with Diffusion Models},
  booktitle = {Proceedings of the 40th International Conference on Machine Learning},
  series    = {Proceedings of Machine Learning Research},
  volume    = {202},
  pages     = {17564--17579},
  year      = {2023}
}

@inproceedings{rombach2022high,
  author    = {Rombach, Robin and Blattmann, Andreas and Lorenz, Dominik and Esser, Patrick and Ommer, Bj{\"o}rn},
  title     = {High-Resolution Image Synthesis with Latent Diffusion Models},
  booktitle = {Proceedings of the IEEE/CVF Conference on Computer Vision and Pattern Recognition},
  pages     = {10684--10695},
  year      = {2022}
}

@article{lloyd2018quantum,
  author  = {Lloyd, Seth and Weedbrook, Christian},
  title   = {Quantum Generative Adversarial Learning},
  journal = {Physical Review Letters},
  volume  = {121},
  pages   = {040502},
  year    = {2018}
}

@article{dallairedemers2018quantum,
  author  = {Dallaire-Demers, Pierre-Luc and Killoran, Nathan},
  title   = {Quantum generative adversarial networks},
  journal = {Physical Review A},
  volume  = {98},
  pages   = {012324},
  year    = {2018}
}

@article{zoufal2019quantum,
  author  = {Zoufal, Christa and Lucchi, Aur{\'e}lien and Woerner, Stefan},
  title   = {Quantum Generative Adversarial Networks for learning and loading random distributions},
  journal = {npj Quantum Information},
  volume  = {5},
  pages   = {103},
  year    = {2019}
}

@article{benedetti2019generative,
  author  = {Benedetti, Marcello and Garcia-Pintos, Delfina and Perdomo, Oscar and Leyton-Ortega, Vicente and Nam, Yunseong and Perdomo-Ortiz, Alejandro},
  title   = {A generative modeling approach for benchmarking and training shallow quantum circuits},
  journal = {npj Quantum Information},
  volume  = {5},
  pages   = {45},
  year    = {2019}
}

@article{mcclean2018barren,
  author  = {McClean, Jarrod R. and Boixo, Sergio and Smelyanskiy, Vadim N. and Babbush, Ryan and Neven, Hartmut},
  title   = {Barren plateaus in quantum neural network training landscapes},
  journal = {Nature Communications},
  volume  = {9},
  pages   = {4812},
  year    = {2018}
}

@article{parigi2024quantum,
  author  = {Parigi, Marco and Martina, Stefano and Caruso, Filippo},
  title   = {Quantum-Noise-Driven Generative Diffusion Models},
  journal = {Advanced Quantum Technologies},
  pages   = {2300401},
  year    = {2024}
}

@article{kolle2024quantum,
  author  = {K{\"o}lle, Michael and Stenzel, Gerhard and Stein, Jonas and Zielinski, Sebastian and Ommer, Bj{\"o}rn and Linnhoff-Popien, Claudia},
  title   = {Quantum Denoising Diffusion Models},
  journal = {arXiv preprint arXiv:2401.07049},
  year    = {2024}
}

@article{defalco2024latent,
  author  = {De Falco, Francesca and Ceschini, Andrea and Sebastianelli, Alessandro and Le Saux, Bertrand and Panella, Massimo},
  title   = {Quantum latent diffusion models},
  journal = {Quantum Machine Intelligence},
  volume  = {6},
  pages   = {85},
  year    = {2024}
}

@article{li2026quantum,
  author  = {Li, Guangxi and Jing, Mingrui and Zhu, Chenghong and Wang, Xin},
  title   = {Quantum latent diffusion model for high-dimensional state generation},
  journal = {Quantum Machine Intelligence},
  volume  = {8},
  pages   = {10},
  year    = {2026}
}

@article{mercaldo2022towards,
  author  = {Mercaldo, Francesco and Ciaramella, Giovanni and Iadarola, Giacomo and Storto, Michele and Martinelli, Fabio and Santone, Antonella},
  title   = {Towards Explainable Quantum Machine Learning for Mobile Malware Detection and Classification},
  journal = {Applied Sciences},
  volume  = {12},
  number  = {23},
  pages   = {12025},
  year    = {2022}
}

@misc{repacss,
  author       = {{REPACSS Team}},
  title        = {{REPACSS} User Guide: Architecture},
  howpublished = {Texas Tech University},
  url          = {https://guide.repacss.org/understanding/repacss-system/architecture.html},
  note         = {Accessed 26 October 2026},
  year         = {2026}
}

@misc{binutils,
  author       = {{Free Software Foundation}},
  title        = {{GNU Binutils}: objdump},
  url          = {https://www.gnu.org/software/binutils/},
  note         = {Accessed 26 October 2026},
  year         = {2026}
}

@misc{virusshare,
  author       = {{VirusShare}},
  title        = {{VirusShare}: A repository of malware samples},
  url          = {https://virusshare.com},
  note         = {Accessed 26 October 2026},
  year         = {2026}
}

@article{defalco2024quantum,
  author  = {De Falco, Francesca and Ceschini, Andrea and Sebastianelli, Alessandro and Le Saux, Bertrand and Panella, Massimo},
  title   = {Quantum hybrid diffusion models for image synthesis},
  journal = {KI - K{\"u}nstliche Intelligenz},
  volume  = {38},
  pages   = {311--326},
  year    = {2024}
}

@book{silverman1986density,
  author    = {Silverman, Bernard W.},
  title     = {Density Estimation for Statistics and Data Analysis},
  series    = {Monographs on Statistics and Applied Probability},
  volume    = {26},
  publisher = {Chapman and Hall},
  address   = {London},
  year      = {1986}
}

@article{silverman1987bootstrap,
  author  = {Silverman, Bernard W. and Young, G. Alastair},
  title   = {The bootstrap: To smooth or not to smooth?},
  journal = {Biometrika},
  volume  = {74},
  number  = {3},
  pages   = {469--479},
  year    = {1987}
}

@article{benedetti2019parameterized,
  author  = {Benedetti, Marcello and Lloyd, Erika and Sack, Stefan and Fiorentini, Mattia},
  title   = {Parameterized quantum circuits as machine learning models},
  journal = {Quantum Science and Technology},
  volume  = {4},
  number  = {4},
  pages   = {043001},
  year    = {2019}
}

@article{preskill2018quantum,
  author  = {Preskill, John},
  title   = {Quantum Computing in the {NISQ} era and beyond},
  journal = {Quantum},
  volume  = {2},
  pages   = {79},
  year    = {2018}
}

@article{mottonen2005transformation,
  author  = {M{\"o}tt{\"o}nen, Mikko and Vartiainen, Juha J. and Bergholm, Ville and Salomaa, Martti M.},
  title   = {Transformation of quantum states using uniformly controlled rotations},
  journal = {Quantum Information and Computation},
  volume  = {5},
  number  = {6},
  pages   = {467--473},
  year    = {2005}
}

@article{plesch2011quantum,
  author  = {Plesch, Martin and Brukner, {\v{C}}aslav},
  title   = {Quantum-state preparation with universal gate decompositions},
  journal = {Physical Review A},
  volume  = {83},
  number  = {3},
  pages   = {032302},
  year    = {2011}
}

@article{barenco1997stabilization,
  author  = {Barenco, Adriano and Berthiaume, Andr{\'e} and Deutsch, David and Ekert, Artur and Jozsa, Richard and Macchiavello, Chiara},
  title   = {Stabilization of quantum computations by symmetrization},
  journal = {SIAM Journal on Computing},
  volume  = {26},
  number  = {5},
  pages   = {1541--1557},
  year    = {1997}
}

@article{buhrman2001quantum,
  author  = {Buhrman, Harry and Cleve, Richard and Watrous, John and de Wolf, Ronald},
  title   = {Quantum Fingerprinting},
  journal = {Physical Review Letters},
  volume  = {87},
  number  = {16},
  pages   = {167902},
  year    = {2001}
}

@article{mitarai2018quantum,
  author  = {Mitarai, Kosuke and Negoro, Makoto and Kitagawa, Masahiro and Fujii, Keisuke},
  title   = {Quantum circuit learning},
  journal = {Physical Review A},
  volume  = {98},
  number  = {3},
  pages   = {032309},
  year    = {2018}
}

@article{schuld2019evaluating,
  author  = {Schuld, Maria and Bergholm, Ville and Gogolin, Christian and Izaac, Josh and Killoran, Nathan},
  title   = {Evaluating analytic gradients on quantum hardware},
  journal = {Physical Review A},
  volume  = {99},
  number  = {3},
  pages   = {032331},
  year    = {2019}
}

@book{nielsen2010quantum,
  author    = {Nielsen, Michael A. and Chuang, Isaac L.},
  title     = {Quantum Computation and Quantum Information: 10th Anniversary Edition},
  publisher = {Cambridge University Press},
  address   = {Cambridge},
  year      = {2010}
}

@inproceedings{kingma2015adam,
  author    = {Kingma, Diederik P. and Ba, Jimmy},
  title     = {Adam: A Method for Stochastic Optimization},
  booktitle = {International Conference on Learning Representations},
  year      = {2015}
}

@article{elfwing2018sigmoid,
  author  = {Elfwing, Stefan and Uchibe, Eiji and Doya, Kenji},
  title   = {Sigmoid-weighted linear units for neural network function approximation in reinforcement learning},
  journal = {Neural Networks},
  volume  = {107},
  pages   = {3--11},
  year    = {2018}
}

@article{gretton2012kernel,
  author  = {Gretton, Arthur and Borgwardt, Karsten M. and Rasch, Malte J. and Sch{\"o}lkopf, Bernhard and Smola, Alexander},
  title   = {A Kernel Two-Sample Test},
  journal = {Journal of Machine Learning Research},
  volume  = {13},
  pages   = {723--773},
  year    = {2012}
}

@article{esteban2017real,
  author  = {Esteban, Crist{\'o}bal and Hyland, Stephanie L. and R{\"a}tsch, Gunnar},
  title   = {Real-valued (Medical) Time Series Generation with Recurrent Conditional {GANs}},
  journal = {arXiv preprint arXiv:1706.02633},
  year    = {2017}
}

@article{park2018data,
  author  = {Park, Noseong and Mohammadi, Mahmoud and Gorde, Kshitij and Jajodia, Sushil and Park, Hongkyu and Kim, Youngmin},
  title   = {Data Synthesis based on Generative Adversarial Networks},
  journal = {Proceedings of the VLDB Endowment},
  volume  = {11},
  number  = {10},
  pages   = {1071--1083},
  year    = {2018}
}

@article{breiman2001random,
  author  = {Breiman, Leo},
  title   = {Random Forests},
  journal = {Machine Learning},
  volume  = {45},
  number  = {1},
  pages   = {5--32},
  year    = {2001}
}

@article{cortes1995support,
  author  = {Cortes, Corinna and Vapnik, Vladimir},
  title   = {Support-vector networks},
  journal = {Machine Learning},
  volume  = {20},
  number  = {3},
  pages   = {273--297},
  year    = {1995}
}

@article{pedregosa2011scikit,
  author  = {Pedregosa, Fabian and Varoquaux, Ga{\"e}l and Gramfort, Alexandre and Michel, Vincent and Thirion, Bertrand and others},
  title   = {Scikit-learn: Machine Learning in {P}ython},
  journal = {Journal of Machine Learning Research},
  volume  = {12},
  pages   = {2825--2830},
  year    = {2011}
}

@inproceedings{paszke2019pytorch,
  author    = {Paszke, Adam and Gross, Sam and Massa, Francisco and Lerer, Adam and Bradbury, James and Chanan, Gregory and others},
  title     = {{PyTorch}: An Imperative Style, High-Performance Deep Learning Library},
  booktitle = {Advances in Neural Information Processing Systems},
  volume    = {32},
  pages     = {8024--8035},
  year      = {2019}
}

@article{bergholm2018pennylane,
  author  = {Bergholm, Ville and Izaac, Josh and Schuld, Maria and Gogolin, Christian and Killoran, Nathan and others},
  title   = {{PennyLane}: Automatic differentiation of hybrid quantum-classical computations},
  journal = {arXiv preprint arXiv:1811.04968},
  year    = {2018}
}

@article{ghadimi2013stochastic,
  author  = {Ghadimi, Saeed and Lan, Guanghui},
  title   = {Stochastic First- and Zeroth-Order Methods for Nonconvex Stochastic Programming},
  journal = {SIAM Journal on Optimization},
  volume  = {23},
  number  = {4},
  pages   = {2341--2368},
  year    = {2013}
}

@article{defossez2022simple,
  author  = {D{\'e}fossez, Alexandre and Bottou, L{\'e}on and Bach, Francis and Usunier, Nicolas},
  title   = {A Simple Convergence Proof of {Adam} and {Adagrad}},
  journal = {Transactions on Machine Learning Research},
  year    = {2022}
}

@article{hoeffding1963probability,
  author  = {Hoeffding, Wassily},
  title   = {Probability Inequalities for Sums of Bounded Random Variables},
  journal = {Journal of the American Statistical Association},
  volume  = {58},
  number  = {301},
  pages   = {13--30},
  year    = {1963}
}

@article{bao2025generating,
  author  = {Bao, Tiffany and Trousil, Kylie and Tran, Quang Duy and Di Troia, Fabio and Park, Younghee},
  title   = {Generating Synthetic Malware Samples Using Generative {AI}},
  journal = {IEEE Access},
  volume  = {13},
  pages   = {59725--59736},
  year    = {2025}
}

@inproceedings{camerota2024addressing,
  author    = {Camerota, Chiara and Pappone, Lorenzo and Pecorella, Tommaso and Esposito, Flavio},
  title     = {Addressing Data Security in {IoT}: Minimum Sample Size and Denoising Diffusion Models for Improved Malware Detection},
  booktitle = {Proceedings of the 20th International Conference on Network and Service Management (CNSM)},
  year      = {2024}
}

@inproceedings{paim2025maldatagen,
  author    = {Paim, Kayua Oleques and Nogueira, Angelo Gaspar Diniz and Kreutz, Diego and Cordeiro, Weverton and Mansilha, Rodrigo Brandao},
  title     = {{MalDataGen}: A Modular Framework for Synthetic Tabular Data Generation in Malware Detection},
  booktitle = {Proceedings of the Brazilian Symposium on Cybersecurity (SBSeg)},
  year      = {2025}
}

@inproceedings{tran2022word,
  author    = {Tran, Quang Duy and Di Troia, Fabio},
  title     = {Word Embeddings for Fake Malware Generation},
  booktitle = {Silicon Valley Cybersecurity Conference (SVCC 2022)},
  series    = {Communications in Computer and Information Science},
  volume    = {1683},
  pages     = {22--37},
  publisher = {Springer},
  address   = {Cham},
  year      = {2022},
  doi       = {10.1007/978-3-031-24049-2_2}
}

@article{dockhorn2023differentially,
  author  = {Dockhorn, Tim and Cao, Tianshi and Vahdat, Arash and Kreis, Karsten},
  title   = {Differentially Private Diffusion Models},
  journal = {Transactions on Machine Learning Research},
  year    = {2023}
}

@inproceedings{jordon2019pate,
  author    = {Jordon, James and Yoon, Jinsung and van der Schaar, Mihaela},
  title     = {{PATE-GAN}: Generating Synthetic Data with Differential Privacy Guarantees},
  booktitle = {International Conference on Learning Representations (ICLR)},
  year      = {2019}
}

@inproceedings{fang2022dpctgan,
  author    = {Fang, Mei Ling and Dhami, Devendra Singh and Kersting, Kristian},
  title     = {{DP-CTGAN}: Differentially Private Medical Data Generation Using {CTGANs}},
  booktitle = {Artificial Intelligence in Medicine (AIME 2022)},
  series    = {Lecture Notes in Computer Science},
  publisher = {Springer},
  address   = {Cham},
  year      = {2022},
  doi       = {10.1007/978-3-031-09342-5_17}
}

@article{herr2021anomaly,
  author  = {Herr, Daniel and Obert, Benjamin and Rosenkranz, Matthias},
  title   = {Anomaly Detection with Variational Quantum Generative Adversarial Networks},
  journal = {Quantum Science and Technology},
  volume  = {6},
  number  = {4},
  pages   = {045004},
  year    = {2021},
  doi     = {10.1088/2058-9565/ac0d4d}
}

@inproceedings{dwork2006calibrating,
  author    = {Dwork, Cynthia and McSherry, Frank and Nissim, Kobbi and Smith, Adam},
  title     = {Calibrating Noise to Sensitivity in Private Data Analysis},
  booktitle = {Theory of Cryptography Conference (TCC 2006)},
  series    = {Lecture Notes in Computer Science},
  volume    = {3876},
  pages     = {265--284},
  publisher = {Springer},
  address   = {Berlin, Heidelberg},
  year      = {2006},
  doi       = {10.1007/11681878_14}
}

@inproceedings{yeom2018privacy,
  author    = {Yeom, Samuel and Giacomelli, Irene and Fredrikson, Matt and Jha, Somesh},
  title     = {Privacy Risk in Machine Learning: Analyzing the Connection to Overfitting},
  booktitle = {IEEE 31st Computer Security Foundations Symposium (CSF)},
  pages     = {268--282},
  year      = {2018},
  doi       = {10.1109/CSF.2018.00027}
}

@article{wilcoxon1945individual,
  author  = {Wilcoxon, Frank},
  title   = {Individual Comparisons by Ranking Methods},
  journal = {Biometrics Bulletin},
  volume  = {1},
  number  = {6},
  pages   = {80--83},
  year    = {1945},
  doi     = {10.2307/3001968}
}

@article{nappa2015malicia,
  author  = {Nappa, Antonio and Rafique, M. Zubair and Caballero, Juan},
  title   = {The {MALICIA} dataset: identification and analysis of drive-by download operations},
  journal = {International Journal of Information Security},
  volume  = {14},
  number  = {1},
  pages   = {15--33},
  year    = {2015}
}

\end{document}